\pgfplotsset{compat=1.9}
\newtheorem{thm}{Theorem}
\newtheorem{lemma}{Lemma}
\newtheorem{fact}{Fact}
\newtheorem{defn}{Definition}
\newenvironment{proofsketch}{\par\noindent\textsl{Proof (sketch):}\kern1ex}%
{\hfil\hskip2em\penalty250\parfillskip=0pt\finalhyphendemerits=0$\qed$%
	\par\smallskip\par}
	\newcommand{\myparagraph}[1]{\noindent\textbf{#1}:\xspace}
\newcommand{\naive}{na\"ive\xspace}
{\list{}{\leftmargin=0.2in\rightmargin=0.2in}\item[] \em}%
{\endlist}
\newcounter{protocol}
\newcounter{note}[section]
\renewcommand{\thenote}{\thesection.\arabic{note}}
\newcommand{\ac}[1]{\refstepcounter{note}{\bf \textcolor{blue}{$\ll$AC~\thenote: {\sf #1}$\gg$}}}
\newcommand{\secref}[1]{\mbox{Sec.~\ref{#1}}\xspace}
\newcommand{\lineref}[1]{\mbox{line~\ref{#1}}\xspace}
\newcommand{\linesref}[2]{\mbox{lines~\ref{#1}--\ref{#2}}\xspace}
\newcommand{\figref}[1]{\mbox{Fig.~\ref{#1}}\xspace}
\newcommand{\figsref}[2]{\mbox{Figs.~\ref{#1}--\ref{#2}}\xspace}
\newcommand{\tblref}[1]{\mbox{Table~\ref{#1}}\xspace}
\newcommand{\appref}[1]{\mbox{App.~\ref{#1}}\xspace}
\newcommand{\propref}[1]{\mbox{Prop.~\ref{#1}}\xspace}
\newcommand{\lemmaref}[1]{\mbox{Lemma~\ref{#1}}\xspace}
\newcommand{\thmref}[1]{\mbox{Theorem~\ref{#1}}\xspace}
\newcommand{\kilo}{\ensuremath{\mathrm{K}}\xspace}
\newcommand{\mega}{\ensuremath{\mathrm{M}}\xspace}
\newcommand{\kilobytes}{\ensuremath{\mathrm{KB}}\xspace}
\newcommand{\megabytes}{\ensuremath{\mathrm{MB}}\xspace}
\newcommand{\gigabytes}{\ensuremath{\mathrm{GB}}\xspace}
\newcommand{\assign}{\ensuremath{:=}\xspace}
\newcommand{\prob}[1]{\ensuremath{\mathbb{P}\left [#1 \right]}\xspace}
\newcommand{\tpr}{\textsc{tpr}\xspace}
\newcommand{\tnr}{\textsc{tnr}\xspace}
\newcommand{\fpr}{\textsc{fpr}\xspace}
\newcommand{\fnr}{\textsc{fnr}\xspace}
\newcommand{\genericSetSize}{\ensuremath{n}\xspace}
\newcommand{\setSize}[1]{\ensuremath{\left |{#1} \right |}\xspace}
\newcommand{\getsr}{\ensuremath{\;\stackrel{\$}{:=}\;}}
\newcommand{\cset}[3]{\ensuremath{#1\{}{#2}\ensuremath{\;#1|} \ifmmode{\;}\fi {#3}\ensuremath{#1\}}\xspace}
\newcommand{\cprob}[3]{\ensuremath{\mathbb{P}#1(}{#2}\ensuremath{\;#1|} \ifmmode{\;}\fi {#3}\ensuremath{#1)}\xspace}
\newcommand{\cexpv}[3]{\ensuremath{\mathbb{E}#1(}{#2}\ensuremath{\;#1|} \ifmmode{\;}\fi {#3}\ensuremath{#1)}\xspace}
\newcommand{\floor}[1]{\lfloor #1 \rfloor}
\newcommand{\ceil}[1]{\left\lceil #1 \right\rceil}
\DeclareRobustCommand{\stirling}{\genfrac\{\}{0pt}{}}
\newcommand{\polynomial}[1]{\ensuremath{#1(x)}\xspace}
\newcommand{\keygen}{\ensuremath{\mathsf{Gen}}\xspace}
\newcommand{\encrypt}[1]{\ensuremath{\mathsf{Enc}(#1)}\xspace}
\newcommand{\decrypt}[1]{\ensuremath{\mathsf{Dec}(#1)}\xspace}
\NewDocumentCommand{\plaintext}{ g g }{\ensuremath{m\IfNoValueF{#1}{\IfNoValueTF{#2}{_{#1}}{_{{#1},{#2}}}}}\xspace}
\newcommand{\secParam}{\ensuremath{\lambda}\xspace}
\newcommand{\privKey}{\ensuremath{\mathit{sk}}\xspace}
\newcommand{\pubKey}{\ensuremath{\mathit{pk}}\xspace}
\newcommand{\encAdd}{\ensuremath{+_{\pubKey}}\xspace}
\newcommand{\encSub}{\ensuremath{-_{\pubKey}}\xspace}
\newcommand{\encMult}{\ensuremath{\times_{\pubKey}}\xspace}
\NewDocumentCommand{\encSum}{ g g g }{\ensuremath{\displaystyle\operatorname*{\textstyle\sum_{\mathrlap{#1}}}\IfNoValueF{#2}{_{#2}}\IfNoValueF{#3}{^{#3}}\hphantom{_{#1}}}\xspace}
\newcommand{\genericPRF}{\ensuremath{\phi}\xspace}
\newcommand{\AHESchemeDefn}{\ensuremath{\mathsf{AHE} = (\keygen, \mathsf{Enc}, \mathsf{Dec})}\xspace}
\newcommand{\fieldOrder}{\ensuremath{p}\xspace}
\newcommand{\finiteFieldVec}[1]{\ensuremath{{\mathbb F_\fieldOrder^#1}\xspace}}
\def\finiteField{\ensuremath{{\mathbb F_{\fieldOrder}}}\xspace}
\def\IntegersPositive{{\mathbb Z^{+}}}
\def\NaturalNumbers{{\mathbb N}}       
\newcommand{\alice}{Alice\xspace}
\newcommand{\bob}{Bob\xspace}
\newcommand{\maxHamDist}{\ensuremath{d_\mathsf{Max}}\xspace}
\newcommand{\intAwarePSIFunc}{\ensuremath{\mathcal{F}_{\intDistThreshold}^{\lowercase{i}{\text -}\mathit{PSI}}}\xspace}
\newcommand{\tHamQueryFunc}{\ensuremath{\mathcal{F}_{\genericVecLen, \hamDistThreshold}^{\lowercase{t}{\text -}\mathit{HQ}}}\xspace}
\newcommand{\hamAwarePSIFunc}{\ensuremath{\mathcal{F}_{\ell, \hamDistThreshold}^{\lowercase{h}{\text -}\mathit{PSI}}}\xspace}
\newcommand{\voleFunc}{\ensuremath{\mathcal{F}^\fieldOrder_{\mathit{vole}}}\xspace}
\newcommand{\oleFunc}{\ensuremath{\mathcal{F}^\fieldOrder_{\mathit{ole}}}\xspace}
\newcommand{\tHamQueryLite}{\ensuremath{t\mathsf{HamQueryLite}}\xspace}
\newcommand{\tHamQuery}{\ensuremath{t\mathsf{HamQuery}}\xspace}
\newcommand{\tHamQueryExp}{\ensuremath{t\mathsf{HamQueryExp}}\xspace}
\newcommand{\tHamQueryRestricted}{\ensuremath{t\mathsf{HamQueryRestricted}}\xspace}
\newcommand{\intAwarePSIProtocol}{\ensuremath{\mathsf{IntPSI}}\xspace}
\newcommand{\oneSidedSetRecon}{\ensuremath{\mathsf{OneSidedSetRecon}}\xspace}
\newcommand{\oneSidedSetReconExp}{\ensuremath{\mathsf{OneSidedSetReconExp}}\xspace}
\newcommand{\oneSidedSetReconBlind}{\ensuremath{\mathsf{OneSidedSetReconBlind}}\xspace}
\newcommand{\oneSidedSetReconMultiBlind}{\ensuremath{\mathsf{OneSidedSetReconMultiBlind}}\xspace}
\newcommand{\permAndPart}{\ensuremath{\mathsf{PermuteAndPartition}}\xspace}
\newcommand{\subSample}{\ensuremath{\mathsf{SubSample}}\xspace}
\newcommand{\hamCompute}{\ensuremath{\mathsf{HamCompute}}\xspace}
\newcommand{\hamAwarePSIProtocol}{\ensuremath{\mathsf{HamPSI}}\xspace}
\newcommand{\hamAwarePSIProtocolExp}{\ensuremath{\mathsf{HamPSISample}}\xspace}
\newcommand{\hamQuerySample}{\ensuremath{t\mathsf{HamQuerySample}}\xspace}
\newcommand{\HamContainQuery}{\ensuremath{\mathsf{HamContainQuery}}\xspace}
\newcommand{\indcpaAdversary}{\ensuremath{\mathcal{A}}\xspace}
\newcommand{\indcpaChallenger}{\ensuremath{\mathcal{C}}\xspace}
\newcommand{\indcpaBit}{\ensuremath{b}\xspace}
\newcommand{\indcpaAdversaryBit}{\ensuremath{b'}\xspace}
\newcommand{\indcpaChallengePolyZero}{\ensuremath{\polynomial{Q_0}}\xspace}
\newcommand{\indcpaChallengePolyOne}{\ensuremath{\polynomial{Q_1}}\xspace}
\newcommand{\indcpaChallengePolyBit}{\ensuremath{\polynomial{Q_b}}\xspace}
\newcommand{\indcpaChallengePolyEval}[1]{\ensuremath{Q_b(#1)}\xspace}
\newcommand{\indcpaFixedPoly}{\ensuremath{\polynomial{P_c}}\xspace}
\newcommand{\indcpaFixedPolyEval}[1]{\ensuremath{P_c(#1)}\xspace}
\newcommand{\indcpaOutputPolyEval}[1]{\ensuremath{\alicesRandomPolyEval{#1} \indcpaFixedPolyEval{#1} + \bobsRandomPolyEval{#1} \indcpaChallengePolyEval{#1}}}
\newcommand{\degreeOfGCD}{\ensuremath{d_{\mathsf{GCD}}}\xspace}
\newcommand{\gcdPoly}{\ensuremath{C(x)}\xspace}
\newcommand{\gcdPolyEval}[1]{\ensuremath{C(#1)}\xspace}
\newcommand{\indcpaChallengeSetZero}{\ensuremath{S_{Q0}}\xspace}
\newcommand{\indcpaChallengeSetOne}{\ensuremath{S_{Q1}}\xspace}
\newcommand{\indcpaChallengeSetBit}{\ensuremath{S_{Q\indcpaBit}}\xspace}
\newcommand{\indcpaFixedSet}{\ensuremath{S_{P}}\xspace}
\newcommand{\norm}[1]{\left\lVert#1\right\rVert}
\newcommand{\hamWeight}[1]{\ensuremath{\norm{#1}_{w}}\xspace}
\newcommand{\set}[1]{\ensuremath{#1}\xspace}
\newcommand{\setDefn}[1]{\ensuremath{\{x_\ptIdx\}_{\ptIdx = 1}^{#1}}}
\newcommand{\intSet}{\ensuremath{\set{S}}\xspace}
\newcommand{\dist}{\ensuremath{\delta}\xspace}
\newcommand{\elemInSet}[2]{\ensuremath{#1 \in \set{#2}}\xspace}
\newcommand{\setDiff}[2]{\ensuremath{#1 \setminus #2}\xspace}
\newcommand{\setDiffCard}[2]{\ensuremath{| #1 \setminus #2|}\xspace}
\newcommand{\setIntCard}[2]{\ensuremath{| #1 \cap #2|}\xspace}
\newcommand{\prfDomain}{\ensuremath{\genericPRF_i: \finiteField \rightarrow \finiteField}\xspace}
\newcommand{\mappingFn}{\ensuremath{M}}
\newcommand{\mapSet}{\ensuremath{\mappingFn_1, \ldots \mappingFn_{\genericVecLen}}\xspace}
\newcommand{\mapDomain}{\ensuremath{\mappingFn_\vecBitIdx: \{0,1\} \rightarrow \finiteField, ~\vecBitIdx \in [1, \genericVecLen]}\xspace}
\newcommand{\mapSetIdx}[1]{\ensuremath{\mappingFn_{#1}}\xspace}
\newcommand{\comp}[1]{\ensuremath{1 - #1}\xspace}
\newcommand{\alicesSet}{\ensuremath{\set{A}}\xspace}
\newcommand{\bobsSet}{\ensuremath{\set{B}}\xspace}
\newcommand{\tPSI}{\ensuremath{\lowercase{t}\mathsf{PSI}}\xspace}
\newcommand{\tPSICard}{\ensuremath{\lowercase{t}\mathsf{PSI}{\text -}\mathsf{CA}}\xspace}
\newcommand{\tPSIRec}{\ensuremath{\lowercase{t}\mathsf{PSI}{\text -}\mathsf{SR}}\xspace}
\newcommand{\tPSIOutputPoly}{\ensuremath{\alicesRandomPoly \alicesPoly + \bobsRandomPoly \bobsPoly}\xspace}
\newcommand{\alicesReducedPoly}{\ensuremath{\polyNotGCDRoots{P}{q}}\xspace}
\newcommand{\bobsReducedPoly}{\ensuremath{\polyNotGCDRoots{Q}{p}}\xspace}
\newcommand{\tPSIReducedOutputPoly}{\ensuremath{\alicesRandomPoly \alicesReducedPoly + \bobsRandomPoly \bobsReducedPoly}\xspace}
\newcommand{\tPSIOutputPolyEval}[1]{\ensuremath{\alicesRandomPolyEval{#1} \alicesPolyFromVecEval{#1} + \bobsRandomPolyEval{#1} \bobsPolyFromVecEval{#1}}\xspace}
\newcommand{\alicesPoly}{\ensuremath{\polynomial{P}}\xspace}
\newcommand{\bobsPoly}{\ensuremath{\polynomial{Q}}\xspace}
\newcommand{\rationalfn}[2]{\ensuremath{\frac{#1}{#2}}\xspace}
\newcommand{\alicesSetReconInput}{\ensuremath{S_A}\xspace}
\newcommand{\setOfBobsSetReconInputs}{\ensuremath{\{S_{B1}, \ldots, S_{B\genericSetSize}\}}\xspace}
\newcommand{\setOfAlicesRandomPolys}{\ensuremath{\{R_{11}(x), \ldots, R_{1\genericSetSize}(x)\}}\xspace}
\newcommand{\alicesRandomPolysIdx}[1]{\ensuremath{R_{1#1}(x)}\xspace}
\newcommand{\alicesRandomPolysIdxEval}[2]{\ensuremath{R_{1#1}(#2)}\xspace}
\newcommand{\setOfBobsRandomPolys}{\ensuremath{\{R_{21}(x), \ldots, R_{2\genericSetSize}(x)\}}\xspace}
\newcommand{\bobsRandomPolysIdx}[1]{\ensuremath{R_{2#1}(x)}\xspace}
\newcommand{\bobsRandomPolysIdxEval}[2]{\ensuremath{R_{2#1}(#2)}\xspace}
\newcommand{\setOfAlicesVoleOutputPolys}{\ensuremath{\{W_{A1}( x_{\ptIdx}) ,\ldots, W_{An}( x_{\ptIdx})\}}\xspace}
\newcommand{\alicesVoleOutputPolysIdxEval}[2]{\ensuremath{W_{A#1}(#2)}\xspace}
\newcommand{\blindingPolySym}{\ensuremath{U}}
\newcommand{\blindingPoly}{\ensuremath{\blindingPolySym(x)}\xspace}
\newcommand{\blindingPolyEval}[1]{\ensuremath{U(#1)}\xspace}
\newcommand{\alicesRandomPolySym}{\ensuremath{R_1}}
\newcommand{\alicesRandomPoly}{\ensuremath{\alicesRandomPolySym(x)}\xspace} 	
\newcommand{\bobsRandomPolySym}{\ensuremath{R_2}\xspace}		
\newcommand{\numeratorPoly}{\ensuremath{\mathsf{Num}(x)}\xspace}
\newcommand{\denominatorPoly}{\ensuremath{\mathsf{Den}(x)}\xspace}
\newcommand{\bobsRandomPoly}{\ensuremath{\bobsRandomPolySym(x)}\xspace}		
\newcommand{\alicesRandomPolyEval}[1]{\ensuremath{R_1(#1)}\xspace}
\newcommand{\bobsRandomPolyEval}[1]{\ensuremath{R_2(#1)}\xspace}
\newcommand{\alicesVoleOutputPolySym}{\ensuremath{W_A}}
\newcommand{\alicesVoleOutputPoly}{\ensuremath{\alicesVoleOutputPolySym(x)}\xspace}
\newcommand{\alicesVoleOutputPolyEval}[1]{\ensuremath{W_A(#1)}\xspace}
\newcommand{\bobsVoleOutputPolySym}{\ensuremath{W_B}\xspace} 
\newcommand{\bobsVoleOutputPoly}{\ensuremath{\bobsVoleOutputPolySym(x)}\xspace}
\newcommand{\bobsVoleOutputPolyEval}[1]{\ensuremath{W_B(#1)}\xspace}
\newcommand{\setOfPtsForInterpolation}{\ensuremath{V}\xspace}
\newcommand{\ptIdx}{\ensuremath{k}\xspace}
\newcommand{\setIdx}{\ensuremath{i}\xspace}
\newcommand{\genericPtX}{\ensuremath{x_\ptIdx}\xspace}
\newcommand{\genericPtY}{\ensuremath{y_\ptIdx}\xspace}
\newcommand{\genericPt}{\ensuremath{(\genericPtX, \genericPtY)}\xspace}
\newcommand{\vecBitIdx}{\ensuremath{m}\xspace}
\newcommand{\genericVecLen}{\ensuremath{\mathcal{\ell}}\xspace}
\newcommand{\hamDist}[2]{\ensuremath{\dist_{H}(#1, #2)}}
\newcommand{\prfKey}{\ensuremath{\kappa}\xspace}
\newcommand{\prfKeySet}{\ensuremath{\{\kappa_1, \ldots, \kappa_{\genericSetSize}\}}\xspace}
\newcommand{\prfKeyIdx}[1]{\ensuremath{\kappa_{#1}}\xspace}
\newcommand{\finiteFieldPRF}{\ensuremath{\phi}\xspace}
\newcommand{\allPossibleComb}[2]{\ensuremath{\binom{#1}{#2}}\xspace}
\newcommand{\hamDistThreshold}{\ensuremath{d_\mathsf{H}}\xspace}
\newcommand{\numPointsForInterpol}{\ensuremath{\genericVecLen + 2\hamDistThreshold + 1}\xspace}
\newcommand{\numPointsForInterpolPlus}{\ensuremath{\genericVecLen + \hamDistThreshold + 2}\xspace}
\newcommand{\numPointsForInterpolAfterBins}{\ensuremath{\numOfBinsRef + 2\hamDistThreshold + 1}\xspace}
\newcommand{\polyInField}[1]{\ensuremath{#1(x) \in \finiteField[x]}\xspace}
\newcommand{\polyField}{\ensuremath{\finiteField[x]}\xspace}
\newcommand{\degree}[1]{\ensuremath{\mathbf{deg}\mathopen{}\left(#1 \right)\mathclose{}}\xspace}		
\renewcommand{\gcd}[2]{\ensuremath{\mathbf{gcd}\mathopen{}\left(#1, #2 \right)\mathclose{}}\xspace}		
\newcommand{\polyFromSet}[1]{\ensuremath{\prod\limits_{r \in \set{#1}} (x - r)}\xspace}
\newcommand{\setOfXCoords}{\ensuremath{\set{X}}\xspace}
\newcommand{\setOfXCoordsDefn}{\ensuremath{\{x_\ptIdx\}_{\ptIdx = 1}^{\numPointsForInterpol}}\xspace}
\newcommand{\setOfXCoordsDefnPlus}{\ensuremath{\{x_\ptIdx\}_{\ptIdx = 1}^{\numPointsForInterpolPlus}}\xspace}
\newcommand{\setOfXCoordsDefnBins}{\ensuremath{\{x_\ptIdx\}_{\ptIdx = 1}^{\numPointsForInterpolAfterBins}}\xspace}
\newcommand{\polyNotGCDRoots}[2]{\ensuremath{#1_{{\scriptscriptstyle / #2}}(x)}\xspace}
\newcommand{\hamDistLessThanThreshold}[2]{\ensuremath{\hamDist{#1}{#2} \leq \hamDistThreshold}\xspace}
\newcommand{\hamDistGreaterThanThreshold}[2]{\ensuremath{\hamDist{#1}{#2} > \hamDistThreshold}\xspace}
\newcommand{\alicesSetFromVec}{\ensuremath{S_{\alicesVector}}\xspace}
\newcommand{\alicesSetFromVecDefn}{\ensuremath{ \alicesSetFromVec = \{\mappingFn_1(\alicesVector[1]), \ldots, 
\mappingFn_\genericVecLen(\alicesVector[\genericVecLen])\}}\xspace}
\newcommand{\bobsSetFromVec}{\ensuremath{S_{\bobsVector}}\xspace}
\newcommand{\bobsSetFromVecDefn}{\ensuremath{\bobsSetFromVec = \{\mappingFn_1(\bobsVector[1]), \ldots, 
	\mappingFn_\genericVecLen(\bobsVector[\genericVecLen])\}}\xspace}
\newcommand{\alicesVector}{\ensuremath{\vec{a}}\xspace}
\newcommand{\bobsVector}{\ensuremath{\vec{b}}\xspace}
\newcommand{\alicesVecIdx}[1]{\ensuremath{\vec{a}_{#1}}\xspace}
\newcommand{\bobsVecIdx}[1]{\ensuremath{\vec{b}_{#1}}\xspace}
\newcommand{\alicesPolyFromVecSym}{\ensuremath{P}\xspace}
\newcommand{\alicesPolyFromVec}{\ensuremath{\alicesPolyFromVecSym(x)}\xspace}
\newcommand{\alicesPolyFromVecEval}[1]{\ensuremath{P(#1)}\xspace}
\newcommand{\bobsPolyFromVecSym}{\ensuremath{Q}\xspace}
\newcommand{\bobsPolyFromVec}{\ensuremath{\bobsPolyFromVecSym(x)}\xspace}
\newcommand{\bobsPolyFromVecEval}[1]{\ensuremath{Q(#1)}\xspace}
\newcommand{\bobsPolyFromVecIdxEval}[2]{\ensuremath{Q_{#1}(#2)}\xspace}
\newcommand{\alicesSetFromVecBins}{\ensuremath{S'_{\alicesVector}}\xspace}
\newcommand{\bobsSetFromVecBins}{\ensuremath{S'_{\bobsVector}}\xspace}
\newcommand{\bobsSetFromVecBinsIdx}[1]{\ensuremath{S'_{\bobsVector_{#1}}}\xspace}
\newcommand{\keySet}{\ensuremath{\mathsf{KeySet}}\xspace}
\newcommand{\randPerm}{\ensuremath{\pi}\xspace}
\newcommand{\numOfBins}{\ensuremath{\frac{2\hamDistThreshold^2}{\fpr}}\xspace}
\newcommand{\numOfBinsRef}{\ensuremath{N_\mathsf{bins}}\xspace}
\newcommand{\parity}{\ensuremath{\mathbf{parity}}\xspace}
\newcommand{\alicesSetOfSubVecs}{\ensuremath{\vec{x_1}, \ldots, \vec{x}_{\numOfBinsRef}}\xspace}
\newcommand{\alicesSubVecIdx}[1]{\ensuremath{\vec{x}_{#1}}\xspace}
\newcommand{\bobsSubVecIdx}[1]{\ensuremath{\vec{y}_{#1}}\xspace}
\newcommand{\bobsSetOfSubVecs}{\ensuremath{\vec{y_1}, \ldots, \vec{y}_{\numOfBinsRef}}\xspace}
\newcommand{\alicesParityVec}{\ensuremath{\vec{X}}\xspace}
\newcommand{\bobsParityVec}{\ensuremath{\vec{Y}}\xspace}
\newcommand{\alicesPermutedVec}{\ensuremath{\alicesVector_\mathsf{perm}}\xspace}
\newcommand{\bobsPermutedVec}{\ensuremath{\bobsVector_\mathsf{perm}}\xspace}
\newcommand{\bobsPolySet}{\ensuremath{\{Q_1(x), \ldots, Q_\genericSetSize(x)\}\xspace}}
\newcommand{\bobsPolyIdx}[1]{\ensuremath{Q_{#1}(x)}\xspace}
\newcommand{\alicesInputVecSet}{\ensuremath{A}\xspace}
\newcommand{\alicesInputVecSetDefn}{\ensuremath{\alicesInputVecSet \assign \{\vec{a}_1, \ldots, \vec{a}_\genericSetSize\}}\xspace}
\newcommand{\bobsInputVecSet}{\ensuremath{B}\xspace}
\newcommand{\bobsInputVecSetDefn}{\ensuremath{\bobsInputVecSet \assign \{\vec{b}_1, \ldots, \vec{b}_\genericSetSize\}}\xspace}
\newcommand{\alicesVoleOutputPolyIdxEval}[2]{\ensuremath{W_{A#1}(#2)}\xspace}
\newcommand{\intDistThreshold}{\ensuremath{d_{\mathsf{int}}}\xspace}
\newcommand{\integerFromHigherOrderString}[1]{\ensuremath{\mathsf{High}(#1)\xspace}}
\newcommand{\integerFromLowerOrderString}[1]{\ensuremath{\mathsf{Low}(#1)\xspace}}
\newcommand{\numOfMaximalSubtriesinTree}[1]{\ensuremath{\phi({#1})}\xspace}
\newcommand{\trie}[2]{\ensuremath{\mathsf{T}({#1}, {#2})}\xspace}
\newcommand{\flooredLog}[1]{\ensuremath{\floor{\log #1}}\xspace}
\newcommand{\poweredfloorlog}[1]{\ensuremath{2^{\floor{\log #1}}\xspace}}
\newcommand{\mec}{{\textit{maximal enclosing complete}}\xspace}
\newcommand{\maxBitLength}{\ensuremath{\mathsf{MaxBitLen}}\xspace}
\newcommand{\Desc}[2]{\State \makebox[3em][l]{#1}#2}
\newcommand{\universe}{\ensuremath{\mathcal{U}}\xspace}
\newcommand{\integersInNeighborhood}{\ensuremath{a', |a' - a| \leq \intDistThreshold}\xspace}
\newcommand{\intDistanceLessThanThreshold}[2]{\ensuremath{|#1 - #2 | < \intDistThreshold}\xspace}
\newcommand{\alicesInt}{\ensuremath{a}\xspace}
\newcommand{\bobsInt}{\ensuremath{b}\xspace}
\newcommand{\alicesIthInt}{\ensuremath{a_i}\xspace}
\newcommand{\bobsJthInt}{\ensuremath{b_j}\xspace}
\newcommand{\augAlicesSet}{\ensuremath{\set{\hat{\alicesSet}}}\xspace}
\newcommand{\augBobsSet}{\ensuremath{\set{\hat{\bobsSet}}}\xspace}
\newcommand{\genericPrefLen}{\ensuremath{\ell_p}\xspace}
\newcommand{\integersInRange}{\ensuremath{(a - \intDistThreshold, a + \intDistThreshold)}\xspace}
\newcommand{\bigO}[1]{\ensuremath{\operatorname{O}\mathopen{}\left(#1\right)\mathclose{}}}
\newcommand{\bigTheta}[1]{\ensuremath{\operatorname{\Theta}\mathopen{}\left(#1\right)\mathclose{}}}
\newcommand{\neglfn}[1]{\ensuremath{\mathbf{negl}\mathopen{}\left(#1\right)\mathclose{}}\xspace}
\newcommand{\polyfn}{\ensuremath{\mathbf{poly}}\xspace}
\newcommand{\bigOtilde}[1]{\ensuremath{\operatorname{\widetilde{O}}\mathopen{}\left(#1\right)\mathclose{}}}
\newcommand{\hamPSICommComplexity}{\ensuremath{\bigO{\genericSetSize^2 \cdot \frac{\hamDistThreshold^2}{\fpr} \cdot \secParam}}\xspace}
\newcommand{\intAwarePSICommComplexity}{\ensuremath{\bigO{\genericSetSize \secParam \log \intDistThreshold}}\xspace}
\newcommand{\simm}{\ensuremath{\mathsf{Sim}}\xspace}
\newcommand{\alicesVoleOutputPolySim}{\ensuremath{\alicesVoleOutputPolySym^{\mathsf{Sim}}(x)}\xspace}
\newcommand{\alicesVoleOutputPolySimEval}[1]{\ensuremath{\alicesVoleOutputPolySym^{\mathsf{Sim}}(#1)}\xspace}
\newcommand{\bobsPolyFromVecSim}{\ensuremath{\bobsPolyFromVecSym^{\mathsf{Sim}}(x)}\xspace}
\newcommand{\bobsPolyFromVecSimEval}[1]{\ensuremath{\bobsPolyFromVecSym^{\mathsf{Sim}}(#1)}\xspace}
\newcommand{\bobsVoleOutputPolySim}{\ensuremath{\bobsVoleOutputPolySym^{\mathsf{Sim}}(x)}\xspace}
\newcommand{\bobsVoleOutputPolySimEval}[1]{\ensuremath{\bobsVoleOutputPolySym^{\mathsf{Sim}}(#1)}\xspace}
\newcommand{\tPSIOutputPolySimEval}[1]{\ensuremath{\alicesVoleOutputPolySimEval{#1} \assign \alicesRandomPolyEval{#1} \alicesPolyFromVecEval{#1} + \bobsRandomPolyEval{#1} \bobsPolyFromVecSimEval{#1}}}
\newcommand{\alicesVoleOutputPolyPtSetReal}{\ensuremath{V_{A}}\xspace}
\newcommand{\alicesVoleOutputPolyPtSetSim}{\ensuremath{V^{\mathsf{Sim}}_{A}}\xspace}
\newcommand{\bobsVoleOutputPolyPtSetReal}{\ensuremath{V_{B}}\xspace}
\newcommand{\bobsVoleOutputPolyPtSetSim}{\ensuremath{V^{\mathsf{Sim}}_{B}}\xspace}
\newcommand{\randomPtSetA}{\ensuremath{X_A}\xspace}
\newcommand{\randomPtSetB}{\ensuremath{P_B}\xspace}
\newcommand{\keySetSim}{\ensuremath{\keySet_{\mathsf{Sim}}}\xspace}
\begin{document}
\title{Distance-Aware Private Set Intersection} 

\author{
	{\rm Anrin Chakraborti}\\
	Duke University
	\and
	{\rm Giulia Fanti}\\
	Carnegie Mellon University
	\and
	{\rm Michael K.\ Reiter}\\
	Duke University
}
\maketitle

\maketitle

\begin{abstract}
  Private set intersection (PSI) allows two mutually untrusting parties to compute an intersection of their sets, without revealing information about items that are not in the intersection. This work introduces a PSI variant called {\em distance-aware} PSI (DA-PSI) for sets whose elements lie in a metric space. DA-PSI returns pairs of items that are within a specified distance threshold of each other. This paper puts forward DA-PSI constructions for two metric spaces: (i) Minkowski distance of order 1 over the set of integers (i.e., for integers $a$ and $b$, their distance is $|a-b|$); and (ii) Hamming distance  over the set of binary strings of length $\ell$.  In the Minkowski DA-PSI protocol, the communication complexity scales logarithmically in the distance threshold and linearly in the set size.  In the Hamming DA-PSI protocol, the communication volume scales quadratically in the distance threshold and is independent of the dimensionality of string length $\ell$.  Experimental results with real applications confirm that DA-PSI provides more effective matching at lower cost than \naive solutions.
\end{abstract}

\section{Introduction}

Private set intersection (PSI) is a widely-used multiparty cryptographic protocol, with  
applications across domains including contact discovery and tracing, private profile matching, privacy-preserving genomics, and collaborative learning. PSI protocols are used to compute the intersection (or common elements) of two or more sets held by mutually-untrusting parties. Critically, the parties learn no information about the elements that are \emph{not} in the set intersection. 

There is a long line of work on communication-efficient PSI protocols, with variants including different adversarial models~\cite{kacsmar2020dpPSI,groce2019CheaperPS,pinkas2020PSIFP}, threshold parameters~\cite{ghosh2019thresholdPSI,badrinarayanan2021thresholdPSI} and compute capabilities~\cite{ion2019GoogleJoin}. 
However, these solutions are designed to return only \emph{exact} matches. 
That is, an element appears in the intersection if and only if it matches (exactly) an element in each of the other parties' sets.

When exact matches may be rare, parties may want to privately compute \emph{approximate} matches. For instance, given sets of points in Euclidean space, the intersection may contain all pairs within a certain Euclidean distance of each other. This notion has applications in domains where replacing an exact-matched set intersection with a distance-based intersection yields more effective systems:

\begin{itemize}[nosep,leftmargin=1em,labelwidth=*,align=left]
	\item \textbf{Private collaborative blacklisting} enables mutually-untrusting parties to identify malicious network traffic and coordinated attacks. Typically, an intersection is computed privately over sets containing network identifiers, e.g., source IPs observed~\cite{melis2019collabblacklist}. However, botnets usually span multiple (often contiguous) subnets~\cite{west2010spam,collins2007uncleanliness}, 	and it is useful to compare ranges of IP addresses and detect overlaps.
	
	\item \textbf{Biometric identification} systems 
	leverage Hamming distance/edit distance \cite{wang2015editDist,kambs2017genomehamming, daugman2009iris} and need to account for inexact/fuzzy matches due variations in sampling technologies. Fuzzy PSI functionalities have been used before in privacy-preserving biometric identification systems \cite{uzun2021fuzzy}.
		
	\item \textbf{Credential stuffing identification} systems use PSI-like functionalities to detect password reuse across websites without revealing sensitive user information~\cite{wang2019reuse,wang2020stuffing}. These protocols consider exact password matches. However, it is useful to expand this idea for inexact/similar password matches, e.g., edit distance matches.

\end{itemize}

\myparagraph{Distance-Aware Private Set Intersection (DA-PSI)}
In this paper we initiate the study of \textit{distance-aware} private set intersection (DA-PSI). A DA-PSI protocol defined over a metric space allows two parties to compute an intersection of their respective sets containing all pairs of items that are within a predefined threshold distance in the metric space. Specifically, 
consider a metric space $(\universe, \delta)$ with metric $\delta$. Let the parties hold sets, \alicesSet and \bobsSet with \genericSetSize items each, where each item is a length-\genericVecLen vector drawn from the space.  The problem definition specifies a distance threshold $d$ and requires the protocol to return $S \subseteq \alicesSet \times \bobsSet$ where $(\vec{a}, \vec{b}) \in S \Leftrightarrow \delta(\vec{a}, \vec{b}) \leq d$. A party learns no information about items in the counterparty's set that are not close to (within threshold of) one of its own elements.


Traditional PSI tools are not optimized for DA-PSI. A \naive application will need to check for all $\vec{a} \in \alicesSet$ and for all $\vec{b} \in \universe$ such that $\delta(\vec{a}, \vec{b}) \leq d$ if $\vec{b} \in \bobsSet$. This is problematic since in many cases the 
search space is exponentially large. For example, Hamming distance with a distance threshold $d$ will require searching over the Hamming ball of radius $d$ around each $\vec{a} \in \alicesSet$. There are over $\binom{\genericVecLen}{d} = \Omega((\frac{\genericVecLen}{d})^{d})$ vectors around $\vec{a}$ in this Hamming ball. Thus, the communication cost of this protocol scales exponentially with the threshold and is impractical. This work poses the following question: \textit{Can we design DA-PSI protocols where communication and compute costs scale polynomially in the distance threshold?}



We answer this question affirmatively by putting forward constructions for two important metric spaces: i) Hamming distance  over the set $\{0,1\}^\ell$ for some fixed $\ell \in \mathbb Z^{+}$, and ii)
Minkowski distance of order 1 over the set of integers (i.e., for integers $a$ and $b$, their distance is $|a-b|$). 
In the following, we discuss the intuitions behind these protocols.

\myparagraph{Hamming Distance-Aware PSI}
Hamming distance is a good starting point since several other distances can be computed or approximated by Hamming distance \cite{raginsky2009kernelLSH,datar2004LSH,aristides1999LSH}. We provide a construction building on the idea of sub-sampling each of a user's input vectors and mapping it to a unique set of sub-vectors such that the cardinality of the set difference between the sets corresponding to two vectors is exactly equal to the Hamming distance between the original vectors.


Our construction leverages additively homomorphic encryption and vector oblivious linear evaluation (VOLE). A key building block in the protocol is a novel sub-sampling mechanism which trades off accuracy (by allowing some false-positives) for better communication complexity: the communication cost scales polynomially in the distance threshold $d$ and is \textit{independent of the vector length \genericVecLen}. Typically, for applications relying on Hamming distance comparisons, $d \ll \genericVecLen$ \cite{osadchy2010scifi,manku2007webcrawl}, and thus the reduced set sizes after sub-sampling concretely improves communication costs over existing work \cite{huang2006hamming, osadchy2010scifi} . To compute the set differences, we propose a modified (and significantly simpler) version of the private set reconciliation protocol due to \citet{ghosh2019thresholdPSI}. 



\myparagraph{Integer distance-Aware PSI}
We propose a DA-PSI protocol for Minkowski distance of first order over integers, loosely termed as the integer distance-aware PSI protocol. 
The communication cost scales 
linearly in the set size and logarithmically in the distance threshold; this is optimal with regards to the set sizes since linear communication is both necessary and sufficient for exact PSI \cite{kalyanasundaram1992probabilistic,ghosh2019thresholdPSI}. The key observation behind the protocol is that integers in a range $(a - d, a + d)$, where $d$ is a specified distance threshold,  can be succinctly represented by a collection of bit-strings corresponding to their binary representations. The total number of strings required is sublinear in $d$ since multiple integers within a sequence will share prefixes, and the same common prefix will represent multiple consecutive integers. We design an algorithm to augment the inputs sets with $\bigO{\log d}$ strings representing all integers in $(a-d, a+d)$. This mechanism is agnostic to the underlying cryptographic tools since any state-of-the-art PSI protocol can be augmented to provide an integer DA-PSI protocol. 




\myparagraph{Evaluation}
We have implemented both protocols and benchmarked them on a public cloud. As an application of the integer distance-aware PSI, we have deployed it for collaborative blacklisting of IPs seen by real-world honeypots; for our parameter settings, the distance-aware PSI almost doubles the number of identified malicious IPs. For computing this intersection over sets containing roughly 25K IP addresses collected across all the honeypots, our protocol only requires 64 MB of communication and 1.5 seconds. 

We have implemented our Hamming DA-PSI constructions. Micro-benchmarks show that it imposes $2$-$400\times$ less communication than
a generic garbled-circuit solution for distance thresholds up to $30$. As an application, we have evaluated our protocol for the task of privately comparing vectors derived from iris images, and for a distance threshold sufficient to retrieve all of the matches in our dataset, it achieves $2.5\times$ lower communication volume (with a false positive rate $\le 10\%$ and no false negatives) than a generic secure 2PC baseline. When compared with the state-of-the-art Hamming containment query protocol by \citet{uzun2021fuzzy}, our protocol features $33$-$63\%$ less communication and $33\%$ less computation. 

\section{Related Work}
\label{sec:related}
\begin{table}[]
\centering
{\footnotesize
\begin{tabular}{@{}l@{\hspace{0.55em}}c@{\hspace{0.25em}}c@{\hspace{0.25em}}c@{\hspace{0.25em}}c@{\hspace{0.5em}}c@{\hspace{0.5em}}c@{}}
\toprule
\multicolumn{7}{c}{\textbf{Hamming DA-PSI}} \\
& \multirow{2}{*}{Comm} & \multicolumn{3}{c}{Computation} & \multirow{2}{*}{Dep} & \multirow{2}{*}{\fpr, \fnr} \\
& & \alice & \bob & Offline & & \\ 
\cmidrule{3-5}
\naive & \bigO{\genericSetSize}  & \bigO{\genericSetSize \binom{\genericVecLen}{d}} & \bigO{\genericSetSize} & \bigO{1}  & -- & -- , --  \\

\parbox[c]{4em}{\citet{osadchy2010scifi}} & \bigO{\genericSetSize^2 \genericVecLen \secParam}  & 
\bigO{\genericSetSize^2} &  \bigO{\genericSetSize^2} & \bigO{1}  &  \textsc{ahe} & -- , --         \\[8pt]

\parbox[c]{4em}{\citet{huang2011garbled}} & \bigO{\genericSetSize^2 \genericVecLen \secParam} & 
\bigO{\genericSetSize^2} & \bigO{\genericSetSize^2} & \bigO{1}  &  \textsc{ot} & -- , --         \\[8pt]

\multirow{2}{*}{\parbox[c]{4em}{\citet{uzun2021fuzzy}}}      & \multirow{2}{*}{\bigO{\frac{\genericSetSize^2 T}{mB} \lambda}} & \multirow{2}{*}{\bigO{\!\binom{T}{t}\!\left(\frac{\genericSetSize ma}{T}\right)\!}} &  \multirow{2}{*}{\bigO{\frac{\genericSetSize^2 T}{m}}}  &  \multirow{2}{*}{\bigO{\frac{\genericSetSize^3 T^2}{ma}}}  & \multirow{2}{*}{\textsc{fhe}} & $0\!\!<\!\!\fpr\!\!<\!\!1$           \\
&  & & & &  & $0\!\!<\!\!\fnr\!\!<\!\!1$ \\[8pt]

\multirow{2}{*}{\parbox[c]{4em}{\hamAwarePSIProtocol (\secref{sec:hammingawarePSI})}} & \multirow{2}{*}{\bigO{\genericSetSize^2 d^2 \secParam}}  & \multirow{2}{*}{\bigO{\genericSetSize^2}} &  \multirow{2}{*}{\bigO{\genericSetSize^2}} & \multirow{2}{*}{\bigO{1}}  & \textsc{ahe} & $0\!\!<\!\!\fpr\!\!<\!\!1$         \\
&  & & & &  \textsc{ole} & -- \\[8pt]



\multicolumn{7}{c}{\textbf{Integer DA-PSI}} \\

\naive & \bigO{\genericSetSize \secParam d}  & \bigO{\genericSetSize} &  \bigO{\genericSetSize} & \bigO{1}  &  -- & -- , --         \\[8pt]

\parbox[c]{4em}{\intAwarePSIProtocol (\secref{sec:intergerAwarePSI})} & \bigO{\genericSetSize\secParam \log{d}}  & \bigO{\genericSetSize} &  \bigO{\genericSetSize} & \bigO{1}  &  -- & -- , --         \\ \bottomrule

\end{tabular}}
\vspace{-10pt}
\caption{\small Asymptotic performance for our protocols in comparison to existing work. 
\genericSetSize:  set size; $d$: distance threshold; \genericVecLen: length of vectors; $T, t$: subsampling parameters, $T = \bigO{\genericVecLen}$; $m,a, B$: FHE parameters, $m \gg T$. $\fpr, \fnr \in (0,1)$ are the false-positive and false-negative rates of the schemes; "--" indicates that the false positive or negative rate is negligible in \secParam. The schemes without dependencies are agnostic to the underlying primitives. The comm. cost of all existing Hamming DA-PSI protocol depends on the length of the  vectors \genericVecLen. The comm. cost of \hamAwarePSIProtocol is independent of \genericVecLen.  
\label{tab:comparison}}
\end{table}

Private set intersection is well-studied (e.g.,~\cite{freedman2004PSI,Kissner2005setops, chase2020PSIOPRF,pinkas2015Hashing,pinkas2018PSIOT,meadows86PSIDH,de2010:mal-psi,pinkas2019spotLight}). 
We refer to these for details on general PSI and focus on other distance-aware/fuzzy PSI primitives here.

\myparagraph{Private Hamming Distance Computation}
\tblref{tab:comparison} compares our Hamming DA-PSI constructions with existing work on privately computing Hamming distance. 
\citet{osadchy2010scifi} built a protocol using additively homomorphic encryption which enables a party to check when her vector is within a threshold Hamming distance of any of the vectors in a set held by the other party. A similar functionality is implemented by \citet{huang2006hamming} using garbled circuits. For both protocols, the communication cost scales linearly in the vector sizes. In contrast, the cost of our Hamming distance aware protocol scales sublinearly in the vector size. 

\citet{uzun2021fuzzy} propose a protocol for Hamming distance comparisons over vectors derived from biometric identifiers. The protocol reduces the input vectors to sets of sub-vectors after a sub-sampling process. The sub-sampling protocol ensures that when two vectors are close, their corresponding sets have a certain number of matching elements. With these sets as inputs, the protocol implements a $t$-out-of-$T$ matching protocol using fully homomorphic encryption (FHE), and leverages the ability of the FHE scheme to pack multiple ciphertexts using SIMD-style operations. In contrast,  our constructions are based on computationally less-expensive primitives, namely additively homomorphic encryption (AHE) and oblivious linear evaluations (OLE). 

\section{Security Definitions \& Background}
\label{sec:definitions}

\myparagraph{Notation}
\finiteField is a finite of field of prime order where $\fieldOrder$ is a \bigO{\polyfn(\secParam)}-bit prime, and \secParam is a security parameter. \neglfn{\cdot} is a function that is negligible in the input parameter; e.g., \neglfn{\secParam} = \bigO{2^{-\secParam}}. \polyInField{P} is a polynomial with coefficients drawn from \finiteField. The degree of polynomial \polynomial{P} is represented by  \degree{P(x)}. The greatest common divisor of two (or more) polynomials is represented by \gcd{P(x)}{Q(x)}. For polynomials $P(x)$ and $Q(x)$, $\polyNotGCDRoots{P}{q} \assign \frac{P(x)}{\gcd{P(x)}{Q(x)}}$.  

\myparagraph{Rational Function}
A rational function $r = \rationalfn{P(x)}{Q(x)}$ has degree at most equal to the degree of the numerator + degree of the denominator. Let $V = \{\genericPt\}_{\ptIdx = 1}^{2t}$ be a set of points in \finiteField. Then there exists a rational function with numerator and denominator in \polyField interpolating these points \cite{larkin1967some}. 


\myparagraph{Parties}
We assume that two \textit{semi-honest} (a.k.a.\ honest-but-curious) mutually untrusting parties \alice and \bob run the protocols. The parties may learn information from the intermediate results but do not deviate from the protocol. 

\myparagraph{Distance-Aware Private Set Intersection}
In this work, we are concerned with \textit{distance-aware private set intersection} protocols, which we define here. 
Parties \alice and \bob are assumed to each store a set $\alicesSet=\{a_1,\ldots,a_n\}$ and $\bobsSet=\{b_1,\ldots, b_n\}$, respectively, where the $a_i$'s and $b_i$'s are drawn from some universe $\universe$, and $\dist: \universe \times \universe \to \mathbb R$ denotes a distance metric defined over $\universe$.
A distance-aware PSI protocol over metric space  $(\universe,\dist)$ with threshold $d$ and input sets \alicesSet, \bobsSet returns a set $S \subseteq A \times B$ such that 
$S \triangleq \{(a,b) ~:~ a\in A,~b\in B,~\dist(a,b) \leq d\}$. We require this protocol to satisfy:

\begin{enumerate}[nosep,leftmargin=1.6em,labelwidth=*,align=left]

\item {\bf Correctness:} For any $(a, b) \in \alicesSet \times
  \bobsSet$,
  \begin{itemize}[nosep,leftmargin=1em,labelwidth=*,align=left]
    \item If $\dist(a, b) \leq d$, then $(a, b) \in S$
      with probability $\ge \tpr$.
    \item If $\dist(a,b) > d$, then $(a,b)
      \not\in S$ with probability $\ge \tnr$.
  \end{itemize}
\item {\bf Security:} \alice learns only $S$ and the cardinality of
  \bobsSet, and \bob learns only $S$ and the cardinality of \alicesSet.
\end{enumerate}

This definition allows for arbitrary false-positive and false-negative rates (i.e., $\fpr = 1-\tnr$ and $\fnr = 1-\tpr$, respectively).  This allows faster 
protocols (see \secref{sec:hammingawarePSI}) and accommodates protocols approximating one distance metric via another, e.g., with locality-sensitive hashing. 



%

%
%

\subsection{Background}
\label{sec:background}

\begin{figure}
	\footnotesize

	\begin{mdframed}
        \underline{$\oleFunc$: Ideal Functionality for Oblivious Linear Evaluation (OLE):}

		\smallskip\noindent
		\textbf{Parameters:~} Parties \alice and \bob, and finite field \finiteField from which inputs are drawn.  
		
		\smallskip\noindent
		\textbf{Inputs:~} \alice has input $x \in \mathbb{F} $ and \bob has as input a pair $(u, v) \in \finiteField$.

		\smallskip\noindent
		\textbf{Output:~} \alice learns $z = u x + v$. \bob learns $\bot$.

    \medskip \underline{$\voleFunc$: Ideal Functionality for Vector OLE (VOLE):}
    
    	\smallskip\noindent
		\textbf{Parameters:~} Parties \alice and \bob, and finite field \finiteField from which inputs are drawn.  
		
		\smallskip
		\noindent
		\textbf{Inputs:~} \alice has input $x \in \mathbb{F} $ and \bob has as input a pair of vectors $(\vec{u}, \vec{v}) \in \finiteFieldVec{\genericVecLen} \times \finiteFieldVec{\genericVecLen}$.

		\smallskip
		\noindent
		\textbf{Output:~} \alice learns $\vec{z} = \vec{u} x + \vec{v}$. \bob learns $\bot$.
\end{mdframed}
\vspace{-15pt}
\caption{\small Ideal functionalities for oblivious linear evaluation (OLE), and vector oblivious linear evaluation (VOLE) \label{fig:ole_func}}
\end{figure}

\noindent\textbf{Oblivious Linear Evaluation (OLE)} is a two-party cryptographic primitive wherein \alice inputs $x \in \finiteField$; \bob inputs $u, v \in \finiteField$; and \alice obtains $ux + v$ \textit{without learning $u$ and $v$}.

\noindent\textbf{Vector Oblivious Linear Evaluation (VOLE)} is an extension of the OLE functionality, where \bob's input is a pair of vectors, and \alice learns a linear combination of the vectors. \figref{fig:ole_func} describes the VOLE functionality. The state-of-the-art VOLE protocol \cite{weng2021vole} is based on the learning parity with noise (LPN) assumption. The communication complexity of the protocol is linear in the vector length \genericVecLen. Further technical details can be found in \cite{weng2021vole}. 




%


\noindent\textbf{Threshold Set Intersection (a.k.a., $t$-out-$T$ matching)} is a variant of the PSI problem where the intersection of two (or more) sets is revealed to the parties iff the number of items in the intersection are above a certain predefined threshold. More formally, given two sets \alicesSet and \bobsSet of size \genericSetSize, and a threshold $t$, the protocol outputs \intSet such that $\intSet \triangleq 	\alicesSet \cap \bobsSet$ iff $| \alicesSet \cap \bobsSet | \geq n - t$. Otherwise, the protocol outputs $\bot$.


%

The state-of-the-art threshold PSI protocol (henceforth referred to a \tPSI) is due to \citet{ghosh2019thresholdPSI}. The main observation underlying the protocol is that given \setDiff{\alicesSet}{\bobsSet}, the party holding \alicesSet can obtain $\alicesSet \cap \bobsSet = \alicesSet \setminus (\setDiff{\alicesSet}{\bobsSet})$. Thus, it suffices to build a threshold set reconciliation protocol where $\setDiff{\alicesSet}{\bobsSet}$ (respectively, \setDiff{\bobsSet}{\alicesSet}) is revealed to the parties iff $\setDiffCard{\alicesSet}{\bobsSet} \leq t$. The protocol is inspired in part by the set reconciliation protocol due to \citet{minsky2003setreconciliation}. The idea behind this protocol is as follows: \alice and \bob encode the items of their corresponding sets, \alicesSet and \bobsSet in roots of polynomials \alicesPoly and \bobsPoly, respectively. If $\setSize{\alicesSet \setminus \bobsSet} \leq t$, then 
$\degree{\gcd{\alicesPoly}{\bobsPoly}} \geq \genericSetSize - t$, and so $r(x) =\rationalfn{\alicesPoly}{\bobsPoly}$ is a rational function of degree at most $2t$ (after cancellation of common roots in the numerator and denominator). $r(x)$ can be uniquely interpolated with $2t + 1$ evaluation points.  The denominator of $r(x)$ gives $\setDiff{\alicesSet}{\bobsSet}$.




\tPSI builds on this idea and tweaks the protocol to ensure that the elements in $\bobsSet \setminus \alicesSet$ are never revealed to \alice.  \alice and \bob evaluate a polynomial \tPSIOutputPoly at $3t + 1$ points, where \alicesRandomPoly is a degree-$t$ random polynomial contributed by \bob and \bobsRandomPoly is a random degree-$t$ polynomial contributed by \bob . \alice (and \bob) then compute the values of the rational function $r(x) = \frac{\tPSIOutputPoly}{\alicesPoly}$ at the aforementioned $3t+1$ points. Clearly, if $\setDiffCard{\alicesSet}{\bobsSet} \leq t$, then $r(x)$ has a numerator of degree $\leq 2t$ and a denominator of degree $\leq t$ after cancellation of the common roots in \gcd{\alicesPoly}{\bobsPoly} 
and \alicesPoly. Since $r(x)$ is a rational functions of degree $\leq 3t$, it can be uniquely interpolated with the $3t+1$ evaluation points. 

The security of the scheme relies on showing that the numerator of $r(x)$ after cancellation is a uniformly random polynomial. Specifically, let \tPSIReducedOutputPoly be the numerator after canceling common roots in $r$. The following well-known result due to \citet{Kissner2005setops} shows that this polynomial is uniformly random.


\begin{lemma}[\cite{Kissner2005setops}]
	\label{lemma:kissener_uniformly_random}
	Given two polynomials $\alicesPoly, \bobsPoly \in \polyField$ with $\degree{\alicesPoly} = \degree{\bobsPoly} \leq D_p$ such that \gcd{\alicesPoly}{\bobsPoly} = 1, and two uniformly random polynomials, \alicesRandomPoly, \bobsRandomPoly of degree $D_r \geq D_p$, the polynomial \tPSIOutputPoly is a uniformly random polynomial of degree $\leq D_r + D_p$.
\end{lemma}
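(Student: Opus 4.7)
The plan is to view $\phi: (R_1, R_2) \mapsto R_1 P + R_2 Q$ as an $\mathbb{F}$-linear map between finite-dimensional vector spaces, show it is surjective, and then conclude that the pushforward of the uniform distribution on the domain is uniform on the image. Concretely, set $d = \degree{P(x)} = \degree{Q(x)}$. The domain is $V = \{(R_1, R_2) : \deg R_i \leq D_r\}$, of $\mathbb{F}$-dimension $2(D_r+1)$, and the target is $W = \{T \in \polyField : \deg T \leq D_r + d\}$, of dimension $D_r + d + 1$. Linearity of $\phi$ is immediate, so the task reduces to establishing surjectivity.

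Next I would characterise $\ker \phi$ explicitly. If $R_1 P + R_2 Q = 0$, then $R_1 P = -R_2 Q$; combined with $\gcd{\alicesPoly}{\bobsPoly} = 1$, unique factorisation in $\polyField$ forces $Q(x) \mid R_1(x)$ and $P(x) \mid R_2(x)$. Writing $R_1 = Q \cdot S$ and $R_2 = -P \cdot S$ for a single $S \in \polyField$, both degree constraints $\deg R_i \leq D_r$ collapse, using $\deg P = \deg Q = d$, to the single constraint $\deg S \leq D_r - d$, which is well-defined since $D_r \geq D_p \geq d$. Hence $\dim \ker \phi = D_r - d + 1$.

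Applying rank–nullity, $\dim \operatorname{im} \phi = 2(D_r+1) - (D_r - d + 1) = D_r + d + 1 = \dim W$, so $\phi$ is surjective. For any surjective $\mathbb{F}$-linear map between finite-dimensional $\mathbb{F}$-vector spaces, every fibre has exactly $|\ker \phi|$ elements, so the image of the uniform distribution on $V$ is the uniform distribution on $W$. Since $W$ is precisely the set of polynomials of degree $\leq D_r + d \leq D_r + D_p$, this gives the conclusion.

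The main obstacle will be the degree bookkeeping in the kernel analysis: showing that the two constraints $\deg(QS) \leq D_r$ and $\deg(PS) \leq D_r$ really do yield a single bound $\deg S \leq D_r - d$ (this is where the hypothesis $\deg P = \deg Q$ is essential), and that the resulting dimension count matches $\dim W$ exactly rather than being strictly smaller. The coprimality hypothesis $\gcd{\alicesPoly}{\bobsPoly} = 1$ is what permits the extraction of a shared factor $S$ from the relation $R_1 P = -R_2 Q$, and without it the kernel would be larger and the surjectivity conclusion would fail.
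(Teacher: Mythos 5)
The paper does not prove this lemma; it is cited from Kissner and Song, so there is no in-paper proof to compare against. Your linear-algebra argument is correct and self-contained. The map $(R_1,R_2)\mapsto R_1\alicesPoly + R_2\bobsPoly$ is $\finiteField$-linear, and from $R_1\alicesPoly = -R_2\bobsPoly$ together with $\gcd{\alicesPoly}{\bobsPoly}=1$ you correctly extract $R_1(x) = Q(x)\,S(x)$ and $R_2(x) = -P(x)\,S(x)$; the passage from the two a priori independent factorizations $R_1 = Q S_1$, $R_2 = P S_2$ to a single $S$ requires substituting back into $R_1 P = -R_2 Q$ and cancelling $PQ$, a step you elide but which is immediate since $\polyField$ is an integral domain. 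The dimension count $\dim\ker\phi = D_r - d + 1$ (nonnegative because $D_r \ge D_p \ge d$), followed by rank--nullity, gives surjectivity onto the space of polynomials of degree at most $D_r + d$, and equal fibre sizes under a linear surjection give uniformity of the pushforward. The coprimality hypothesis is used exactly where you say.

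One caveat worth flagging: with $d = \degree{\alicesPoly} = \degree{\bobsPoly}$, your argument shows the output is uniform over polynomials of degree at most $D_r + d$, which matches the claimed $D_r + D_p$ only when $d = D_p$ exactly. If $d < D_p$ the output never has degree exceeding $D_r + d$, so it cannot be uniform on the strictly larger set of degree-$\le D_r + D_p$ polynomials; your closing sentence (``$D_r + d \le D_r + D_p$, this gives the conclusion'') glosses this distinction. The lemma is best read with $D_p$ taken to equal $\degree{\alicesPoly} = \degree{\bobsPoly}$, which is how the paper applies it (the two polynomials encode equal-cardinality sets and have exactly equal degree). With that reading your proof is complete.
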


\section{Protocol for Hamming Distances}
\label{sec:hammingawarePSI}

We start with a protocol for privately computing a Hamming distance-aware set intersection between sets where elements are drawn from  the universe $\universe = \{0,1\}^\genericVecLen$. \figref{fig:ham_psi_func} defines the ideal functionality \hamAwarePSIFunc for Hamming DA-PSI between two parties with tunable true positive and true negative rates. We propose a protocol with \hamPSICommComplexity communication cost for set sizes \genericSetSize (i.e., the cost is \textit{independent of the vector length}), and compute time that scales polynomially in \hamDistThreshold.

\myparagraph{Remark on Ideal Functionality} We have defined \hamAwarePSIFunc such that for each $(\alicesVector, \bobsVector) \in \alicesSet \times \bobsSet$, both parties learn $(\alicesVector, \bobsVector)$ 
iff $\hamDist{\alicesVector}{\bobsVector} \leq \hamDistThreshold$. Another definition considering is where \alice only learns if there is a $\bobsVector \in \bobsSet$ such that $\hamDist{\alicesVector}{\bobsVector} \leq \hamDistThreshold$ \textit{but not \bobsVector itself}. However, this definition may not be meaningful in the context of distance aware applications. For instance, there are $\binom{\genericVecLen}{\hamDistThreshold}$ elements that are within Hamming distance \hamDistThreshold of an element $\alicesVector \in \alicesSet$; it is not straightforward for \alice to guess \bobsVector simply from the fact that $\hamDist{\alicesVector}{\bobsVector} \leq \hamDistThreshold$ .
This is unlike traditional PSI, where \alice can trivially guess \bob's element knowing that there is a match. 
Nonetheless, both our Hamming DA-PSI protocol have an additional $\mathsf{Recover}$ step where \alice obtains \bobsVector from \bob after learning some intermediate results which indicates $\hamDist{\alicesVector}{\bobsVector} \leq \hamDistThreshold$. The protocol may be aborted at this stage (to save one extra round of communication) to realize an ideal functionality which only enables \alice to learn \textit{if} $\hamDist{\alicesVector}{\bobsVector} \leq \hamDistThreshold$ without directly revealing \bobsVector.




 \begin{figure}[h]

	\begin{mdframed}
		\footnotesize
		
		\underline{$\hamAwarePSIFunc$: Ideal Functionality for Hamming Aware DA-PSI:}
		
		\smallskip \noindent
		\textbf{Parameters:~} Parties \alice and \bob. Universe $\universe = \{0,1\}^\genericVecLen$. Hamming distance threshold $\hamDistThreshold$. True positive rate \tpr and true negative rate \tnr.  
		
		\smallskip
		\noindent
		\textbf{Inputs:~} \alice has input $\alicesSet = \{\vec{a_i}\}_{i=1}^{\genericSetSize} \subseteq \universe$. \bob has input $\bobsSet = \{\vec{b_j}\}_{j=1}^{\genericSetSize} \subseteq \universe$.
		
		\smallskip
		\noindent
		\textbf{Output:~} \alice and \bob learn 
		$S \subseteq \alicesSet \times \bobsSet$ where for each $(\alicesVector, \bobsVector) \in \alicesSet \times \bobsSet$, if $\hamDistLessThanThreshold{\alicesVector}{\bobsVector}$ then $\prob{(\alicesVector, \bobsVector) \in S} \ge \tpr$ and if \hamDistGreaterThanThreshold{\alicesVector}{\bobsVector} then $\prob{(\alicesVector, \bobsVector) \not\in S} \ge \tnr$.

		\medskip \underline{$\tHamQueryFunc$: Ideal Functionality for Threshold Hamming Query:}
		
			\smallskip\noindent
			\textbf{Parameters:~} Parties \alice and \bob. Universe $\universe = \{0,1\}^\genericVecLen$. Hamming distance thresholds $\hamDistThreshold$, true positive rate \tpr, and true negative rate \tnr. 
			
			\smallskip
			\noindent
			\textbf{Inputs:~} \alice has vector $\alicesVector \in \universe$ and \bob has vector $\bobsVector \in \universe$. 
			
			\smallskip
			\noindent
			\textbf{Output:~} If \hamDistLessThanThreshold{\alicesVector}{\bobsVector}, then \alice and \bob learn 
			 $(\alicesVector, \bobsVector)$ with probability $\ge \tpr$. If \hamDistGreaterThanThreshold{\alicesVector}{\bobsVector}, 
			 then \alice and \bob learn $\bot$ with probability $\ge \tnr$. 

	\end{mdframed}
	\vspace{-15pt}
	\caption{\small Ideal functionalities for Hamming distance-aware PSI and threshold Hamming queries \label{fig:ham_psi_func}}
\end{figure}

 \subsection{Threshold Hamming Query}
 The key building block of our construction is a protocol to privately determine if two bit vectors are within 
 a certain Hamming distance of each other. We call this primitive a {\em threshold Hamming query}. \figref{fig:ham_psi_func} defines the ideal functionality \tHamQueryFunc for \genericVecLen-bit vectors and Hamming distance threshold \hamDistThreshold.

%
%

\begin{figure}
	\footnotesize
	\begin{mdframed}
		
		
		
		
		\noindent

		\textbf{\underline{Parameters:}} \alice and \bob have vectors $\alicesVector, \bobsVector \in \{0,1\}^{\genericVecLen}$, respectively, and a Hamming distance threshold $\hamDistThreshold \in [0, \genericVecLen/2)$.

		\smallskip
		\textbf{\underline{Procedure $\mathsf{Map}$}:}

		 \alice and \bob sample \genericVecLen injective mapping functions \mapSet, \mapDomain. \alice computes set $\alicesSetFromVec \assign \{ s_\vecBitIdx : s_\vecBitIdx \assign \mapSetIdx{\vecBitIdx}(\alicesVector[\vecBitIdx]) \}$ and 
		\bob computes 
		$\bobsSetFromVec \assign \{ s_\vecBitIdx : s_\vecBitIdx \assign \mapSetIdx{\vecBitIdx}(\bobsVector[\vecBitIdx]) \}$. \label{step:ham_query_lite:encoding}
	    
		\smallskip
		\textbf{\underline{Procedure \oneSidedSetRecon:}}
		
		\begin{enumerate}[nosep,leftmargin=1.5em,labelwidth=*,align=left,labelsep=0.25em]
			
			
			\item \alice and \bob select a set of \numPointsForInterpolPlus points in \finiteField $\setOfXCoords \assign \setOfXCoordsDefn$ such that none of the points  are in the ranges of any of the mapping functions.

			\item \alice encodes \alicesSetFromVec  in the polynomial \alicesPoly = \polyFromSet{\alicesSetFromVec} and \bob encodes \bobsSetFromVec in the polynomial \bobsPoly = \polyFromSet{\bobsSetFromVec}. \label{step:set_recon:poly_compute}

			\item \bob samples two degree-\genericVecLen random polynomials $\alicesRandomPoly, \bobsRandomPoly \getsr \polyField$. 
			
			\item For each $x_\ptIdx \in \setOfXCoords$, \alice sends \alicesPolyFromVecEval{x_\ptIdx} and \bob sends \alicesRandomPolyEval{x_\ptIdx} and $\bobsRandomPolyEval{x_\ptIdx} \times \bobsPolyFromVecEval{x_\ptIdx}$ to \oleFunc. \alice learns \alicesVoleOutputPolyEval{x_\ptIdx} \assign \tPSIOutputPolyEval{x_\ptIdx} as the output of \oleFunc. \label{step:set_recon:ole_step_1}
			
			
			\item For $\ptIdx \in [1, \numPointsForInterpol]$, \alice computes the set of points \label{step:set_recon:point_computation}
			
			\begin{center}
				$\setOfPtsForInterpolation \assign  \{(x_\ptIdx, y_\ptIdx) : y_\ptIdx \assign \frac{\alicesVoleOutputPolyEval{x_\ptIdx}}{\alicesPolyFromVecEval{x_\ptIdx}} =  \frac{\tPSIOutputPolyEval{x_\ptIdx}}{\alicesPolyFromVecEval{x_\ptIdx}} \}$.
			\end{center}
			
			\item \alice interpolates \setOfPtsForInterpolation with a rational function $r(x) \assign \frac{\numeratorPoly}{\denominatorPoly}$ and checks that \denominatorPoly is a factor of \alicesPoly. If so, \alice outputs $\alicesSetFromVec \setminus \bobsSetFromVec$ which contains the roots of \denominatorPoly, otherwise \alice outputs $\bot$. 	\label{step:set_recon:interpolation}  
			
		\end{enumerate}
		
		\smallskip
		\textbf{\underline{Procedure $\mathsf{Recover}$}:} 
         
         If \alice receives $\bot$ from \oneSidedSetRecon then output $\bot$. Otherwise, upon receiving $\alicesSetFromVec \setminus \bobsSetFromVec$: for each  $s \in \alicesSetFromVec \cap \bobsSetFromVec = \alicesSetFromVec \setminus (\alicesSetFromVec \setminus \bobsSetFromVec)$, if $s =  \mapSetIdx{\vecBitIdx}(\alicesVector[\vecBitIdx])$ 
	then $\bobsVector[\vecBitIdx] = \alicesVector[\vecBitIdx]$.   For all indices $\vecBitIdx' \in [1, \genericVecLen]$ that are left undetermined from \setDiff{\alicesSetFromVec}{\bobsSetFromVec}, $\bobsVector[\vecBitIdx'] = \comp(\alicesVector[\vecBitIdx'])$. Output (\alicesVector, \bobsVector).
		

		


\end{mdframed}
\vspace{-15pt}
\caption{\small \tHamQueryLite: A first pass Hamming query protocol \label{fig:ham_query_lite}}
\end{figure}

\subsubsection{\tHamQueryLite: Hamming Query First Pass}
We start with a simple and insecure version of our threshold Hamming query protocol dubbed \tHamQueryLite (\figref{fig:ham_query_lite}). The key observation is that \tHamQueryFunc can be realized as follows:

\begin{enumerate}[nosep,leftmargin=1.6em,labelwidth=*,align=left]
    \item \textbf{Map:} We use \genericVecLen deterministic, injective mapping functions \mapSet where \mapDomain, such that 
    the ranges of the functions do not overlap. These functions map the individual bits in the vectors to elements of \finiteField. The $\vecBitIdx$th bit of vector \alicesVector, denoted $\alicesVector[\vecBitIdx]$, is mapped to element $M_m(\alicesVector[\vecBitIdx])$. \alicesVector is then uniquely represented by \alicesSetFromVecDefn. Correspondingly,  \bobsVector is represented by \bobsSetFromVecDefn.

    \item \textbf{Threshold Set Reconciliation:} A protocol with inputs \alicesSetFromVec and \bobsSetFromVec allows \alice to learn $\setDiff{\alicesSetFromVec}{\bobsSetFromVec}$ iff
$\setDiffCard{\alicesSetFromVec}{\bobsSetFromVec} \leq \hamDistThreshold$. \alice learns \bobsVector from $\setDiff{\alicesSetFromVec}{\bobsSetFromVec}$ . E.g., let \alicesVector = $1001$, \bobsVector = $1011$, \alicesSetFromVec = $\{\mappingFn_1(1)$, $\mappingFn_2(0)$, $\mappingFn_3(0)$, $\mappingFn_4(1) \}$ and \bobsSetFromVec = $\{\mappingFn_1(1)$, $\mappingFn_2(0)$, $\mappingFn_3(1)$, $\mappingFn_4(1) \}$. Then,
\alicesSetFromVec $\setminus$ \bobsSetFromVec = $\{\mappingFn_3(0)\}$ and $\bobsVector[1] = \alicesVector[1]$, $\bobsVector[2] = \alicesVector[2]$, $\bobsVector[4] = \alicesVector[4]$ and $\bobsVector[3] = \comp{\alicesVector[3]}$.

\end{enumerate}



The mapping functions have no bearing on the security of the protocol, as long as the ranges do not overlap and the functions are injective. In our implementations we have used PRFs, but we do not rely on their \textit{randomness} guarantees.

\myparagraph{Threshold Set Reconciliation}
In \tHamQueryLite, we use \oneSidedSetRecon, a new private set reconciliation protocol which is based on the \tPSI protocol (see \secref{sec:background}). \oneSidedSetRecon allows one of the parties to learn \setDiff{\alicesSetFromVec}{\bobsSetFromVec} (say \alice) while the other party generates all the random coins. Both parties begin by encoding the items in their respective sets in the roots of polynomials \alicesPolyFromVec and \bobsPolyFromVec respectively (see \lineref{step:set_recon:poly_compute} of \oneSidedSetRecon in \figref{fig:ham_query_lite}). This is followed by the parties jointly computing the evaluations of the polynomial \tPSIOutputPoly at \numPointsForInterpol points, where \alicesRandomPoly and \bobsRandomPoly are random polynomials sampled by \bob. This is achieved using \numPointsForInterpol calls to \oleFunc where \alice sends evaluations of \alicesPoly at each of the points and \bob correspndingly sends evaluations of \alicesRandomPoly, and $\bobsPolyFromVec \bobsRandomPoly$ (see \figref{fig:ole_in_set_recon}).  Finally, \alice interpolates the rational function $\frac{\tPSIOutputPoly}{\alicesPolyFromVec}$ with the evaluations of \tPSIOutputPoly similar to \tPSI (\lineref{step:set_recon:interpolation} of \oneSidedSetRecon) and obtains \setDiff{\alicesSetFromVec}{\bobsSetFromVec} iff $\setDiffCard{\alicesSetFromVec}{\bobsSetFromVec} \leq \hamDistThreshold$. 
\oneSidedSetRecon is significantly simpler than \tPSI and has lower communication cost. This is because it requires only \numPointsForInterpol calls to \oleFunc compared to twice as many calls in \tPSI while also avoiding two extra rounds of communication. The improvement comes from the fact that in contrast to \tPSI where both parties learn the results, \oneSidedSetRecon enables only \alice to learn the final result (see \appref{app:tpsi_delta} for more details).




 \tikzset{every picture/.style={line width=0.75pt}} 
\begin{figure}
\centering
\tikzset{every picture/.style={line width=0.75pt}} 

\begin{tikzpicture}[x=0.75pt,y=0.75pt,yscale=-1,xscale=0.5]

\draw (88,20) node [anchor=north west][inner sep=0.75pt]    {\footnotesize \underline{\alice}};
\draw (488,20) node [anchor=north west][inner sep=0.75pt]    {\footnotesize \underline{\bob}};
\draw (388,35) node [anchor=north west][inner sep=0.75pt]    {\footnotesize $\alicesRandomPoly, \bobsRandomPoly \getsr \polyField $};

\draw   (279,76) -- (373.5,76) -- (373.5,100) -- (279,100) -- cycle ;

\draw    (137.5,80) -- (271,80) ;
\draw [shift={(273,80)}, rotate = 180] [color={rgb, 255:red, 0; green, 0; blue, 0 }  ][line width=0.75]    (10.93,-3.29) .. controls (6.95,-1.4) and (3.31,-0.3) .. (0,0) .. controls (3.31,0.3) and (6.95,1.4) .. (10.93,3.29)   ;

\draw    (495.5,80) -- (379.5,80) ;
\draw [shift={(377.5,80)}, rotate = 359.74] [color={rgb, 255:red, 0; green, 0; blue, 0 }  ][line width=0.75]    (10.93,-3.29) .. controls (6.95,-1.4) and (3.31,-0.3) .. (0,0) .. controls (3.31,0.3) and (6.95,1.4) .. (10.93,3.29)   ;

\draw    (496.5,98) -- (380.5,98) ;
\draw [shift={(378.5,98)}, rotate = 359.74] [color={rgb, 255:red, 0; green, 0; blue, 0 }  ][line width=0.75]    (10.93,-3.29) .. controls (6.95,-1.4) and (3.31,-0.3) .. (0,0) .. controls (3.31,0.3) and (6.95,1.4) .. (10.93,3.29)   ;

\draw    (275.5,98) -- (98.5,98) ;
\draw [shift={(96.5,98)}, rotate = 0.32] [color={rgb, 255:red, 0; green, 0; blue, 0 }  ][line width=0.75]    (10.93,-3.29) .. controls (6.95,-1.4) and (3.31,-0.3) .. (0,0) .. controls (3.31,0.3) and (6.95,1.4) .. (10.93,3.29)   ;

\draw (300,80.4) node [anchor=north west][inner sep=0.75pt]    {$\oleFunc$};
\draw (388,64) node [anchor=north west][inner sep=0.75pt]    {\footnotesize $\alicesRandomPolyEval{x_1}$};

\draw (388,103.4) node [anchor=north west][inner sep=0.75pt]    {\footnotesize $\bobsRandomPolyEval{x_1} \bobsPolyFromVecEval{x_1}$};

\draw (184,64) node [anchor=north west][inner sep=0.75pt]    {\footnotesize $\alicesRandomPolyEval{x_1}$};

\draw (30,103.4) node [anchor=north west][inner sep=0.75pt]    {\footnotesize $\tPSIOutputPolyEval{x_1}$};

\draw (316,120) node [anchor=north west][inner sep=0.75pt]   [align=left] {...};
\draw (316,130) node [anchor=north west][inner sep=0.75pt]   [align=left] {...};

\draw   (279,146) -- (373.5,146) -- (373.5,170) -- (279,170) -- cycle ;

\draw    (137.5,150) -- (271,150) ;
\draw [shift={(273,150)}, rotate = 180] [color={rgb, 255:red, 0; green, 0; blue, 0 }  ][line width=0.75]    (10.93,-3.29) .. controls (6.95,-1.4) and (3.31,-0.3) .. (0,0) .. controls (3.31,0.3) and (6.95,1.4) .. (10.93,3.29)   ;

\draw    (495.5,150) -- (379.5,150) ;
\draw [shift={(377.5,150)}, rotate = 359.74] [color={rgb, 255:red, 0; green, 0; blue, 0 }  ][line width=0.75]    (10.93,-3.29) .. controls (6.95,-1.4) and (3.31,-0.3) .. (0,0) .. controls (3.31,0.3) and (6.95,1.4) .. (10.93,3.29)   ;

\draw    (496.5,168) -- (380.5,168) ;
\draw [shift={(378.5,168)}, rotate = 359.74] [color={rgb, 255:red, 0; green, 0; blue, 0 }  ][line width=0.75]    (10.93,-3.29) .. controls (6.95,-1.4) and (3.31,-0.3) .. (0,0) .. controls (3.31,0.3) and (6.95,1.4) .. (10.93,3.29)   ;

\draw    (275.5,168) -- (98.5,168) ;
\draw [shift={(96.5,168)}, rotate = 0.32] [color={rgb, 255:red, 0; green, 0; blue, 0 }  ][line width=0.75]    (10.93,-3.29) .. controls (6.95,-1.4) and (3.31,-0.3) .. (0,0) .. controls (3.31,0.3) and (6.95,1.4) .. (10.93,3.29)   ;

\draw (300,150.4) node [anchor=north west][inner sep=0.75pt]    {$\oleFunc$};
\draw (388,134) node [anchor=north west][inner sep=0.75pt]    {\footnotesize $\alicesRandomPolyEval{x_\ptIdx} $};
\draw (388,173.4) node [anchor=north west][inner sep=0.75pt]    {\footnotesize $\bobsRandomPolyEval{x_\ptIdx} \bobsPolyFromVecEval{x_1}$};
\draw (184,134) node [anchor=north west][inner sep=0.75pt]    {\footnotesize $\alicesRandomPolyEval{x_\ptIdx}$};
\draw (30,173.4) node [anchor=north west][inner sep=0.75pt]    {\footnotesize $\tPSIOutputPolyEval{x_\ptIdx}$};




\end{tikzpicture}
\vspace{-10pt}
\caption{Using OLE for \oneSidedSetRecon \label{fig:ole_in_set_recon}}
\end{figure}
 \subsubsection{The (In)Security of \tHamQueryLite} 
 \label{sec:tHamQueryfromPSI}

\tHamQueryLite is not secure across all input parameters, and as we will show in this section, reveals $\bobsVector$ to \alice when $\hamDist{\alicesVector}{\bobsVector} \in (\hamDistThreshold, 2\hamDistThreshold)$. This is because \oneSidedSetRecon reveals information when $|\alicesSetFromVec \setminus \bobsSetFromVec| \in (\hamDistThreshold, 2\hamDistThreshold)$ . In fact, the protocol of \citet{ghosh2019thresholdPSI} on which \oneSidedSetRecon is based also has the same leakage, and while the authors caution against using it as a standalone protocol\footnote{To address the leakage, the paper proposes a significantly more expensive threshold cardinality of intersection protocol.}, they have not analyzed this. More formally, we prove the following result.


\begin{thm}
\label{thm:set_recon_claims}
Given sets \alicesSetFromVec and \bobsSetFromVec such that $\setSize{\alicesSetFromVec} = \setSize{\bobsSetFromVec}$ as inputs to \oneSidedSetRecon, the following results hold:

\begin{itemize}[nosep,leftmargin=1.6em,labelwidth=*,align=left]
    \item \textbf{Proposition 1:} If $\setDiffCard{\alicesSetFromVec}{\bobsSetFromVec} \ge 2\hamDistThreshold$, then there does not exist a PPT adversary that can determine any information regarding \bobsSetFromVec from \oneSidedSetRecon with more than negligible advantage (in \secParam) over guessing. \label{prop:set_diff_larger}
    
    \item \textbf{Proposition 2:} If $\setDiffCard{\alicesSetFromVec}{\bobsSetFromVec} \in (\hamDistThreshold, 2\hamDistThreshold) $, there exists an adversary that can determine \bobsSetFromVec from \oneSidedSetRecon with overwhelming probability (at least $1 - \neglfn{\secParam}$). \label{prop:set_diff_in_range}
\end{itemize}
\end{thm}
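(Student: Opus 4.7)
The plan is to decompose the OLE-output polynomial and treat both propositions on a common footing. Let $k = |\alicesSetFromVec \setminus \bobsSetFromVec|$ (equal to $|\bobsSetFromVec \setminus \alicesSetFromVec|$ since both sets have size $\genericVecLen$), let $g(x) = \gcd{\alicesPoly}{\bobsPoly}$ of degree $\genericVecLen - k$, and factor $\alicesPoly = g(x)\, P'(x)$ and $\bobsPoly = g(x)\, Q'(x)$ with $\gcd{P'(x)}{Q'(x)} = 1$. The joint OLE-output polynomial $W(x) \triangleq R_1(x) \alicesPoly + R_2(x) \bobsPoly$ then factors as $W(x) = g(x)\, U(x)$ where $U(x) \triangleq R_1(x) P'(x) + R_2(x) Q'(x)$, and Lemma~\ref{lemma:kissener_uniformly_random} (applied with $D_p = k$ and $D_r = \genericVecLen$) yields that $U$ is uniformly random in the polynomials of degree $\leq \genericVecLen+k$. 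Alice's view from \oneSidedSetRecon is equivalent to the vector $\bigl(W(x_i)\bigr)_{i=1}^{N} = \bigl(g(x_i)\, U(x_i)\bigr)_{i=1}^{N}$, where $N = \numPointsForInterpol$ and each $x_i$ is chosen as a non-root of $\alicesPoly$ (hence of $g$).

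For Proposition~1 ($k \geq 2\hamDistThreshold$), the degree bound $\genericVecLen + k$ on $U$ satisfies $\genericVecLen + k \geq N - 1$, so the Vandermonde evaluation map sending uniform coefficients of $U$ to $\bigl(U(x_i)\bigr)_i$ is surjective onto $\finiteFieldVec{N}$, making this vector uniformly distributed. Scaling coordinate-wise by the fixed nonzero constants $g(x_i)$ preserves uniformity, so Alice's entire view is a uniform sample in $\finiteFieldVec{N}$; a simulator given only $|\bobsSetFromVec|$ reproduces it perfectly by sampling each $W(x_i)$ uniformly, which implies the claimed (indeed, statistical) security.

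For Proposition~2 ($\hamDistThreshold < k < 2\hamDistThreshold$) I would construct an explicit attacker for Alice. Recovering $C \triangleq \alicesSetFromVec \cap \bobsSetFromVec$ is enough to recover $\bobsVector$ (and hence $\bobsSetFromVec$), since at each index $m$ one has $\bobsVector[m] = \alicesVector[m]$ iff $\mapSetIdx{m}(\alicesVector[m]) \in C$, and $\bobsVector[m] = 1 - \alicesVector[m]$ otherwise. The attacker enumerates $k' \in \{\hamDistThreshold+1,\ldots,2\hamDistThreshold-1\}$ and every $C' \subseteq \alicesSetFromVec$ of size $\genericVecLen - k'$, forms $g_{C'}(x) = \prod_{c \in C'}(x-c)$, and performs the linear-algebra check: does $\bigl(W(x_i)/g_{C'}(x_i)\bigr)_{i=1}^{N}$ interpolate to some polynomial of degree $\leq \genericVecLen + k'$? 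It outputs the largest $C'$ that passes the check.

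The main obstacle is soundness. For the true $C$, $W/g_C = U$ has degree $\leq \genericVecLen + k$ and the check passes deterministically. For any $C' \subsetneq C$ (so $k' > k$), $W/g_{C'} = g_{C \setminus C'}(x)\, U(x)$ remains a polynomial of degree $\leq \genericVecLen + k'$ and also passes, but $|C'| < |C|$, so the ``largest-size'' rule filters these out. For any $C'$ with $C' \not\subseteq C$, set $j = |C' \setminus C| \geq 1$; cancellation of the common factor $g_{C \cap C'}$ leaves $W/g_{C'} = \bigl(g_{C \setminus C'}(x)\, U(x)\bigr)/g_{C' \setminus C}(x)$ with coprime numerator and denominator, and the check succeeds only if $g_{C' \setminus C}$ divides $U$ exactly. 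The uniformity of $U$ from Lemma~\ref{lemma:kissener_uniformly_random} then gives probability $|\finiteField|^{-j}$ per such candidate, and a union bound over the $O\!\left(\binom{\genericVecLen}{2\hamDistThreshold}\right)$ candidate $C'$'s yields total failure negligible in \secParam when $|\finiteField| = 2^{\secParam}$.
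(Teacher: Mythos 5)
Your treatment of Proposition~1 is correct and matches the paper's approach in substance: both invoke \lemmaref{lemma:kissener_uniformly_random} to conclude $U = R_1 P' + R_2 Q'$ is uniform over degree-$\leq(\genericVecLen+k)$ polynomials, and both then observe that when $k \geq 2\hamDistThreshold$ the evaluations at the $\numPointsForInterpol$ points carry no information. You argue the evaluation vector itself is uniform (Vandermonde surjectivity plus scaling by the nonzero $g(x_i)$); the paper instead equates the view probabilities of the two hypotheses by an explicit count. Same idea, and your phrasing is the cleaner route to a simulator.

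For Proposition~2 the shape of your attack is right --- and in fact aligns better with the theorem's claim of full \emph{recovery} than the paper's appendix proof, which only establishes a two-hypothesis distinguishing game --- but the per-candidate soundness bound is wrong. You assert that for $C' \not\subseteq C$ the check passes ``only if $g_{C'\setminus C}$ divides $U$ exactly,'' giving probability $p^{-j}$. The check, however, is over-determined interpolation of the $\numPointsForInterpol$ values $W(x_i)/g_{C'}(x_i)$ by a degree-$\leq(\genericVecLen+k')$ polynomial, which is strictly weaker than divisibility of $U$: with $Z(x) = \prod_i(x-x_i)$ and $T_0 \equiv g\,U\,g_{C'}^{-1} \pmod{Z}$, the set of passing $U$ (those with $\deg T_0 \leq \genericVecLen + k'$) is a linear subspace of codimension exactly $\min(j,\ 2\hamDistThreshold - k')$ inside the $(\genericVecLen+k+1)$-dimensional $U$-space, not codimension $j$. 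So the pass probability is $p^{-\min(j,\ 2\hamDistThreshold - k')}$, which for $j > 2\hamDistThreshold - k'$ equals $p^{-(2\hamDistThreshold - k')}$ --- as large as $p^{-1}$ at $k' = 2\hamDistThreshold - 1$, far bigger than your $p^{-j}$. Relatedly, calling $g_{C\setminus C'}U$ and $g_{C'\setminus C}$ ``coprime'' already presupposes $\gcd(U, g_{C'\setminus C})=1$, the very event you are bounding. None of this defeats the attack: $p^{-\min(j,\, 2\hamDistThreshold - k')} \leq p^{-1}$ holds for every bad $C'$ since $k' \leq 2\hamDistThreshold - 1$, so the union bound still yields failure $O\!\bigl(\binom{\genericVecLen}{2\hamDistThreshold}\bigr)/p$ --- but the exponent $-j$ must be replaced, and the adversary is PPT with negligible failure only when $\binom{\genericVecLen}{2\hamDistThreshold} \ll p$, the same parameter restriction the paper concedes in the main text (``for small $\genericVecLen$'').
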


\begin{proofsketch}
We prove the result in  \appref{app:proof_set_recon_claims}. Here, we provide the key arguments behind the proof. 

\noindent\textbf{Proof of Proposition~1:}
Consider the evaluation points \alice computes in \lineref{step:set_recon:point_computation} corresponding to the rational function 
$\frac{\tPSIOutputPoly}{\alicesPolyFromVec} = \frac{\gcd{\alicesPolyFromVec}{\bobsPolyFromVec} \times \left(\tPSIReducedOutputPoly\right)}{\alicesPolyFromVec} = \frac{\tPSIReducedOutputPoly}{\alicesReducedPoly}$. Here, 
$\alicesReducedPoly \assign \frac{\alicesPolyFromVec}{\gcd{\alicesPolyFromVec}{\bobsPolyFromVec}}$. If the degree of $\gcd{\alicesPolyFromVec}{\bobsPolyFromVec} = \degreeOfGCD$, then \tPSIReducedOutputPoly is a random polynomial of degree $2\genericVecLen - \degreeOfGCD$. This is due to \lemmaref{lemma:kissener_uniformly_random} as  \alicesPolyFromVec, \bobsPolyFromVec, \alicesRandomPoly, \bobsRandomPoly are all degree-\genericVecLen polynomials, and $\alicesRandomPoly, \bobsRandomPoly \getsr \polyField$.

From the set of evaluation points, \setOfPtsForInterpolation, \alice may try to guess \bob's input polynomial \bobsPolyFromVec and check whether there is a polynomial \numeratorPoly of degree $2\genericVecLen - \degreeOfGCD + 1$ such that $\frac{\numeratorPoly}{\alicesReducedPoly}$ is consistent with \setOfPtsForInterpolation. We show in \appref{app:lemmaPolys}
that when $\setSize{\setOfPtsForInterpolation} \leq 2\genericVecLen - \degreeOfGCD$ which implies $\genericVecLen - \degreeOfGCD \ge 2\hamDistThreshold$ , for every possible \alicesReducedPoly there is at least one candidate polynomial for \numeratorPoly. Since \tPSIReducedOutputPoly is a random polynomial, any obtained value of \numeratorPoly is equally likely to be \tPSIReducedOutputPoly.  Moreover, if there are more than one candidates for \numeratorPoly, then they are all equally likely. Thus, Proposition~1 holds. 


\noindent\textbf{Proof of Proposition~2:}
When $\setSize{\setOfPtsForInterpolation} > 2\genericVecLen - \degreeOfGCD + 1$ which implies  $\genericVecLen - \degreeOfGCD < 2\hamDistThreshold$, the probability that \alice will find a candidate polynomial for \numeratorPoly such that $\frac{\numeratorPoly}{\alicesReducedPoly}$ is consistent with \setOfPtsForInterpolation \textit{when she has incorrectly guessed} \bobsPolyFromVec (and \alicesReducedPoly) is negligible in \secParam. And so, \alice may check all possible candidates for \bobsPolyFromVec and verify her guesses. The set of all possible values of \bobsPolyFromVec is smaller than the set of degree-\genericVecLen polynomials in \polyField since the roots of \bobsPolyFromVec are fixed by the mapping functions \mapSet. There are $2^{\genericVecLen}$ possible values of \bobsPolyFromVec, and for small \genericVecLen, the search is computationally feasible for a PPT adversary. Thus, Proposition~2 holds.
\end{proofsketch}



\begin{figure}
	\centering
	\includegraphics[scale=0.2]{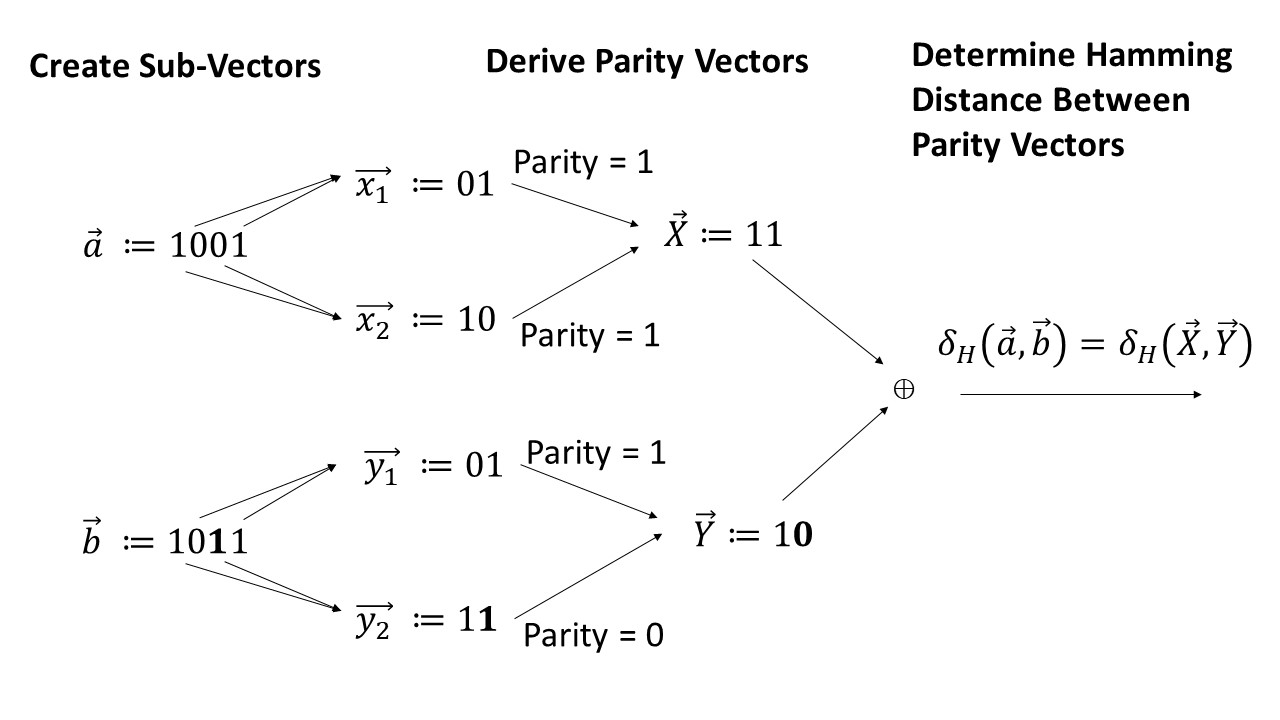}
	\vspace{-0.8cm}
	\caption{\small The process of computing \hamDist{\alicesVector}{\bobsVector} in \tHamQueryRestricted. \alice and \bob partition the bits in \alicesVector and \bobsVector respectively into sub-vectors $\alicesSubVecIdx{1}, \alicesSubVecIdx{2}$ and $\bobsSubVecIdx{1}, \bobsSubVecIdx{2}$. They compute \alicesParityVec and \bobsParityVec respectively using the parities of the sub-vectors. The bold bits show the locations where \alice's and \bob's inputs differ.}
	\label{fig:sub_sample_example}
\end{figure}


\begin{figure}
	\footnotesize
	\begin{mdframed}
	
	\noindent
	\textbf{\underline{Parameters}:} \alice and \bob have vectors \alicesVector and \bobsVector respectively, where 
	$\alicesVector, \bobsVector \in \{0,1\}^\genericVecLen, ~\hamDist{\alicesVector}{\bobsVector} \leq 2\hamDistThreshold$ where 
	$\hamDistThreshold$ is the Hamming distance threshold. False positive rate $\fpr \in (0, 0.5)$.

	\smallskip\noindent
	\textbf{\underline{Procedure \permAndPart}:}

	\begin{enumerate}[nosep,leftmargin=1.5em,labelwidth=*,align=left,labelsep=0.25em]
	    \item \alice and \bob sample a permutation $\randPerm : [1, \genericVecLen] \rightarrow [1, \genericVecLen]$ uniformly randomly from the set of all such permutations.\label{step:hamQueryRes:samplePerm}
	
		\item \alice computes $\alicesPermutedVec \in \{0,1\}^{\genericVecLen}$ after permuting the bits of \alicesVector as follows: 
		for $\vecBitIdx \in [1, \genericVecLen]$, $\alicesPermutedVec[\randPerm(\vecBitIdx)] \assign \alicesVector[\vecBitIdx]$.  Similarly, \bob computes $\bobsPermutedVec \in \{0,1\}^{\genericVecLen}$ such that for $\vecBitIdx \in [1, \genericVecLen]$ $\bobsPermutedVec[\randPerm(\vecBitIdx)] \assign \bobsVector[\vecBitIdx]$

		\item  \alice creates \numOfBinsRef = \numOfBins sub-vectors \alicesSetOfSubVecs where each sub-vector is created by a contiguous sequence of $\frac{\genericVecLen}{\numOfBinsRef}$ bits of \alicesPermutedVec. Specifically, 
		$\alicesSubVecIdx{1} \assign \alicesPermutedVec[1:\frac{\genericVecLen}{\numOfBinsRef}], ~\alicesSubVecIdx{2} \assign \alicesPermutedVec[\frac{\genericVecLen}{\numOfBinsRef} + 1: 2\times \frac{\genericVecLen}{\numOfBinsRef}], \ldots, \alicesSubVecIdx{\numOfBinsRef} \assign \alicesPermutedVec[ (\numOfBinsRef - 1) \times \frac{\genericVecLen}{\numOfBinsRef} + 1: \genericVecLen]$. Here $\alicesPermutedVec[\vecBitIdx : \vecBitIdx ']$ is the contiguous sequence of bits starting from index \vecBitIdx and up to (including) index $\vecBitIdx '$ in \alicesPermutedVec. \bob similarly creates \bobsSetOfSubVecs from \bobsPermutedVec. \label{step:hamQueryRes:createVecs}
	
		
		\item \alice computes vector $\vec{X} \in \{0,1\}^{\numOfBinsRef}$ such that for $\vecBitIdx \in [1, \numOfBinsRef]$,  $\vec{X}[\vecBitIdx] \assign \parity(\vec{x}_\vecBitIdx)$. Similarly, \bob computes vector $\vec{Y} \in \{0,1\}^{\numOfBinsRef}$ such that $\vec{Y}[\vecBitIdx] \assign \parity(\vec{y}_\vecBitIdx)$. \label{step:hamQueryRes:computeParityVecs}

	\end{enumerate}

	\smallskip\noindent
	\textbf{\underline{Protocol \hamCompute}:}

    \begin{enumerate}[resume,nosep,leftmargin=1.5em,labelwidth=*,align=left,labelsep=0.25em]
		
		\item \alice generates a key-pair $(\pubKey, \privKey)$ for a semantically-secure additively homomorphic encryption \AHESchemeDefn, and provides \pubKey to \bob.

		\item For $m \in [1, \numOfBinsRef]$, \alice computes $ct_m \assign \encrypt{\alicesParityVec[\vecBitIdx]}$ and $ct_w \assign \norm{\alicesParityVec}$  where $\norm{.}$ is the Hamming weight of the input vector.  \alice sends $\{ct_w, ct_1, \ldots, ct_{\numOfBinsRef}\}$ to \bob. \label{step:hamQueryRes:sendEnc}


		\item \bob computes $\encrypt{\hamDist{\alicesParityVec}{\bobsParityVec}} \assign ct_w \encAdd ct'_w \encSub 2 \encMult 
		\left((\bobsParityVec[1] \encMult ct'_1) \encAdd \ldots \encAdd (\bobsParityVec[\numOfBinsRef]\encMult ct'_{\numOfBinsRef})\right)$
		 where  $ct'_w \assign \norm{\bobsParityVec}$,  \encAdd (\encSub) is homomorphic addition (subtraction) of ciphertexts, and \encMult is scalar multiplication.  \label{step:hamQueryRes:computeHamDist} 
		 
	       \item \bob samples $\prfKey \getsr \finiteField$ and computes $\keySet \assign \{\prfKey_\setIdx : \prfKey_\setIdx \assign r_\setIdx \encMult \left(\encrypt{\hamDist{\alicesParityVec}{\bobsParityVec}} \encSub \encrypt{\setIdx}\right) \encAdd \encrypt{\prfKey}, ~\setIdx \in [0,\hamDistThreshold], ~r_\setIdx \getsr \finiteField\}$.
           \bob returns \keySet to \alice.  \label{step:hamQueryRes:computeSet}

		\item \alice computes $\keySet ' \assign \{\prfKey'_i ~:~ \prfKey'_i \assign \decrypt{\prfKey_i}, ~\prfKey_i \in \keySet \}$. If $\hamDist{\alicesVector}{\bobsVector} = \hamDist{\alicesParityVec}{\bobsParityVec} \leq \hamDistThreshold$, $\prfKey \in \keySet'$. \alice outputs $\keySet'$. \label{step:hamQueryRes:output}
		
	\end{enumerate}
\end{mdframed}
\vspace{-15pt}
\caption{\tHamQueryRestricted: Threshold Hamming query over restricted domain \label{fig:hamQueryRestricted}}
\end{figure}

\subsubsection{\tHamQuery: Hamming Queries with Polynomial Computation}
\label{sec:hamming_psi:poly_compute}
One way to fix \tHamQueryLite is by checking if  $\hamDist{\alicesVector}{\bobsVector} \in (\hamDistThreshold, 2\hamDistThreshold)$; however, implementing this as a precursor to \tHamQueryLite reveals information regarding \hamDist{\alicesVector}{\bobsVector}. We propose a protocol with \textit{communication cost independent of the length of the vectors}, \genericVecLen where the cases $\hamDist{\alicesVector}{\bobsVector} \in (\hamDistThreshold, 2\hamDistThreshold)$ and $\hamDist{\alicesVector}{\bobsVector} \geq 2\hamDistThreshold$ are indistinguishable.

\myparagraph{Hamming Queries Over Restricted Domain}
As the starting point, we present a protocol which distinguishes $\hamDist{\alicesVector}{\bobsVector} \leq \hamDistThreshold$
and $\hamDist{\alicesVector}{\bobsVector} \in (\hamDistThreshold, 2\hamDistThreshold]$. The additional constraint is that for all inputs \alicesVector and \bobsVector, the maximum Hamming distance between them is known apriori to be  $\leq 2\hamDistThreshold$. To generalize over the entire domain, we will subsequently extend this protocol and integrate with \oneSidedSetRecon.

The protocol dubbed \tHamQueryRestricted  is inspired by a result due to \citet{huang2006hamming}. The intuition is as follows: let $S_I$ be the set of indices where $\alicesVector$ and $\bobsVector$ differ. By definition of the problem, $\setSize{S_I} \le 2\hamDistThreshold$. Consider the following balls and bins analysis: let the indices where \alicesVector and \bobsVector differ be represented by balls that are thrown randomly into \numOfBins empty bins, where $\fpr \in (0,0.5)$. Then, the following result shows that \textit{all bins have $\leq 1$ ball with probability at least $1 - \fpr$}. It may be evident that  the number of non-empty bins gives us $\hamDist{\alicesVector}{\bobsVector}$.

\begin{fact}[\cite{huang2006hamming}]
	\label{fact:balls_bins_1}
	If 2\hamDistThreshold balls are randomly thrown into \numOfBins bins, where $\fpr \in (0,0.5)$, then with probability at most $\fpr$, there is one or more bins with more than one ball.
\end{fact}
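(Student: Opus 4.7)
The plan is to prove this by a standard union bound (birthday-style) argument over pairs of balls. Intuitively, the event ``some bin has more than one ball'' is the union, over all pairs of balls, of the events ``this particular pair lands in the same bin,'' and each such pair-collision event has probability exactly $1/\numOfBinsRef$ when balls are thrown independently and uniformly. Since $\numOfBinsRef = \numOfBins$ was chosen precisely to make this union bound come out to $\fpr$, the arithmetic should fall out cleanly.

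First I would fix pairwise indices: let $B_1, \ldots, B_{2\hamDistThreshold}$ denote the bin-assignments of the $2\hamDistThreshold$ balls, each drawn independently and uniformly from $[\numOfBinsRef]$. For each pair $\{i,j\} \subseteq \{1,\ldots,2\hamDistThreshold\}$ let $E_{i,j}$ be the event $\{B_i = B_j\}$. By pairwise independence (in fact full independence) of the throws, $\prob{E_{i,j}} = 1/\numOfBinsRef$. The event ``at least one bin contains more than one ball'' is exactly $\bigcup_{i<j} E_{i,j}$.

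Next I would apply the union bound. The number of pairs is $\binom{2\hamDistThreshold}{2} = \hamDistThreshold(2\hamDistThreshold - 1)$, so
\[
\prob{\bigcup_{i<j} E_{i,j}} \;\le\; \binom{2\hamDistThreshold}{2} \cdot \frac{1}{\numOfBinsRef} \;=\; \hamDistThreshold(2\hamDistThreshold - 1) \cdot \frac{\fpr}{2\hamDistThreshold^2} \;=\; \frac{(2\hamDistThreshold - 1)}{2\hamDistThreshold}\,\fpr \;<\; \fpr,
\]
substituting $\numOfBinsRef = \numOfBins$. This yields the claimed bound.

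There is no real obstacle here: the choice $\numOfBins = 2\hamDistThreshold^2/\fpr$ is reverse-engineered from exactly this union bound, so the argument is essentially a one-line verification. The only thing worth remarking on is that the constraint $\fpr \in (0, 0.5)$ is not strictly needed for the union-bound step itself (it gives a bound $<\fpr$ for any $\fpr \in (0,1)$); it is imposed in the statement because subsequent uses of the fact in \tHamQuery rely on $\fpr$ being small enough that the ``at most one ball per bin'' event represents the typical, rather than atypical, behavior. If desired one could sharpen the constant slightly by observing that the bound is strict and by noting $\hamDistThreshold \ge 1$ in all nontrivial settings, but the stated inequality is already tight enough for all downstream uses.
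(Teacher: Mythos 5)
Your union-bound proof is correct. The paper does not reprove Fact~\ref{fact:balls_bins_1}---it is stated with a citation to \cite{huang2006hamming}---so there is no in-paper argument to compare against directly. Your derivation is the standard birthday bound: decompose the event that some bin contains two or more balls as a union over all $\binom{2\hamDistThreshold}{2}$ pairwise collision events, each occurring with probability exactly $1/\numOfBinsRef$, and observe that setting $\numOfBinsRef = \numOfBins$ makes the union bound evaluate to $\frac{2\hamDistThreshold - 1}{2\hamDistThreshold}\,\fpr < \fpr$.

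For context, the paper does prove the companion Lemma~\ref{lemma:balls_bins_2} in the appendix (that $2\hamDistThreshold + 1$ balls occupy more than $2\hamDistThreshold$ bins with high probability), and that proof goes through Stirling numbers of the second kind rather than a union bound. The union bound is simpler and sufficient for the ``any collision at all'' question that Fact~\ref{fact:balls_bins_1} poses; the Stirling-number route is needed for the companion lemma because it requires control over the distribution of the number of occupied bins, not merely the probability of at least one collision. Your side remark about the $\fpr \in (0,0.5)$ hypothesis is also apt: the union bound yields a probability $< \fpr$ for any $\fpr \in (0,1)$, and the stricter range is imported from the companion lemma, whose proof genuinely needs it at the last step to conclude $\frac{1}{2}\bigl(e^{2\fpr} - 1 - 2\fpr\bigr) < \fpr$.
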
 

\figref{fig:hamQueryRestricted} describes the \tHamQueryRestricted protocol built around this idea.
The protocol comprises two procedures \permAndPart and \hamCompute. \permAndPart
uses a random permutation of the vectors to create $\numOfBinsRef = \numOfBins$ sub-vectors. Specifically, the bits in \alicesVector are partitioned into \numOfBinsRef partitions (each corresponding to a sub-vector)
after permuting with the random permutation. The resulting sub-vectors are denoted \alicesSetOfSubVecs. Similarly, \bobsVector is partitioned into  \bobsSetOfSubVecs (\linesref{step:hamQueryRes:samplePerm}{step:hamQueryRes:createVecs}). Then, \alice and \bob 
create parity vectors \alicesParityVec and \bobsParityVec using the parities of the sub-vectors (\lineref{step:hamQueryRes:computeParityVecs}).

\hamCompute privately computes the Hamming distance between the parity vectors and compares it with the distance threshold. 
\alice sends \alicesParityVec to \bob, with each bit encrypted individually (\lineref{step:hamQueryRes:sendEnc}). 
\bob computes $\encrypt{\hamDist{\alicesParityVec}{\bobsParityVec}}$ by computing the Hamming distance over encrypted bits (\lineref{step:hamQueryRes:computeHamDist}). 
The encryption scheme used is additively homomorphic, which ensures that \bob can compute over the encrypted bits.
This gives $\encrypt{\hamDist{\alicesVector}{\bobsVector}}$ due to following fact: since each pair $\alicesSubVecIdx{i}, \bobsSubVecIdx{i}$ can differ in at most one bit due to Fact \ref{fact:balls_bins_1}, $\hamDist{\alicesSubVecIdx{i}}{\bobsSubVecIdx{i}} = \hamDist{\parity(\alicesSubVecIdx{i})}{\parity(\bobsSubVecIdx{i})}$. Then, $\hamDist{\alicesVector}{\bobsVector} = \sum\limits_{i = 1}^{\numOfBinsRef}\hamDist{\parity(\alicesSubVecIdx{i})}{\parity(\bobsSubVecIdx{i})} = \hamDist{\alicesParityVec}{\bobsParityVec}$ (see \figref{fig:sub_sample_example}).

Finally, \bob samples a key $\prfKey \getsr \finiteField$ and returns a set \keySet containing \prfKey "blinded" by random values, and available to \alice iff. $\hamDist{\vec{X}}{\vec{Y}} \in [0, \hamDistThreshold]$. Specifically, for each $i \in [0,\hamDistThreshold]$, \bob returns $r_i \times (\hamDist{\vec{X}}{\vec{Y}} - i) + \prfKey$ in the \keySet to \alice where $r_i \getsr \finiteField$. \alice obtains \prfKey only when $\hamDist{\vec{X}}{\vec{Y}} \in [0, \hamDistThreshold]$ (\linesref{step:hamQueryRes:computeSet}{step:hamQueryRes:output}). Otherwise, $r_i \times (\hamDist{\vec{X}}{\vec{Y}} - i) + \prfKey \getsr \finiteField$.

\begin{figure}
	\footnotesize
	\begin{mdframed}
		
		\noindent
		\textbf{\underline{Parameters:}}  \alice and \bob have vectors 
		$\alicesVector, \bobsVector \in \{0,1\}^{\genericVecLen}$ respectively. and a Hamming distance threshold $\hamDistThreshold \in [0,\genericVecLen/2)$. 
		
		\smallskip\noindent
		\textbf{\underline{Procedure \permAndPart:}}
		
		\alice and \bob run \tHamQueryRestricted with \alicesVector and \bobsVector as inputs. \alice obtains $\alicesSetFromVecBins := \{\vec{x}_1, \ldots, \vec{x}_{\numOfBinsRef}\}$ and \keySet, while \bob obtains $\bobsSetFromVecBins := \{\vec{y}_1, \ldots, \vec{y}_{\numOfBinsRef}\}$ where $\numOfBinsRef = \numOfBins$.  
			
		\smallskip\noindent
		\textbf{\underline{Procedure \oneSidedSetReconBlind:}}
		\begin{enumerate}[nosep,leftmargin=1.5em,labelwidth=*,align=left,labelsep=0.25em]	
		
				\item \alice and \bob select a set of \numPointsForInterpolAfterBins points in \finiteField, $\setOfXCoords \assign \setOfXCoordsDefnBins$ such that none of the points  are in \alicesSetFromVecBins and \bobsSetFromVecBins. 
		
			\item \alice and \bob compute $\alicesPolyFromVec \assign \polyFromSet{\alicesSetFromVecBins}$ and $\bobsPolyFromVec \assign \polyFromSet{\bobsSetFromVecBins}$ respectively. \bob samples two random polynomials $\alicesRandomPoly, \bobsRandomPoly \getsr \polyField$ of degree \numOfBinsRef.  
				
			\item For each $x_\ptIdx \in \setOfXCoords$, \alice sends \alicesPolyFromVecEval{x_\ptIdx} to \oleFunc, while \bob sends 
			\alicesRandomPolyEval{x_\ptIdx}, and $\bobsRandomPolyEval{x_\ptIdx} \times \bobsPolyFromVecEval{x_\ptIdx} + \finiteFieldPRF(\kappa, \ptIdx)$. \alice obtains $\alicesVoleOutputPolyEval{x_\ptIdx} \assign \tPSIOutputPolyEval{x_\ptIdx} + \finiteFieldPRF(\prfKey, \ptIdx)$
			
			\label{step:ham_query:ole_step_1}
			
			\item For each element $\prfKey_\setIdx \in \keySet$, \alice obtains a candidate key, $\prfKey^{'}_\setIdx \assign \decrypt{\prfKey_\setIdx}$. \alice computes  \label{step:ham_query:comp_pts}
			
			\begin{center}
				$\setOfPtsForInterpolation_\setIdx \assign  \{(x_\ptIdx, y_\ptIdx) : y_\ptIdx \assign \frac{\alicesVoleOutputPolyEval{x_\ptIdx} - \finiteFieldPRF(\prfKey^{'}_{\setIdx}, \ptIdx)}{\alicesPolyFromVecEval{x_\ptIdx}}\}$.
			\end{center}
			
			\item For each $\setIdx \in [0, \hamDistThreshold]$, \alice interpolates $\setOfPtsForInterpolation_\setIdx$ with a rational function $r(x) \assign \frac{\numeratorPoly}{\denominatorPoly}$ and checks that \denominatorPoly is a factor of \alicesPolyFromVec. If so, \alice outputs $\alicesSetFromVecBins \setminus \bobsSetFromVecBins$ which contains the roots of \denominatorPoly, otherwise \alice outputs $\bot$. \label{step:ham_query:interpolation}
		\end{enumerate}

		\smallskip
		\textbf{\underline{Procedure $\mathsf{Recover}$}:} 
		
		If \alice receives $\bot$ from \oneSidedSetReconBlind then output $\bot$. Otherwise, upon receiving $\alicesSetFromVecBins \setminus \bobsSetFromVecBins$, obtain \bobsVector from \bob. Output $(\alicesVector, \bobsVector)$. 
\end{mdframed}
\vspace{-15pt }
\caption{\tHamQuery: Threshold Hamming query protocol \label{fig:hamQuery}}
\end{figure}

\myparagraph{General Hamming Queries}
We are ready to combine \tHamQueryRestricted and \oneSidedSetRecon to achieve a secure threshold Hamming query protocol, \tHamQuery. 
The protocol requires a PRF over a finite field  $\finiteFieldPRF: \finiteField \times \finiteField \rightarrow \finiteField$. The outline of this integration is (see \figref{fig:hamQuery}):

\begin{enumerate}[nosep,leftmargin=1.6em,labelwidth=*,align=left]
	
	\item \alice and \bob send \alicesVector and \bobsVector to \tHamQueryRestricted respectively. \alice obtains \keySet from which she can obtain \prfKey only when $\hamDist{\alicesVector}{\bobsVector} \notin (\hamDistThreshold, 2\hamDistThreshold]$.

	\item \alice builds set $\alicesSetFromVecBins := \{\vec{x}_1, \ldots, \vec{x}_{\numOfBinsRef}\}$ where $\vec{x}_1, \ldots, \vec{x}_{\numOfBinsRef}$ are the sub-vectors created during \permAndPart in \tHamQueryRestricted (see \lineref{step:hamQueryRes:createVecs} of \figref{fig:hamQueryRestricted}). \bob builds $\bobsSetFromVecBins := \{\vec{y}_1, \ldots, \vec{y}_{\numOfBinsRef}\}$.

	\item \alice and \bob run \oneSidedSetRecon with \alicesSetFromVecBins and \bobsSetFromVecBins as inputs and threshold \hamDistThreshold with one change: \bob modifies his inputs to \oleFunc such that \alice obtains a ``blinded'' set of evaluations, i.e., for $k \in [1, \numOfBinsRef + 2\hamDistThreshold +1]$, \alice obtains $\tPSIOutputPolyEval{x_\ptIdx}  + \finiteFieldPRF(\kappa,  \ptIdx)$ (see \figref{fig:hamQueryGenExample}). 
	
\end{enumerate}

\begin{restatable}{thm}{securityHamQuery}
	\label{thm:security_ham_query}
	Assuming that there exists a semantically-secure additively homomorphic encryption scheme that produces \bigO{\secParam}-bit ciphertexts,  and that there is a protocol for \oleFunc that requires \bigO{\secParam} bits of communication, for false positive rate $\fpr \in (0,0.5)$, \tHamQuery realizes \tHamQueryFunc with \bigO{\frac{\hamDistThreshold^2}{\fpr} \cdot \secParam} bits of communication and compute costs polynomial in \hamDistThreshold. 
\end{restatable}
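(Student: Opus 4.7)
The plan is to decompose the argument into (i) correctness under the false-positive rate $\fpr$, (ii) simulation-based security against each semi-honest party, and (iii) the claimed communication and computation bounds. For correctness, I would first invoke Fact~\ref{fact:balls_bins_1} on the $\numOfBinsRef = \numOfBins$ partitions created by \permAndPart: except with probability at most $\fpr$, every bin contains at most one differing bit, so $\hamDist{\alicesParityVec}{\bobsParityVec} = \hamDist{\alicesVector}{\bobsVector}$ and $\setDiffCard{\alicesSetFromVecBins}{\bobsSetFromVecBins} = \hamDist{\alicesVector}{\bobsVector}$. When $\hamDistLessThanThreshold{\alicesVector}{\bobsVector}$, \alice recovers $\prfKey$ from the $i = \hamDist{\alicesParityVec}{\bobsParityVec}$ entry of \keySet, uses $\finiteFieldPRF(\prfKey, k)$ to unblind each OLE output, and then \oneSidedSetRecon returns $\setDiff{\alicesSetFromVecBins}{\bobsSetFromVecBins}$; the $\mathsf{Recover}$ step reconstructs \bobsVector bit-by-bit. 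When $\hamDistGreaterThanThreshold{\alicesVector}{\bobsVector}$, \alice never recovers $\prfKey$, so the blinded OLE outputs appear pseudorandom and \oneSidedSetRecon outputs $\bot$.

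For security against a corrupt \alice, I would build a simulator that, given only the ideal output of \tHamQueryFunc, produces a transcript computationally indistinguishable from her real view. That view consists of \keySet and the blinded OLE outputs $\{P(x_k) R_1(x_k) + R_2(x_k) Q(x_k) + \finiteFieldPRF(\prfKey, k)\}_k$. When the ideal output is $\bot$, the simulator samples $\hamDistThreshold + 1$ uniform elements of \finiteField for \keySet and uniform elements for each OLE output. The hybrid argument has two steps: first, since $\hamDist{\alicesParityVec}{\bobsParityVec} \ne i$ for every $i \in [0, \hamDistThreshold]$ (with probability $\geq 1 - \fpr$), each real entry $r_i (\hamDist{\alicesParityVec}{\bobsParityVec} - i) + \prfKey$ is uniform in \finiteField over the choice of $r_i$, so $\prfKey$ remains hidden; second, conditioned on $\prfKey$ being hidden, PRF security makes $\{\finiteFieldPRF(\prfKey, k)\}_k$ indistinguishable from uniform, causing the blinded OLE outputs to collapse to uniform. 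When the ideal output is $(\alicesVector, \bobsVector)$, the simulator derives $\setDiff{\alicesSetFromVecBins}{\bobsSetFromVecBins}$ from the ideal output, invokes the standard \oneSidedSetRecon simulator on $(\alicesSetFromVecBins, \bobsSetFromVecBins)$, and patches the correct entry of \keySet with the simulated $\prfKey$.

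Security against a corrupt \bob is simpler: his view consists only of $\numOfBinsRef$ AHE ciphertexts of the bits of \alicesParityVec together with the $\bot$ outputs of \oleFunc, so a simulator handing \bob encryptions of an all-zero parity vector is indistinguishable by IND-CPA security of the AHE scheme. For complexity, \alice transmits $\numOfBins$ AHE ciphertexts of $\bigO{\secParam}$ bits in \tHamQueryRestricted, \bob returns $\hamDistThreshold + 1$ field elements, and the modified \oneSidedSetRecon invokes \oleFunc at $\numOfBinsRef + 2\hamDistThreshold + 1$ points of $\bigO{\secParam}$ bits each, summing to $\bigO{\hamDistThreshold^2/\fpr \cdot \secParam}$ bits; \alice's dominant computation is rational-function interpolation and the factorization check on a degree-$\bigO{\hamDistThreshold^2/\fpr}$ polynomial, which is polynomial in $\hamDistThreshold$ for fixed $\fpr$. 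The main obstacle will be making the simulator against \alice fully rigorous in the regime $\hamDist{\alicesVector}{\bobsVector} \in (\hamDistThreshold, 2\hamDistThreshold)$, where Proposition~2 of \thmref{thm:set_recon_claims} shows that an \emph{unblinded} \oneSidedSetRecon transcript would leak \bobsSetFromVecBins outright; carrying the reduction through requires carefully chaining PRF pseudorandomness with the guarantee that \alice provably fails to recover $\prfKey$ in this regime, so that the blinding precisely cancels the leakage identified by \thmref{thm:set_recon_claims}.
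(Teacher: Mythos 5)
Your overall decomposition — correctness via Fact~\ref{fact:balls_bins_1}, simulation against each semi-honest party, and a per-message complexity count — matches the paper's, and your treatment of \bob's view and of the case $\hamDist{\alicesVector}{\bobsVector} \le \hamDistThreshold$ is right. But there is a genuine gap in the corrupt-\alice simulation, and you have also misidentified which regime is the dangerous one.

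Your correctness paragraph asserts that when $\hamDistGreaterThanThreshold{\alicesVector}{\bobsVector}$, ``\alice never recovers $\prfKey$,'' and your hybrid argument relies on $\hamDist{\alicesParityVec}{\bobsParityVec} \ne i$ for all $i\in[0,\hamDistThreshold]$ with probability $\ge 1-\fpr$. That reasoning is only sound when $\hamDist{\alicesVector}{\bobsVector} \in (\hamDistThreshold, 2\hamDistThreshold]$, because that is the regime where Fact~\ref{fact:balls_bins_1} guarantees at most one differing bit per bin and hence $\hamDist{\alicesParityVec}{\bobsParityVec} = \hamDist{\alicesVector}{\bobsVector} > \hamDistThreshold$. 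When $\hamDist{\alicesVector}{\bobsVector} > 2\hamDistThreshold$, Fact~\ref{fact:balls_bins_1} no longer applies: bins can receive several differing bits, even numbers of flips cancel in the parity, and $\hamDist{\alicesParityVec}{\bobsParityVec}$ can land anywhere in $[0,\hamDistThreshold]$ with non-negligible probability. In that event \alice \emph{does} recover $\prfKey$ and can unblind every evaluation point, so your claim that $\prfKey$ stays hidden and the blinded outputs collapse to uniform is simply false on this branch. Your closing remark flags $(\hamDistThreshold, 2\hamDistThreshold)$ as the ``main obstacle,'' but that is precisely the regime your PRF hybrid already handles cleanly; the regime that actually needs extra machinery is $\hamDist{\alicesVector}{\bobsVector} > 2\hamDistThreshold$.

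The missing step is the one the paper supplies: show that when $\hamDist{\alicesVector}{\bobsVector} > 2\hamDistThreshold$, the derived sets nevertheless satisfy $\setDiffCard{\alicesSetFromVecBins}{\bobsSetFromVecBins} \ge 2\hamDistThreshold$ with probability at least $1-\fpr$ (a second balls-and-bins bound, counting \emph{non-empty} bins rather than odd-parity bins — Lemma~\ref{lemma:balls_bins_2} / Lemma~\ref{lemma:sub_sample_set_recon} in \appref{app:hamming_query_poly}), and then invoke Proposition~1 of \thmref{thm:set_recon_claims}: when the set difference is at least $2\hamDistThreshold$, the \emph{unblinded} \oneSidedSetRecon transcript is already statistically hiding, so it does not matter whether \alice recovers $\prfKey$. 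Only with this additional argument can the simulator in the $\bot$ branch output a transcript indistinguishable from the real one across all of $\hamDistGreaterThanThreshold{\alicesVector}{\bobsVector}$. Without it, your simulator is indistinguishable only on the subrange $(\hamDistThreshold, 2\hamDistThreshold]$, which does not suffice.
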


\begin{proofsketch}
	The communication cost of the protocol is straightforward. \alice sends  $|\alicesParityVec| = \numOfBins$ encrypted bits to \bob. \bob sends back the encrypted $\keySet$ with $| \keySet| = \hamDistThreshold$ ciphertexts. Finally, there are $\numOfBins + 2\hamDistThreshold + 1$ calls to \oleFunc, each of which requires \bigO{\secParam} bits of communication. 	
	
	The following arguments show that the protocol is secure. \alice can ``unblind'' and obtain the correct evaluations of \tPSIOutputPoly iff she has obtained \prfKey in Step 1, which happens with high probability (at least $1 - \epsilon$) when $\hamDist{\alicesVector}{\bobsVector} \notin (\hamDistThreshold, 2\hamDistThreshold]$. Otherwise, \alice obtains random points as evaluations of \tPSIOutputPoly which reveals no information regarding \bobsSetFromVecBins. 
	If $\hamDist{\alicesVector}{\bobsVector} > 2\hamDistThreshold$, \alice may still obtain $\prfKey \in \keySet$ since Fact \ref{fact:balls_bins_1} is applicable only when $\hamDist{\alicesVector}{\bobsVector} \le 2\hamDistThreshold$. We show in \appref{app:hamming_query_poly} that $\setDiffCard{\alicesSetFromVecBins}{\bobsSetFromVecBins} \geq 2\hamDistThreshold$ with high probability (at least $1 - \epsilon$), and as \thmref{thm:set_recon_claims} shows, when $\setDiffCard{\alicesSetFromVecBins}{\bobsSetFromVecBins} \geq 2\hamDistThreshold$, \alice learns nothing about $\setDiff{\alicesSetFromVecBins}{\bobsSetFromVecBins}$ from the evaluations of \tPSIOutputPoly.
\end{proofsketch}


\tikzset{every picture/.style={line width=0.75pt}} 
\begin{figure}
	\centering
	\tikzset{every picture/.style={line width=0.75pt}} 
	
	\begin{tikzpicture}[x=0.75pt,y=0.75pt,yscale=-1,xscale=0.5]

		\draw (100,5) node [anchor=north west][inner sep=0.75pt]    {\footnotesize \underline{\alice}};

		\draw (1,20) node [anchor=north west][inner sep=0.75pt]    {\footnotesize Subsample \alicesVector: $\alicesSetFromVecBins := \{\vec{x}_1, \ldots, \vec{x}_{N}\}$};

		\draw (1,35) node [anchor=north west][inner sep=0.75pt]    {\footnotesize $\alicesPolyFromVec := \polyFromSet{\alicesSetFromVecBins}$};

		\draw (500,5) node [anchor=north west][inner sep=0.75pt]    {\footnotesize \underline{\bob}};

		\draw (350,20) node [anchor=north west][inner sep=0.75pt]    {\footnotesize Subsample \bobsVector: $\bobsSetFromVecBins := \{\vec{y}_1, \ldots, \vec{y}_{N}\}$};

		\draw (400,35) node [anchor=north west][inner sep=0.75pt]    {\footnotesize $\bobsPolyFromVec := \polyFromSet{\bobsSetFromVecBins}$};	
		
		\draw (400,51) node [anchor=north west][inner sep=0.75pt]    {\footnotesize $\alicesRandomPoly, \bobsRandomPoly \getsr \polyField$};

		\draw   (200,71) -- (453.5,71) -- (453.5,101) -- (200,101) -- cycle ;
		
		\draw    (7.5,80) -- (200,80) ;
		\draw [shift={(198,80)}, rotate = 180] [color={rgb, 255:red, 0; green, 0; blue, 0 }  ][line width=0.75]    (10.93,-3.29) .. controls (6.95,-1.4) and (3.31,-0.3) .. (0,0) .. controls (3.31,0.3) and (6.95,1.4) .. (10.93,3.29)   ;

		\draw    (585.5,87) -- (453,87) ;
		\draw [shift={(455,87)}, rotate = 359.74] [color={rgb, 255:red, 0; green, 0; blue, 0 }  ][line width=0.75]    (10.93,-3.29) .. controls (6.95,-1.4) and (3.31,-0.3) .. (0,0) .. controls (3.31,0.3) and (6.95,1.4) .. (10.93,3.29)   ;

		\draw    (1,97) -- (200,97) ;
		\draw [shift={(1,97)}, rotate = 0.32] [color={rgb, 255:red, 0; green, 0; blue, 0 }  ][line width=0.75]    (10.93,-3.29) .. controls (6.95,-1.4) and (3.31,-0.3) .. (0,0) .. controls (3.31,0.3) and (6.95,1.4) .. (10.93,3.29)   ;

		\draw (205,80.4) node [anchor=north west][inner sep=0.75pt]    {\small $\tHamQueryRestricted$};
		\draw (90,65) node [anchor=north west][inner sep=0.75pt]    {$\alicesVector$};
		
		\draw (500,70) node [anchor=north west][inner sep=0.75pt]    {$\bobsVector, \prfKey$};
		
		\draw (1,80) node [anchor=north west][inner sep=0.75pt]    {\footnotesize $\hamDist{\alicesVector}{\bobsVector} \leq \hamDistThreshold ? : \prfKey, \bot$};

		\draw   (279,126) -- (373.5,126) -- (373.5,150) -- (279,150) -- cycle ;
		
		\draw    (137.5,130) -- (271,130) ;
		\draw [shift={(273,130)}, rotate = 180] [color={rgb, 255:red, 0; green, 0; blue, 0 }  ][line width=0.75]    (10.93,-3.29) .. controls (6.95,-1.4) and (3.31,-0.3) .. (0,0) .. controls (3.31,0.3) and (6.95,1.4) .. (10.93,3.29)   ;
		
		\draw    (495.5,130) -- (379.5,130) ;
		\draw [shift={(377.5,130)}, rotate = 359.74] [color={rgb, 255:red, 0; green, 0; blue, 0 }  ][line width=0.75]    (10.93,-3.29) .. controls (6.95,-1.4) and (3.31,-0.3) .. (0,0) .. controls (3.31,0.3) and (6.95,1.4) .. (10.93,3.29)   ;

		\draw    (575.5,148) -- (380.5,148) ;
		\draw [shift={(378.5,148)}, rotate = 359.74] [color={rgb, 255:red, 0; green, 0; blue, 0 }  ][line width=0.75]    (10.93,-3.29) .. controls (6.95,-1.4) and (3.31,-0.3) .. (0,0) .. controls (3.31,0.3) and (6.95,1.4) .. (10.93,3.29)   ;
		
		\draw    (275.5,148) -- (1,148) ;
		\draw [shift={(2,148)}, rotate = 0.32] [color={rgb, 255:red, 0; green, 0; blue, 0 }  ][line width=0.75]    (10.93,-3.29) .. controls (6.95,-1.4) and (3.31,-0.3) .. (0,0) .. controls (3.31,0.3) and (6.95,1.4) .. (10.93,3.29)   ;

		\draw (300,130.4) node [anchor=north west][inner sep=0.75pt]    {$\oleFunc$};
		\draw (388,114) node [anchor=north west][inner sep=0.75pt]    {\footnotesize $\alicesRandomPolyEval{x_1} $};
		\draw (388,153.4) node [anchor=north west][inner sep=0.75pt]    {\footnotesize $\bobsRandomPolyEval{x_1} \bobsPolyFromVecEval{x_1} + \finiteFieldPRF(\prfKey, 1)$};
		\draw (184,114) node [anchor=north west][inner sep=0.75pt]    {\footnotesize $\alicesPolyFromVecEval{x_1}$};
		\draw (1,153.4) node [anchor=north west][inner sep=0.75pt]    {\footnotesize $\tPSIOutputPolyEval{x_1} + \finiteFieldPRF(\prfKey, 1)$};

		\draw (316,165) node [anchor=north west][inner sep=0.75pt]   [align=left] {...};
		\draw (316,165) node [anchor=north west][inner sep=0.75pt]   [align=left] {...};

		\draw   (279,186) -- (373.5,186) -- (373.5,210) -- (279,210) -- cycle ;
		
		\draw    (137.5,190) -- (271,190) ;
		\draw [shift={(273,190)}, rotate = 180] [color={rgb, 255:red, 0; green, 0; blue, 0 }  ][line width=0.75]    (10.93,-3.29) .. controls (6.95,-1.4) and (3.31,-0.3) .. (0,0) .. controls (3.31,0.3) and (6.95,1.4) .. (10.93,3.29)   ;
		
		\draw    (495.5,190) -- (379.5,190) ;
		\draw [shift={(377.5,190)}, rotate = 359.74] [color={rgb, 255:red, 0; green, 0; blue, 0 }  ][line width=0.75]    (10.93,-3.29) .. controls (6.95,-1.4) and (3.31,-0.3) .. (0,0) .. controls (3.31,0.3) and (6.95,1.4) .. (10.93,3.29)   ;

		\draw    (575.5,208) -- (380.5,208) ;
		\draw [shift={(378.5,208)}, rotate = 359.74] [color={rgb, 255:red, 0; green, 0; blue, 0 }  ][line width=0.75]    (10.93,-3.29) .. controls (6.95,-1.4) and (3.31,-0.3) .. (0,0) .. controls (3.31,0.3) and (6.95,1.4) .. (10.93,3.29)   ;
		
		\draw    (275.5,208) -- (1,208) ;
		\draw [shift={(2,208)}, rotate = 0.32] [color={rgb, 255:red, 0; green, 0; blue, 0 }  ][line width=0.75]    (10.93,-3.29) .. controls (6.95,-1.4) and (3.31,-0.3) .. (0,0) .. controls (3.31,0.3) and (6.95,1.4) .. (10.93,3.29)   ;

		\draw (300,190.4) node [anchor=north west][inner sep=0.75pt]    {$\oleFunc$};
		\draw (388,174) node [anchor=north west][inner sep=0.75pt]    {\footnotesize $\alicesRandomPolyEval{x_N} $};
		\draw (388,213.4) node [anchor=north west][inner sep=0.75pt]    {\footnotesize $\bobsRandomPolyEval{x_N} \bobsPolyFromVecEval{x_N} + \finiteFieldPRF(\prfKey, N)$};
		\draw (184,174) node [anchor=north west][inner sep=0.75pt]    {\footnotesize $\alicesPolyFromVecEval{x_N}$};
		\draw (1,213.4) node [anchor=north west][inner sep=0.75pt]    {\footnotesize $\tPSIOutputPolyEval{x_N} + \finiteFieldPRF(\prfKey, N)$};

		


	\end{tikzpicture}
	\vspace{-0.4cm}
	\caption{\small Using OLE and \tHamQueryRestricted in \tHamQuery \label{fig:hamQueryGenExample}}
\end{figure}

%

\subsection{\hamAwarePSIProtocol: Hamming DA-PSI from \tHamQuery}
\label{sec:hammingPSI:protocol}
Building a Hamming DA-PSI protocol based on the Hamming query mechanism described so far is straightforward. Let \alicesInputVecSetDefn and \bobsInputVecSetDefn be \alice's and \bob's inputs to the Hamming DA-PSI protocol. Then, for $(i, j) \in \genericSetSize \times \genericSetSize$, \alice and \bob run \tHamQuery with $\alicesVector_i$ and $\bobsVector_j$ 
as inputs. 

We present the full protocol, denoted \hamAwarePSIProtocol, in \appref{app:hamming_psi} with a further optimization using vector OLE (see \secref{sec:background}) to batch the OLEs in \oneSidedSetRecon required across all the instantiations. This optimization 
improves communication costs and compute times without impacting security. 

\begin{restatable}{thm}{securityHamPSI}
	\label{thm:security_ham_psi}
	Assuming that there exists a semantically secure additively homomorphic encryption scheme, and a protocol securely realizing \voleFunc with \bigO{\genericSetSize \secParam} bits of communication, there is a Hamming DA-PSI protocol 
	which securely realizes \hamAwarePSIFunc with \hamPSICommComplexity bits of communication where $\fpr \in (0,0.5)$ is the false positive rate. 
\end{restatable}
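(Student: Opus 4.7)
\begin{proofsketch}
The plan is to reduce the security and complexity of \hamAwarePSIProtocol to that of its component \tHamQuery instances (\thmref{thm:security_ham_query}) via a hybrid argument, and to account for the VOLE batching optimization separately.

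First I would establish the communication bound. The full \hamAwarePSIProtocol runs one instance of \tHamQuery for each of the \genericSetSize \fieldMult \genericSetSize pairs $(\alicesVector_i, \bobsVector_j) \in \alicesInputVecSet \times \bobsInputVecSet$. Per \thmref{thm:security_ham_query}, each instance requires
$\bigO{\frac{\hamDistThreshold^2}{\fpr}\cdot\secParam}$ bits, of which the \hamCompute step contributes $\bigO{\frac{\hamDistThreshold^2}{\fpr}\cdot\secParam}$ AHE ciphertexts and the \oneSidedSetRecon sub-protocol contributes $\numPointsForInterpolAfterBins = \bigO{\frac{\hamDistThreshold^2}{\fpr}}$ calls to \oleFunc. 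The VOLE optimization replaces the $\bigO{\genericSetSize^2 \cdot \frac{\hamDistThreshold^2}{\fpr}}$ individual \oleFunc calls (scalar inputs for \alice, scalar inputs for \bob) across all pairs with a batched \voleFunc invocation on vectors of length $\bigO{\genericSetSize^2 \cdot \frac{\hamDistThreshold^2}{\fpr}}$. By the assumed $\bigO{\genericSetSize \secParam}$ per-entry cost of \voleFunc, the batched call is $\bigO{\genericSetSize^2 \cdot \frac{\hamDistThreshold^2}{\fpr} \cdot \secParam}$ bits. Combining the two contributions yields the stated \hamPSICommComplexity bound.

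For security, I would construct simulators $\simm_A$, $\simm_B$ for \alice and \bob and argue indistinguishability from the real view via a standard hybrid sequence over the $\genericSetSize^2$ pairs. Starting from the real execution, I would replace the transcript of each \tHamQuery pair $(\alicesVector_i, \bobsVector_j)$ one at a time with a simulated transcript produced by the simulator guaranteed by \thmref{thm:security_ham_query}, which only takes as input the ideal output $\{(\alicesVector_i,\bobsVector_j) \in S\}$ for that pair together with the cardinality of the counterparty's set. Each hop is computationally indistinguishable by the semantic security of the AHE scheme and the ideal guarantee of \oleFunc (used inside \tHamQuery). The VOLE batching step is argued by invoking the ideal \voleFunc functionality: since \alice's output from a batched VOLE is distributionally identical to the concatenation of the outputs of the corresponding individual \oleFunc calls with the same inputs, and \bob learns $\bot$ in both cases, replacing the batched \voleFunc with the sequence of simulated \oleFunc outputs (from the per-pair simulators) is a syntactic change that preserves the hybrid's distribution.

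The main obstacle I anticipate is handling the correctness side across $\genericSetSize^2$ simultaneous invocations, specifically arguing that the per-pair false-positive rate \fpr from \tHamQuery does not aggregate in a way that breaks either the $S \subseteq \alicesSet \times \bobsSet$ output specification or the security hybrid. Since \hamAwarePSIFunc permits an arbitrary per-pair \fpr, the appropriate reading is that each pair independently satisfies the guarantee of \tHamQueryFunc; this must be spelled out explicitly so that the ideal functionality invoked by the simulator is exactly \hamAwarePSIFunc (with the per-pair probability guarantee) rather than a joint-event variant. A secondary subtlety is ensuring the keys \prfKey and permutations \randPerm used in distinct \tHamQuery instances are independently sampled, so that the leakage profile across pairs does not correlate; this lets the hybrid hops commute and reduces the full-protocol argument to $\genericSetSize^2$ independent applications of \thmref{thm:security_ham_query}.
\end{proofsketch}
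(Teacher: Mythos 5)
Your proof takes essentially the same route as the paper's: the communication bound is obtained by counting over the $\genericSetSize^2$ pairs, and security follows by a hybrid reduction to \thmref{thm:security_ham_query}, with the OLE-to-VOLE batching justified by the ideal functionality. The paper organizes this slightly differently---it first packages one \alice-element-versus-all-of-$\bobsSet$ into the subprotocol \HamContainQuery (\figref{fig:ham_containment_query}), then runs $\genericSetSize$ independent instances of it---but this is only a bookkeeping choice, not a different argument. Your direct hybrid over the $\genericSetSize^2$ per-pair instances is sound, and your explicit remarks on the per-pair reading of $\fpr$ and on the independence of the per-instance permutations and keys are useful points that the paper's proof sketch leaves implicit.

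The one thing you should correct is your description of the VOLE batching, which as written misstates both what the primitive can do and what the protocol does. You assert that the $\bigO{\genericSetSize^2\cdot\frac{\hamDistThreshold^2}{\fpr}}$ individual \oleFunc calls across all pairs are replaced by a single batched \voleFunc invocation on a vector of that length. That is not possible: in \voleFunc, \alice contributes one scalar $x$ and learns $\vec{z}=\vec{u}\,x+\vec{v}$, so every OLE folded into a single VOLE instance must share \alice's scalar input. Across pairs and evaluation points, \alice's input to the OLE is $P_i(x_\ptIdx)$, which varies with both the index $i$ of her set element and the evaluation point $x_\ptIdx$. The protocol therefore batches only along \bob's $j$-direction: for each fixed $(i,\ptIdx)$, one VOLE of vector length $\genericSetSize$ covers the $\genericSetSize$ correlations that share the scalar $P_i(x_\ptIdx)$. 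This yields $\numPointsForInterpolAfterBins$ VOLEs of length $\genericSetSize$ per \HamContainQuery, and $\genericSetSize\cdot(\numPointsForInterpolAfterBins)$ such VOLEs in total, which---using the assumed $\bigO{\genericSetSize\secParam}$ cost per length-$\genericSetSize$ VOLE---gives the stated $\bigO{\genericSetSize^2\cdot\frac{\hamDistThreshold^2}{\fpr}\cdot\secParam}$ bound. Your arithmetic lands on the same total because the number of correlations is identical either way, but the structural claim of one monolithic VOLE should be replaced by this per-$(i,\ptIdx)$ batching.
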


\begin{figure}[]
	\begin{mdframed}
		\footnotesize
		
		\noindent
		\textbf{\underline{Parameters}:} \alice and \bob have vectors $\alicesVector, \bobsVector \in \{0,1\}^{\genericVecLen}$, respectively, and a Hamming distance threshold \hamDistThreshold.
		

		\noindent\smallskip
		\textbf{\underline{Procedure $\mathsf{Map}$}:} Follows the steps of \tHamQueryLite.
		

		\smallskip\noindent
		\textbf{\underline{Procedure \oneSidedSetReconExp}:}
		
		\begin{enumerate}[nosep,leftmargin=1.5em,labelwidth=*,align=left,labelsep=0.25em]

			\item \alice and \bob select a set of \numPointsForInterpolPlus points in \finiteField, $\setOfXCoords \assign \setOfXCoordsDefnPlus$ such that none of the points  are in \alicesSetFromVec and \bobsSetFromVec.

			\item \alice encodes \alicesSetFromVec  in the polynomial \alicesPoly = \polyFromSet{\alicesSetFromVec} and \bob encodes \bobsSetFromVec in the polynomial \bobsPoly = \polyFromSet{\bobsSetFromVec}. \bob samples two degree-\genericVecLen random polynomials $\alicesRandomPoly, \bobsRandomPoly \getsr \polyField$. \label{step:set_recon_exp:poly_compute}
			
			\item  For each $x_\ptIdx \in \setOfXCoords$, \alice sends \alicesPolyFromVecEval{x_\ptIdx} and \bob sends \alicesRandomPolyEval{x_\ptIdx} and $\bobsRandomPolyEval{x_\ptIdx}\times \bobsPolyFromVecEval{x_\ptIdx}$ to \oleFunc. \alice learns \alicesVoleOutputPolyEval{x_\ptIdx} \assign \tPSIOutputPolyEval{x_\ptIdx} as the output of \oleFunc. \label{step:set_recon_exp:ole}
		
			\item   \alice computes and interpolates $\setOfPtsForInterpolation_0$ with a rational function of degree $\genericVecLen + \hamDistThreshold$ similar to \lineref{step:set_recon:interpolation}  of \oneSidedSetRecon \label{step:set_recon_exp:interpolate_one}. 
			
			\begin{center}
				$\setOfPtsForInterpolation_0 \assign  \{(x_\ptIdx, y_\ptIdx) : y_\ptIdx \assign \frac{\alicesVoleOutputPolyEval{x_\ptIdx}}{\alicesPolyFromVecEval{x_\ptIdx}}, ~\ptIdx \in [1, \genericVecLen + \hamDistThreshold + 1] \}$.
			\end{center}

			\item If \alice outputs \setDiff{\alicesSetFromVec}{\bobsSetFromVec} from the step above, then output \setDiff{\alicesSetFromVec}{\bobsSetFromVec} and abort. 
			Otherwise, for each $t \in (\hamDistThreshold/2, \hamDistThreshold]$, \alice computes all possible values of \setDiff{\alicesSetFromVec}{\bobsSetFromVec} such that $\setDiffCard{\alicesSetFromVec}{\bobsSetFromVec} = t$. Let $C_\mathsf{sub}$ be the set of all such sets across all $t \in (\hamDistThreshold/2, \hamDistThreshold]$. \label{step:set_recon_exp:set_computation}

			\item For each $C_i \in C_{\mathsf{sub}}$, \alice computes the polynomial $P_i(x) \assign \polyFromSet{C_{i}}$ and computes  the set of points \label{step:set_recon_exp:point_computation}
			
			\begin{center}
				$\setOfPtsForInterpolation_{i} \assign  \{(x_\ptIdx, y_\ptIdx) : y_\ptIdx \assign \frac{\alicesVoleOutputPolyEval{x_\ptIdx}}{P_i(x_\ptIdx)}, ~\ptIdx \in [1, \numPointsForInterpolPlus] \}$.
			\end{center}
			
			\item \alice interpolates $V_i$ for each $C_i \in C_\mathsf{sub}$ with a polynomial. If the degree of the interpolating polynomial is $\leq \genericVecLen + \degree{P_i(x)}$, \alice outputs $\setDiff{\alicesSetFromVec}{\bobsSetFromVec} \assign C_i$. \label{step:set_recon_exp:interpolate_two} 
			
			\end{enumerate}
			\noindent\smallskip
			\textbf{\underline{Procedure $\mathsf{Recover}$}:} Follows the steps of \tHamQueryLite
			

	\end{mdframed}
	\vspace{-15pt}
	\caption{\small \tHamQueryExp: Hamming queries with exponential compute costs. \label{fig:one_sided_set_recon_exp}}
\end{figure}

\subsection{\hamAwarePSIProtocolExp: Sub-Sampling Based Hamming DA-PSI}
\label{sec:hamming_psi:exp_compute}
So far we have discussed a way to build a Hamming DA-PSI protocol using \tHamQuery which fixes \tHamQueryLite by explicitly checking if the inputs, $\hamDist{\alicesVector}{\bobsVector} \in (\hamDistThreshold, 2\hamDistThreshold)$. However, there is an alternate way to fix the problem which leads to a more communication-efficient protocol, but at the cost of additional computation. The protocol uses a Hamming query protocol, denoted \tHamQueryExp, which relies only on OLE.

\myparagraph{\tHamQueryExp} The protocol is based on the findings of \thmref{thm:set_recon_claims}. Specifically, as the proof shows when \alice and \bob interpolate \tPSIOutputPoly at $\setSize{\setOfPtsForInterpolation} > 2\genericVecLen - \degreeOfGCD + 1$ points in \oneSidedSetRecon, \alice can retrieve $\setDiff{\alicesSetFromVec}{\bobsSetFromVec}$ when $\setDiffCard{\alicesSetFromVec}{\bobsSetFromVec} = \genericVecLen - \degreeOfGCD$ with overwhelming probability (Proposition~2). On the other hand, when $\setDiffCard{\alicesSetFromVec}{\bobsSetFromVec} > \genericVecLen - \degreeOfGCD$, the evaluation points reveal nothing to \alice (Proposition~1). 

So, one way to fix \oneSidedSetRecon is by evaluating \tPSIOutputPoly at $(2\genericVecLen - \degreeOfGCD + 1) + 1 = (2\genericVecLen - (\genericVecLen - \hamDistThreshold) + 1) + 1 = \numPointsForInterpolPlus$ points. In this way, \alice learns \setDiff{\alicesSetFromVec}{\bobsSetFromVec} with overwhelming probability when $\setDiffCard{\alicesSetFromVec}{\bobsSetFromVec} \leq \hamDistThreshold$ but nothing otherwise. The modified protocol is called \oneSidedSetReconExp (\figref{fig:one_sided_set_recon_exp}). Similar to \oneSidedSetRecon, \alice and \bob compute polynomials \alicesPolyFromVec and \bobsPolyFromVec from their respective sets. Then, they evaluate \tPSIOutputPoly at \numPointsForInterpolPlus points using calls to \oleFunc. \alice first attempts to interpolate $\genericVecLen + \hamDistThreshold + 1$ points with a rational function of degree $\genericVecLen + \hamDistThreshold$ (\linesref{step:set_recon_exp:poly_compute}{step:set_recon_exp:interpolate_one}). Note if $\setDiffCard{\alicesSetFromVec}{\bobsSetFromVec} \leq \hamDistThreshold/2$, then this step will reveal \setDiff{\alicesSetFromVec}{\bobsSetFromVec} to \alice.

Otherwise, \alice computes for each $t \in (\hamDistThreshold/2, \hamDistThreshold]$, each possible value of $\setDiff{\alicesSetFromVec}{\bobsSetFromVec}$ such that $\setDiffCard{\alicesSetFromVec}{\bobsSetFromVec} = t$ (\lineref{step:set_recon_exp:set_computation}). Let $C_\mathsf{sub}$ be the set of all such sets. Then, for each $C_\setIdx \in C_\mathsf{sub}$, \alice computes $P_i(x) \assign \polyFromSet{C_{i}}$. Finally, \alice checks if a polynomial \numeratorPoly of degree $\leq \genericVecLen + \degree{P_i(x)}$ exists such that $\frac{\numeratorPoly}{P_i(x)}$ is consistent with the points obtained for the rational function $\frac{\tPSIOutputPoly}{P_i(x)}$ (\lineref{step:set_recon_exp:interpolate_two}). Due to Proposition~2, there is a negligible probability of obtaining a false positive, i.e., \alice finds \numeratorPoly when her guess for $\setDiff{\alicesSetFromVec}{\bobsSetFromVec}$ is incorrect. There are no false negatives.



\myparagraph{Reducing Search Space by Sub-Sampling}
The total search space for this process is over $\sum\limits_{t = \hamDistThreshold/2 + 1}^{\hamDistThreshold} \allPossibleComb{\genericVecLen}{t}$ guesses for \setDiff{\alicesSetFromVec}{\bobsSetFromVec}, and is not feasible with large vectors and distance thresholds.  However, as we will show in \secref{sec:eval}, when we replace $\mathsf{Map}$ with a sub-sampling algorithm~\cite{uzun2021cryptographic,uzun2021fuzzy,dodis2004fuzzy,canetti2021reusable} reducing large vectors to a small set of sub-vectors, this method outperforms the existing state of the art \cite{uzun2021fuzzy}. The cost savings come from the fact that our protocol only relies on cheap symmetric-key primitives while the protocol of \citet{uzun2021fuzzy} relies on fully homomorphic encryption. 
Based on this idea, we have built and implemented a DA-PSI protocol combining the sub-sampling algorithm from \cite{uzun2021cryptographic,uzun2021fuzzy,dodis2004fuzzy,canetti2021reusable} with \oneSidedSetReconExp, denoted \hamAwarePSIProtocolExp. More details of the protocol are presented in \appref{app:hamming_query_exp}.

\section{Protocol for Integer Distances}
\label{sec:intergerAwarePSI}

In this section, we present a DA-PSI protocol for $L_1$ distance of order 1
over integers, loosely termed as integer distance-aware PSI. 
The protocol requires \intAwarePSICommComplexity bits of communication for computing the 
intersection of two sets of size \genericSetSize where \intDistThreshold is a user-specified distance threshold. 


 \begin{figure}
	\begin{mdframed}
		\footnotesize
		\noindent
		\textbf{Parameters:~} Parties \alice and \bob.  Integer distance threshold $d \geq 0$. 
		
		\smallskip
		\noindent
		\textbf{Inputs:~} \alice has input $\alicesSet = \{a_i : a_i \in \IntegersPositive \cup \{0\} \}$. \bob has input $\bobsSet = \{b_j : b_j \in \IntegersPositive \cup \{0\} \}$. $\setSize{\alicesSet} = \setSize{\bobsSet} = \genericSetSize$. 
		
		\smallskip
		\noindent
		\textbf{Output:~} \alice and \bob learn $S \subseteq \alicesSet \times \bobsSet$ where if $(\alicesInt, \bobsInt) \in \alicesSet \times \bobsSet, |\alicesInt - \bobsInt| \leq \intDistThreshold$ then $\prob{(\alicesInt, \bobsInt) \in S} \ge \tpr$ and if $|\alicesInt - \bobsInt| > \intDistThreshold$ then $\prob{(\alicesInt, \bobsInt) \not\in S} \ge \tnr$.

	\end{mdframed}
	\vspace{-15pt}
	\caption{\small Ideal functionality \intAwarePSIFunc for Integer DA-PSI. \label{fig:int_aware_PSI_func}}
\end{figure}

%

\myparagraph{Ideal Functionality}
The ideal functionality for an integer distance-aware PSI is defined in \figref{fig:int_aware_PSI_func}. Note that the $(\alicesIthInt, \bobsJthInt) \in S$ only when $\bobsJthInt \in (\alicesIthInt - \intDistThreshold, \alicesIthInt + \intDistThreshold)$. The range excludes the boundary elements, $\alicesIthInt + \intDistThreshold$ and $\alicesIthInt - \intDistThreshold$. This is primarily for ease of description of the protocol and it is trivial to extend the functionality and the protocol to include the boundary elements. Also, while the functionality allows tunable true positive and true negative rates, \textit{the protocol we present is correct with probability $1.0$, i.e., $\tpr = \tnr = 1$}. 

Observe that an inefficient realization of \intAwarePSIFunc  immediately exists: \alice creates an augmented set, \augAlicesSet with all integers $(a - \intDistThreshold, a + \intDistThreshold)$ for each \elemInSet{\alicesInt}{\alicesSet}. Any generic PSI protocol be used for computing the intersection between the augmented set and \bobsSet.  This protocol however requires  \bigO{\genericSetSize \secParam \intDistThreshold} bits of communication, using a PSI protocol with communication cost scaling linearly in the set size.

%
%
%
%
%
%

\myparagraph{Key Idea}
To reduce overall communication, we will reduce the number of items in the augmented set. The key observation behind this reduction is that all integers in the neighborhood of an integer 
\elemInSet{\alicesInt}{\alicesSet}, \integersInNeighborhood  can be succinctly represented by a collection of bit strings corresponding to their binary representations. The total number of 
such strings required is sublinear in \intDistThreshold since multiple integers within a sequence will share prefixes, and the same common prefix can be used to represent multiple consecutive integers. For instance, the binary representation of 42 ($101010$) and 43 ($101011$) share the prefix $10101$. Both these integers can be represented by the string $10101\ast$ where $\ast$ denotes a wildcard bit. Leveraging this fact, the idea is to generate the least number of bit strings to represent all integers \integersInNeighborhood. The problem is reduced to string matching over these bit strings.

The augmenting process is discussed next. The protocol we will present 
allows one of the parties, say \alice, to learn the integer-aware intersection, and then this information can be shared with \bob using an extra round of communication. 

\myparagraph{Augmenting \alice's Set}
The augmented set \augAlicesSet includes fixed-length strings representing \integersInRange for each \elemInSet{\alicesInt}{\alicesSet}. These strings are obtained 
from the prefixes of fixed-length binary representations of all integers in the range. This fixed length, denoted \maxBitLength, may be determined from the universe from which the elements in \alicesSet and \bobsSet are drawn. 

Intuitively, the process is based on two observations. First, the integers in \integersInRange 
can differ only in their $\flooredLog{(2\intDistThreshold - 1)} + 1$ least significant bits if $2^k \leq a - d \leq a + d \leq 2^{k+1}$.  Second, the \maxBitLength-sized binary representations of all integers in $[2^{k}, 2^{k+1} - 1], k \in \IntegersPositive$ have a common prefix of length $\maxBitLength - k -1$.  So, all these integers can be represented by a string formed by 
appending $k + 1$ wildcard bits to the common prefix. Based on these observations, the idea is to recursively partition \integersInRange into smaller ranges of the form $[0, 2^{k} - 1]$ or $[2^{k}, 2^{k+1} - 1]$ and obtain a \textit{representative string} for each such range. 
More formally, to create representative strings for integers in \integersInRange, we identify the \textit{enclosing common} prefixes.

\begin{defn}
	Given any arbitrary set of bit strings, an enclosing common prefix of length \genericPrefLen $\leq$ \maxBitLength satisfies: 
	\begin{enumerate}[nosep,leftmargin=1.6em,labelwidth=*,align=left]
	
	\item There are $2^{(\maxBitLength - \genericPrefLen)}$ bit strings which have this prefix in common. 
	\item The bit strings which share this prefix do not have a common prefix of length $< \genericPrefLen$. 
	\end{enumerate}
\end{defn}

All identified enclosing common prefixes are appended with wildcard bits 
to generate representative strings.  We show later that for each \elemInSet{\alicesInt}{\alicesSet} the number of enclosing common prefixes 
is \bigO{\log \intDistThreshold}. Intuitively this is because the range of integers is recursively halved and each such range has a constant number of 
enclosing common prefixes. Thus, the augmented set contains \bigO{\genericSetSize \cdot \log \intDistThreshold} representative strings.


\smallskip\noindent\textbf{Example:} We are interested in the representative strings for all integers in the range (41, 56) (see \figref{fig:prefixTrie42}). The 8-bit binary representation of 42 is $00101010$ and the 8-bit binary representation of 55 is $00110111$. Integers in the range [42, 43] have a common enclosing prefix $0010101$, 
integers in the range [44, 47] have a common enclosing prefix $001011$ and the integers in the range [48, 55] have a common enclosing prefix $00110$. Thus, 8-bit representative strings for all integers in $(41,56)$ are $0010101$$\ast$, 
$001011$$\ast$$\ast$ and $00110$$\ast$$\ast$$\ast$.

\myparagraph{Augmenting \bob's Set}
To check whether an integer falls in any of the ranges in an augmented set, we need to check 
whether it shares a common prefix with any of the representative strings. The augmented set \augBobsSet
includes these strings. Specifically we need to check prefixes only of length $\genericPrefLen \in [\maxBitLength - \flooredLog{(2d -1)} -1, \maxBitLength]$ (as will be discussed later). Thus, representative strings for each integer \elemInSet{\bobsJthInt}{\bobsSet} is obtained by replacing the required number of least significant bits in the binary representation of \bobsJthInt with wildcard bits. Specifically, the first representative string is generated by replacing the least significant bit, the second string is generated by replacing the last two least significant bits and so on. All the representative strings for all \elemInSet{\bobsJthInt}{\bobsSet} is part of \augBobsSet. Thus, \setSize{\augBobsSet} = \bigO{\genericSetSize \cdot \log \intDistThreshold}.

\smallskip\noindent\textbf{Example:} Consider $b = 49$ ($00110001$). The 8-bit representative strings of 49 are $00110001$, $0011000\ast$, $001100\ast\ast$, $00110\ast\ast\ast$, $0011\ast\ast\ast\ast$. The set of  strings for integers in range $(41,56)$ and the strings for $49$ have the string $00110\ast\ast\ast$ in common which correctly shows that $b \in (41, 56)$.

%


\begin{figure}[t]
\centering
\begin{tikzpicture}[every tree node/.style={draw,circle},sibling
	distance=10pt, level distance=40pt, scale = 0.4]
	\tikzset{edge from parent/.style={draw, edge from parent path=
			{(\tikzparentnode) -- (\tikzchildnode)}}}
	\Tree [.R [.00101 [.0 [.\node [label={[black]right:A}, black] {1}; [ .\node [label={below:42}]{0}; ] [ .\node [label={below:43}]{1}; ]]]  [.\node [label={[black]right:B}, black] {1}; [.0 [ .\node [label={below:44}]{0};  ] [ .\node [label={below:45}]{1}; ] ]  [.1 [ .\node [label={below:46}]{0}; ] [ .\node [label={below:47}]{1}; ] ]]]
	[.\node [label={[black]right:C}, black]{00110}; [.0 [.0 [ .\node [label={below:48}]{0}; ] [.\node [label={below:49}]{1}; ]] [.1 [ .\node [label={below:50}]{0}; ] [ .\node [label={below:51}]{1}; ]]]  [.1 [.0 [ .\node [label={below:52}]{0};  ] [ .\node [label={below:53}]{1}; ] ]  [.1 [ .\node [label={below:54}]{0};  ] [  .\node [label={below:55}]{1}; ] ]]]]
\end{tikzpicture}
\vspace{-10pt}
\caption{\small Prefix trie examples. (a) Trie built over the binary representations of integers [42, 55]. The nodes marked A, B and C are the roots of the \mec subtries.\label{fig:prefixTrie42}} 
\end{figure}

\begin{thm}
	Assuming that there is a secure scheme for computing a private set intersection over sets of size \genericSetSize with \bigO{\genericSetSize \secParam} bits of communication. Then, there is a secure protocol realizing \intAwarePSIFunc with \bigO{\genericSetSize \secParam \log \intDistThreshold} bits of communication where \intDistThreshold is the specified distance threshold. 
\end{thm}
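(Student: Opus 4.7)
\begin{proofsketch}
The plan is to exhibit the construction sketched above and then bound (i) the size of the augmented sets, (ii) the correctness of the resulting matching, and (iii) what is leaked to each party by the underlying PSI. Concretely, the protocol \intAwarePSI is: \alice locally builds $\augAlicesSet \assign \bigcup_{a \in \alicesSet} \shortestEncPrefixes{a}{\intDistThreshold}$, where $\shortestEncPrefixes{a}{\intDistThreshold}$ is the set of \maxBitLength-bit wildcarded strings obtained from the enclosing common prefixes of \integersInRange; \bob builds $\augBobsSet \assign \bigcup_{b \in \bobsSet} \{\sigma_0(b), \sigma_1(b),\ldots,\sigma_L(b)\}$, where $\sigma_j(b)$ replaces the $j$ least significant bits of $b$'s binary representation with wildcards and $L = \flooredLog{(2\intDistThreshold-1)}+1$. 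They then invoke the assumed generic PSI on $(\augAlicesSet, \augBobsSet)$, and the party receiving the intersection decodes each matched string to the pair(s) $(a_i,b_j)$ it originates from, finally communicating the pair set to the other party in one extra round.

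First I would prove the size bound $|\augAlicesSet|, |\augBobsSet| = \bigO{\genericSetSize \log \intDistThreshold}$. For \augBobsSet this is immediate from the construction since each \bobsJthInt contributes $L+1 = \bigO{\log \intDistThreshold}$ strings. For \augAlicesSet I would argue by a dyadic decomposition: the interval \integersInRange splits into at most two ``halves'' aligned at a power of two, each of which is recursively covered by an enclosing common prefix of a range $[2^k, 2^{k+1}-1]$ or a strict sub-range; unrolling the recursion gives $\bigO{\log \intDistThreshold}$ prefixes per $a$.

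Next, correctness. I would show that for every $(a,b) \in \alicesSet \times \bobsSet$, $|a-b| < \intDistThreshold$ iff $\augAlicesSet \cap \augBobsSet$ contains a string derived from both $a$ and $b$. The forward direction is a direct check: if $b \in (a-\intDistThreshold, a+\intDistThreshold)$, then by construction some enclosing common prefix $s \in \shortestEncPrefixes{a}{\intDistThreshold}$ covers $b$, and the $\sigma_j(b)$ at the matching wildcard length is exactly $s$. The reverse direction uses the fact that every $s \in \shortestEncPrefixes{a}{\intDistThreshold}$ represents only integers in \integersInRange, so any match witnesses $|a-b| < \intDistThreshold$. Communication is then immediate: a PSI with $\bigO{\secParam}$ bits per element applied to sets of size $\bigO{\genericSetSize \log \intDistThreshold}$ yields \intAwarePSICommComplexity total.

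The hard part will be the security argument, and this is where I would spend most of the write-up. The ideal functionality lets each party learn only the set $S$ of pairs within distance, whereas the underlying PSI leaks the raw string intersection $\augAlicesSet \cap \augBobsSet$, which is potentially a finer object. My plan is to build a simulator $\simm$ that, given only $S$ and $|\bobsSet|$ (for \alice), reconstructs a distribution over views indistinguishable from the real one. The key observation is that the set $\augAlicesSet \cap \augBobsSet$ is a deterministic function of $(\alicesSet, S)$ once \alicesSet is known to the simulator of \alice: for each $(a,b) \in S$ the simulator adds the unique string $s$ of matching wildcard length from $\shortestEncPrefixes{a}{\intDistThreshold}$, and no string outside this deterministic set can appear in the intersection by correctness. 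Symmetric reasoning handles \bob's simulator. I would then compose this with the simulator provided by the assumed PSI protocol (invoked on inputs $\augAlicesSet, \augBobsSet$) via a standard hybrid argument; any distinguishing advantage against \intAwarePSI yields one against the underlying PSI, contradicting its security.
\end{proofsketch}
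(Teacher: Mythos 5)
Your proof follows the paper's own route: the same trie-based augmented-set construction, the same recursive-halving bound of $\bigO{\log\intDistThreshold}$ enclosing prefixes per element, and the same reduction of security to the assumed PSI on the augmented sets. You do add welcome detail on the simulation step (the observation that $\augAlicesSet \cap \augBobsSet$ is reconstructible from $(\alicesSet, S)$ alone), which the paper dismisses as a ``straightforward integration''; your argument is sound because every string in the intersection is the unique enclosing prefix of some $a$ covering some $b$ with $(a,b)\in S$, and \bob's wildcard lengths $0,\ldots,\flooredLog{(2\intDistThreshold-1)}$ cover every prefix height \alice can produce (your $L=\flooredLog{(2\intDistThreshold-1)}+1$ adds one superfluous string per $b$, but this is harmless).
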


A private set intersection protocol with the augmented sets as inputs provides a integer-distance aware intersection. This is because a non-null intersection implies that a representative string(s) for \elemInSet{\bobsJthInt}{\bobsSet} matched a representative string(s) in \elemInSet{\alicesIthInt}{\alicesSet} which can only happen if \intDistanceLessThanThreshold{\alicesIthInt}{\bobsJthInt}. Any existing private set intersection protocol can be used.  In Section \ref{sec:eval}, we have instantiated an integer DA-PSI protocol with an OT-based PSI protocol due to \citet{pinkas2018PSIOT}. We will omit details of this straightforward integration.  To estimate the communication complexity of the protocol, in \appref{app:intPSIProofs}, we describe an algorithm to augment the input sets using the specified distance threshold. The algorithm builds a prefix trie over the binary representation of \integersInRange, and identifies the \mec subtries  (see \figref{fig:prefixTrie42}).


\begin{defn}
	\label{defn:mec}
	A subtrie in a prefix trie is called a maximal enclosing complete subtrie if it satisfies the following properties: i) it is a complete binary tree, and ii) it is not part of any other complete binary subtree(s) rooted at one of its ancestors. 
\end{defn}

Each \mec subtrie corresponds to an enclosing common prefix of the bit strings for \integersInRange. We count the number of \mec subtries in the prefix trie built over the binary representations and show that the total number of maximal enclosing complete subtries in a prefix trie built over the binary representations of all integers \integersInRange is $\bigO{\log \intDistThreshold}$.


%

\section{Evaluation}
\label{sec:eval}


\begin{figure*}[h!]
	\begin{minipage}{0.5\linewidth}
	\subfloat[Total comm.\ vs.\ vector length]
	{ \label{fig:hamPSI:micro:comm_length}
		\begin{tikzpicture}[scale=0.5]
			\begin{axis}[
				xlabel=Vector Length ($\genericVecLen$),
				ylabel=Total Comm. (in MB. logscale),
				ymax=6
				]

				
				\addplot+[line width=2.5pt, white!80!black, solid] coordinates {(128, 2) (256, 2.3) (512, 2.6) (1024, 2.9) (2048, 3.2) (4196, 3.5) (8192, 3.8) };
				
				\addplot+[line width=2.5pt, white!10!black, solid] coordinates {(128, 2.5) (256, 2.5) (512, 2.5) (1024, 2.5) (2048, 2.5) (4196, 2.5) (8192, 2.5)};
	        	\addplot+[line width=2.5pt, white!30!black, solid] coordinates {(128, 2.8) (256, 2.8) (512, 2.8) (1024, 2.8) (2048, 2.8) (4196, 2.8) (8192, 2.8) };
	        	
				\addplot+[line width=2.5pt, white!50!black, solid] coordinates {(128, 3.5) (256, 3.5) (512, 3.5) (1024, 3.5) (2048, 3.5) (4196, 3.5) (8192, 3.5) };

				\legend{GC Baseline, $\hamAwarePSIProtocol - 0.1$, $\hamAwarePSIProtocol - 0.05$, $\hamAwarePSIProtocol - 0.01$}

			\end{axis}
		\end{tikzpicture}
 	}
	\subfloat[Total comm.\ vs.\ threshold]
	{ \label{fig:hamPSI:micro:comm_threshold}
	\begin{tikzpicture}[scale=0.5]
		\begin{axis}[
			xlabel=Distance Threshold ($\hamDistThreshold$),
			ylabel=Total Comm. (in MB. logscale),
			ymax=8
			]

			\addplot+[line width=2.5pt, white!50!black, solid] coordinates {(1, 3.8) (2, 3.8) (4, 3.8) (6, 3.8) (8, 3.8) (10, 3.8) (12, 3.8) (14, 3.8) (16, 3.8) (18, 3.8) (20, 3.8) (22, 3.8) (24, 3.8) (26, 3.8) (28, 3.8) (30, 3.8)};
			
			\addplot+[line width=2.5pt, white!10!black, solid] coordinates {(1, 0.93) (2, 1.3) (4, 1.8) (6, 2.1) (8, 2.4) (10, 2.5) (12, 2.7) (14, 2.8) (16, 2.9) (18, 3) (20, 3.1) (22, 3.2) (24, 3.30) (26, 3.37) (28, 3.43) (30, 3.49)};
			
			\addplot+[line width=2.5pt, white!30!black, solid] coordinates {(1, 1.07) (2, 1.56) (4, 2.10) (6, 2.4) (8, 2.6) (10, 2.8) (12, 3.0) (14, 3.1) (16, 3.2) (18, 3.3) (20, 3.4) (22, 3.52) (24, 3.59) (26, 3.66) (28, 3.72) (30, 3.78)};
			
			\addplot+[line width=2.5pt, white!50!black, solid] coordinates {(1, 1.5) (2, 2.1) (4, 2.7) (6, 3.0) (8,3.3) (10, 3.5) (12, 3.6) (14, 3.8) (16, 3.9) (18, 4.0) (20,4.1) (22, 4.2) (24, 4.28) (26, 4.35) (28, 4.41) (30, 4.47) (32, 4.53) };
			
		

			
			\legend{GC Baseline, $\hamAwarePSIProtocol - 0.1$, 
			$\hamAwarePSIProtocol - 0.05$, 
			$\hamAwarePSIProtocol - 0.01$}
			
		\end{axis}
	\end{tikzpicture}
	}

\end{minipage}
\begin{minipage}{0.4\linewidth}
\subfloat[Compute time vs.\ vector length]
{\label{fig:hamPSI:micro:comp_size}
	\begin{tikzpicture}[scale=0.5]
\begin{axis}[
ybar,
enlargelimits=0.15,
legend pos = north west,
xlabel= Vector Length ($\genericVecLen$),
ylabel=Time (in sec),
symbolic x coords={256,512, 1048, 2048, 4096, 8192},
nodes near coords,
nodes near coords align={vertical},
cycle list = {black,black!70,black!40,black!10}
]

\addplot+[fill,text=black] coordinates {(256,267) (512,  267) (1048, 267) (2048, 267) (4096, 267) (8192, 267)};
\addplot+[] coordinates {(256,98)(512,221 ) (1048,510) (2048, 977) (4096, 1450) (8192,3000)};

\legend{$\hamAwarePSIProtocol - 0.05$, GC Baseline}
\end{axis}
\end{tikzpicture}}
\subfloat[Compute time vs.\ threshold]
{\label{fig:hamPSI:micro:comp_threshold}
	\begin{tikzpicture}[scale=0.5]
\begin{axis}[
ybar,
enlargelimits=0.15,
legend pos = north west,
xlabel=Distance Threshold ($\hamDistThreshold$),
ylabel=Time (in sec),
ymax=4000,
symbolic x coords={1,2, 4, 8, 16, 32},
nodes near coords,
nodes near coords align={vertical},
cycle list = {black,black!70,black!40,black!10}
]
\addplot+[] coordinates {(1,8)(2,20 ) (4,57) (8, 487) (16,622) (32, 2296)};
\addplot+[fill,text=black] coordinates {(1,3000) (2, 3000) (4, 3000) (8, 3000) (16, 3000) (32, 3000)};
\legend{$\hamAwarePSIProtocol - 0.05$, GC Baseline}
\end{axis}
\end{tikzpicture}}
\end{minipage}
\vspace{-10pt}
\caption{\small Micro-benchmarks for \hamAwarePSIProtocol (\secref{sec:hammingPSI:protocol}). Set size $\genericSetSize = 100$. In \figref{fig:hamPSI:micro:comm_length} and \figref{fig:hamPSI:micro:comm_threshold}, the values on the $y$-axis are in logscale (base 10). \figref{fig:hamPSI:micro:comm_length} shows for \hamDistThreshold = 10, and for FPR = 0.05 and FPR = 0.01, \hamAwarePSIProtocol has around $\mathbf{10\times}$ and $\mathbf{2\times}$ lower communication volume respectively, compared to the GC baseline when the vector dimensions, $\genericVecLen = 8192$. \figref{fig:hamPSI:micro:comm_threshold} shows with $\genericVecLen = 8192$ dimension vector, and for FPR = 0.05, \hamAwarePSIProtocol has $\mathbf{2 - 537\times}$ lower communication volumes up to distance threshold $\hamDistThreshold = 32$ compared to the baseline. \figref{fig:hamPSI:micro:comp_size} shows that for vector lengths, $\genericVecLen \geq 1024$ bits, \hamAwarePSIProtocol is at least $\mathbf{2}\times$ faster than the GC baseline. \figref{fig:hamPSI:micro:comp_threshold} shows that for up to distance threshold $\hamDistThreshold \leq 32$, \hamAwarePSIProtocol is faster than the GC baseline. }
\end{figure*}

We have implemented both \hamAwarePSIProtocol (\secref{sec:hammingPSI:protocol}) and \intAwarePSIProtocol (\secref{sec:intergerAwarePSI}). In the following sections, we benchmark the protocols. The evaluation metrics are communication costs and compute times.  All experiments consist of 5 independent trials and results are collected with a 95\% confidence interval.


\myparagraph{Platform}
We ran our experiments on two different platforms representing low and high resource environments respectively. 

\begin{itemize}[nosep,leftmargin=1em,labelwidth=*,align=left]

	\item \textbf{Low-resource:} Unless stated otherwise, our experiments were run on two t2.xlarge Amazon EC2 instances with 4 vCPUs and 16GB of RAM. To simulate realistic scenarios, these instances were placed in different zones (US East and West). The network bandwidth between them was measured to be around 40--60\megabytes per second using {\em iperf}\footnote{\url{https://iperf.fr/}}.
	
	\item \textbf{High-resource:} We used a Microsoft Azure F72s\_v2 instance, which has 72 virtual cores 
          and 144\gigabytes of RAM, to compare to the results of \citet[Table~10]{uzun2021fuzzy}.

\end{itemize}

\subsection{Hamming Distance Protocol}

\myparagraph{Implementation}
We have implemented \hamAwarePSIProtocol and \hamAwarePSIProtocolExp in C++11. The implementation uses the NTL library\footnote{\url{https://libntl.org/}} for implementing the finite field arithmetic, and the operations are performed over a 128-bit prime order field. We have used the open-source implementation\footnote{\url{https://github.com/emp-toolkit/emp-zk}} of the state of the art VOLE scheme \cite{weng2021vole} for our set reconciliation protocols. 

Finally for \tHamQueryRestricted, we have used the open source implementation\footnote{\url{https://github.com/lubux/ecelgamal}} of the EC-ElGamal encryption scheme on a 256-bit curve as 
the additively homomorphic encryption. The scheme is set up with a 24-bit message space and a precomputed plaintext table of 24-bit messages to speedup the decryption process. The message space is large enough to encrypt the individual bits of the vectors, compute the Hamming distance between them over the ciphertexts and compare with the distance threshold. The 128-bit key returned in the protocol is split into 24-bit chunks to fit into the message space. More details are in \appref{app:el_gamal}.

\subsubsection{Comparison with Generic 2PC \cite{huang2011garbled}}
\label{sec:eval:hamming:twoPC}
We have compared our Hamming DA-PSI protocol (\secref{sec:hammingPSI:protocol}) denoted \hamAwarePSIProtocol with the garbled circuit based construction by \citet{huang2006hamming}. This construction is more efficient than the AHE-based scheme due to \citet{osadchy2010scifi}. We use an open source implementation\footnote{\url{https://mightbeevil.org}}. 


\myparagraph{Micro-Benchmark}
We run micro-benchmarks with sets containing 100 vectors sampled from the space $\{0,1\}^{\genericVecLen}$, and measure the communication volumes relative to the baseline.

\begin{itemize}[nosep,leftmargin=1em,labelwidth=*,align=left]

	\item \textbf{Communication volume:} \figref{fig:hamPSI:micro:comm_length} shows how the communication volume scales with the the vector lengths, \genericVecLen. The distance threshold $\hamDistThreshold = 10$. As expected, the communication volume for \hamAwarePSIProtocol remains constant, while the communication volume for the GC based solution scales linearly in the vector size.   For vectors of length greater than $512$ bits, \hamAwarePSIProtocol outperforms the GC based solution. With 8192-bit vectors, \hamAwarePSIProtocol has 10$\times$ lower communication volume for FPR = 0.05.  \figref{fig:hamPSI:micro:comm_threshold} shows how communication volume scales with the distance threshold, \hamDistThreshold. \hamAwarePSIProtocol has $1.5 - 440\times$ lower communication volumes up to $\hamDistThreshold = 30$ compared to the baseline.

	\item \textbf{Compute time:} \figref{fig:hamPSI:micro:comp_size} and \figref{fig:hamPSI:micro:comp_threshold} shows how the compute time in \hamAwarePSIProtocol scales with vector lengths, and the distance threshold for FPR = 0.05. For vectors of length $\geq 1024$ bits, \hamAwarePSIProtocol is at least $2 \times$ faster than the GC-based system. With $8192$-bit vectors, \hamAwarePSIProtocol is faster than the GC-based system for distance threshold $\hamDistThreshold \leq 32$.
	
\end{itemize}


\myparagraph{Application Benchmark}
We use Iris recognition as an application of \hamAwarePSIProtocol in a setting where the input vectors are long, while the distance threshold is small. In this setting, \alice and \bob have sets comprising 100 images of irises and want to learn if they have common elements. The iris data is collected from the CASIA dataset\footnote{\url{http://www.cbsr.ia.ac.cn/english/IrisDatabase.asp}}. An open source tool is used to extract features from the dataset and compute 6000-bit long binary vectors corresponding to the items\footnote{\url{https://github.com/mvjq/IrisRecognition}}.

%

%
To privately compare a single pair of vectors, the garbled circuit based solution requires around 300KB
of communication. With sets of size 100, the total communication required is around 3GB. The communication cost 
mainly depends on the vector length and is not affected by the distance threshold.  For \hamAwarePSIProtocol, with a threshold $\hamDistThreshold = 20$, we are able to retrieve all the matches with communication cost 2.5$\times$ and $1.3\times$ lower than the garbled circuit baseline with $\tnr = 0.9$ and $\tnr = 0.95$ respectively.


\subsubsection{Comparison with \citet{uzun2021fuzzy}}
\label{sec:eval:hamming:uzun}
We have compared with the Hamming query protocol by \citet{uzun2021fuzzy}. Their protocol has two components: an application-specific sub-sampling procedure that reduces bio-identifiers (e.g., bit vectors derived from facial features) to sets of high-dimensional items, and a $t$-out-of-$T$ matching protocol.
The purpose of our comparison is to show that our set reconciliation protocol is more efficient than the FHE-based $t$-out-of-$T$ protocol of \citet{uzun2021fuzzy}. In this way, the comparison is independent of the sub-sampling procedure, which can change based on the application.

Unfortunately, since we are unable to obtain their code\footnote{The authors declined to provide the code for their implementation and instead recommended that we run our experiments on the same platform and compare our results to those reported in their paper.}, and compute results of their $t$-out-of-$T$ matching protocol in isolation, for a fair comparison we use their sub-sampling procedure on top of \oneSidedSetRecon and compare the overall times. This poses a challenge since the sub-sampling procedure outputs sets of sub-vectors, we cannot directly apply \hamAwarePSIProtocol which takes bit vectors as inputs. To overcome this problem we have compared their protocol with \hamAwarePSIProtocolExp (\secref{sec:hamming_psi:exp_compute}, \appref{app:hamming_query_exp}) with their sub-sampling procedure replacing the mapping procedure. We stress that resorting to \hamAwarePSIProtocolExp (vs.\ \hamAwarePSIProtocol) is simply to enable comparison to \citet{uzun2021fuzzy} without their code.
To implement a client-server containment query (as in \cite{uzun2021fuzzy}), \alice's input (client) is a singleton set $\{\alicesVector\}$ while \bob's input is \bobsInputVecSetDefn. \alice's compute cost is 
$\bigO{\binom{T}{t}}$ where $T, ~t$ are parameters of the sub-sampling procedure. When $T = 64$ and $t = 2$ \cite{uzun2021fuzzy}, this cost is practically feasible.


\pgfplotscreateplotcyclelist{grayscale}{
	very thick,white!10!black,mark=o,mark options={scale=1.5},solid,densely dotted\\%
	very thick,white!20!black,mark=x,mark options={scale=1.5}, solid,densely dotted\\%
	very thick,white!30!black,mark=*,mark options={scale=1.5}, solid,densely dotted\\%
	very thick,white!40!black,mark=+,mark options={scale=1.5}, solid,dashed\\%
	very thick,white!50!black,mark=x,mark options={scale=1.5}, solid,dashed\\%
}

\begin{figure*}[ht!]
\begin{minipage}{0.5\linewidth}
	\subfloat[Comm vs. threshold.]
 { \label{fig:intPSI:micro:comm_threshold}
 	\begin{tikzpicture}[scale=0.5]
 	\begin{axis}[
 		cycle list name=grayscale,
 	legend pos = north west,
 	xlabel=Distance Threshold,
 	ylabel=Total Comm. (in MB.)
 	]
 	
 	\addplot coordinates {(0,0.1) (20,0.8) (40,1) (60,1.1) (80,1.2) (100,1.4)};
 	\addplot coordinates {(0,0.1) (20,3.9) (40,7.9) (60,11.4) (80,15.1) (100,18.6)};
 	\addplot coordinates {(0,1.1) (20,8) (40,11) (60,12) (80,13) (100,14)};
	\addplot coordinates {(0,1.1) (20,40) (40,80) (60,115) (80,153) (100,188)};
 	
 	\legend{\intAwarePSIProtocol 1k, DA Baseline 1k, \intAwarePSIProtocol 10k, DA Baseline 10k}
 	
 	\end{axis}
 	\end{tikzpicture}
 }
\subfloat[Compute time vs. threshold]
{\label{fig:intPSI:micro:comp_threshold}
	\begin{tikzpicture}[scale=0.5]
\begin{axis}[
ybar,
enlargelimits=0.15,
legend pos = north west,
xlabel=Distance Threshold,
ylabel=Time (in sec),
symbolic x coords={0,20, 40, 60, 80, 100},
nodes near coords,
nodes near coords align={vertical},
cycle list = {black,black!70,black!40,black!10}
]
\addplot+[] coordinates {(0,0.1)(20, 0.2) (40,0.2) (60, 0.2) (80, 0.2) (100,0.2)};
\addplot+[fill,text=black] coordinates {(0,0.1) (20,  0.3) (40, 0.4) (60, 0.4) (80, 0.5) (100,0.6)};
\legend{\intAwarePSIProtocol, DA Baseline}
\end{axis}
\end{tikzpicture}}
\end{minipage}
\begin{minipage}{0.5\linewidth}
	\subfloat[Comm vs. threshold]
	{ \label{fig:intPSI:macro:comm_threshold}
		\begin{tikzpicture}[scale=0.5]
			\begin{axis}[
				cycle list name=grayscale,
				legend pos = north west,
				xlabel=Distance Threshold,
				ylabel=Total Comm. (in MB.)
				]
				\addplot coordinates {(0,1.9) (2,3.6) (4,7.2) (8,13.2) (16,24.7) (32,30) (64,39) (128,46) };
				\addplot coordinates {(0,1.9) (2,9.7) (4,17.4) (8,32.4) (16,62.4) (32,130) (64,250) (128,434)};
				\legend{\intAwarePSIProtocol, DA Baseline}
				
			\end{axis}
		\end{tikzpicture}
	}
	\subfloat[Compute time vs. threshold]
	{\label{fig:intPSI:macro:comp_threshold}
		\begin{tikzpicture}[scale=0.5]
			\begin{axis}[
				ybar = 0pt,
				enlargelimits=0.15,
				bar width = 0.2cm,
				xlabel=Distance Threshold,
				ylabel={Time (in ms, log scale)},
				legend pos = north west,
				symbolic x coords={0, 2, 4, 8, 16, 32, 64, 128},
				cycle list = {black,black!70,black!40,black!10}
				]
				\addplot+[] coordinates {(0, 2.4 ) (2, 2.5) (4,2.6) (8, 2.7) (16, 2.9) (32,3) (64,3.11) (128,3.17)};
				\addplot+[fill,text=black] coordinates {(0, 2.4) (2, 2.6) (4,2.7) (8,2.9) (16,3.2) (32,3.5) (64,3.8) (128,4.1)};
				\legend{\intAwarePSIProtocol, DA Baseline}
			\end{axis}
	\end{tikzpicture}}
	\end{minipage}
	\vspace{-15pt}
\caption{\small Benchmarks for \intAwarePSIProtocol. \figref{fig:intPSI:micro:comm_threshold} and \figref{fig:intPSI:micro:comp_threshold} are microbenchmarks with  set sizes = 1k, 10k. The communication volume for \intAwarePSIProtocol is roughly $\mathbf{13\times}$ less than the baseline for threshold = 100. Compute time is $\mathbf{3\times}$ less due to smaller augmented set sizes. 
\figref{fig:intPSI:macro:comm_threshold} and \figref{fig:intPSI:macro:comp_threshold} are benchmaks when \intAwarePSIProtocol is applied to the task of private collaborative blacklisting. Set size = 25k. The communication volume and compute time for \intAwarePSIProtocol are both around $\mathbf{10\times}$ less than the baseline when threshold = 128.}
\end{figure*}

%

%
%
%
%
%
%
%
%
%
%


\myparagraph{Dataset}
The dataset used for bechmarking by \citet{uzun2021fuzzy} is a set of synthetically generated (by a generative network) images of human faces. Since biometric authentication/face recognition is not our focus, we opted to use a dataset with randomly generated bit vectors matching the parameters in \cite{uzun2021cryptographic}. Specifically, we generated sets of varying sizes containing bit-vectors of length \genericVecLen = 256, and then sub-sampled these bit vectors using the algorithm used by \citet{uzun2021fuzzy} to generate corresponding sets of size $T = 64$.


\myparagraph{Results for High-Resource Setup}
 As recommended by the authors of \cite{uzun2021fuzzy}, we have used our high resource setup when comparing our protocol with their protocol. \tblref{tab:comparison_fuzzy_psi} reports results of the comparison. The numbers for the protocol of \citet{uzun2021fuzzy} correspond to a setup without load-balancing the dataset on the server, and so our results report the worst-case performance for both protocols. Load balancing can be applied to our protocol to improve performance; however, since it is unlikely to show improvements with randomly generated vectors, we omit this optimization. 

\begin{table}[th!]
\centering
{\small
\begin{tabular}{@{}l@{\hspace{0.65em}}c@{\hspace{0.35em}}c@{}c@{\hspace{0.65em}}c@{\hspace{0.35em}}c@{}c@{\hspace{0.65em}}c@{\hspace{0.35em}}c@{}}
\toprule
Set size \genericSetSize & \multicolumn{2}{c@{}}{10\kilo}                    && \multicolumn{2}{c@{}}{100\kilo} && \multicolumn{2}{c@{}}{1\mega} \\
Measure & Comm & Comp && Comm & Comp && Comm & Comp \\
\cmidrule{2-3} \cmidrule{5-6} \cmidrule{8-9}
\hamAwarePSIProtocolExp & 25.8MB & 4.00s  && 220MB & 11.0s && 1504MB & 130s \\
\citet{uzun2021fuzzy} & 72.0MB & 2.12s && 528MB & 17.8s && 2124MB & 189s \\
\bottomrule
\end{tabular}}
\vspace{-8pt}
\caption{Comparison of communication and computation costs of \hamAwarePSIProtocolExp (\appref{app:hamming_query_exp}) with \citet{uzun2021fuzzy}. \label{tab:comparison_fuzzy_psi}}
\end{table}

Our protocol outperforms the protocol of \citet{uzun2021fuzzy} for sets containing up to 1 million elements. The improvement is the result of using a communication-efficient VOLE protocol over fully homomorphic encryption. 
For instance, the amortized communication cost of performing a single oblivious linear evaluation as part of the VOLE protocol, i.e., computing a single value $\vec{z}[\vecBitIdx] \assign \vec{u}[\vecBitIdx] x + \vec{v}[\vecBitIdx] \in \finiteField$ (see the definition of \voleFunc) for some $\vecBitIdx \in [1, \genericSetSize]$,  is roughly 3 bits \cite{weng2021vole}. When applied to the set containing 10\kilo elements, we require 1.2 million correlations. The total communication cost is 3.6 million bits, or 450\kilobytes. In addition, we need to transfer a field element in plaintext to convert each pseudorandom VOLE correlation (as provided by \citet{weng2021vole}), to a correlation with the desired parameters. The total cost is $2.8\times$ less than the cost of the protocol of \citet{uzun2021fuzzy}.


Compute costs are also lower due to the use of cheaper symmetric-key primitives in our protocol over fully homomorphic encryption. The only exception is for the set containing 10\kilo elements since the VOLE protocol requires a setup time. For small sets, we still incur the setup time while not fully utilizing all the usable oblivious linear evaluations.

\begin{table}[th!]
\centering
{\small
\begin{tabular}{@{}l@{\hspace{0.65em}}c@{\hspace{0.35em}}c@{}c@{\hspace{0.65em}}c@{\hspace{0.35em}}c@{}c@{\hspace{0.65em}}c@{\hspace{0.35em}}c@{}}
\toprule
Set size \genericSetSize & \multicolumn{2}{c@{}}{10\kilo}                    && \multicolumn{2}{c@{}}{100\kilo} && \multicolumn{2}{c@{}}{1\mega} \\
Party & \alice & \bob && \alice & \bob && \alice & \bob \\
\cmidrule{2-3} \cmidrule{5-6} \cmidrule{8-9}
\hamAwarePSIProtocolExp & 11.4s & 1.20s && 24.3s & 2.80s && 145.3s & 3.50s \\
\bottomrule
\end{tabular}}
\vspace{-8pt}
\caption{Compute costs of \hamAwarePSIProtocolExp (\appref{app:hamming_query_exp}) on an Amazon t2.xlarge instance. \label{tab:low-resource}}
\end{table}

\myparagraph{Results for Low-Resource Setup}
The high-resource setup used by \citet{uzun2021fuzzy} is necessary for the compute-intensive tasks in their FHE-based scheme. In fact, Table 8 of \citet{uzun2021fuzzy} shows that deploying the system with 72 threads utilizing all the available vCPUs leads to a $32.4 \times$ speed up compared to a single-threaded deployment. Unlike their setting, PSI settings are often symmetrically provisioned and have more modest configurations.  We demonstrate feasibility of our protocol even on low-resource platforms. And while we are unable to compute the actual costs of running the protocol of \citet{uzun2021fuzzy} on the same platform, we posit that their compute costs would be significantly higher on low-resource systems due to the inherent cost of FHE. 

\tblref{tab:low-resource} shows the compute times of our protocol on the low-resource platform described before. The communication costs remain the same as the ones presented in \tblref{tab:comparison_fuzzy_psi} and therefore we omit the results. \bob only participates in the VOLE protocol and therefore has no other compute costs.
As is evident, the cost of our protocol on a low resource environment is similar to the performance on the over-provisioned system. The use of a cheap symmetric key primitive, namely VOLE, ensures that \bob's online compute times are low (less than 4s for databases containing up to 1\mega elements). The majority of \alice's time is spent on local computation which can be further optimized by leveraging parallel processing.

\subsection{Integer Distance Protocol}

\myparagraph{Implementation}
The integer distance-aware protocol (or \intAwarePSIProtocol in short) is implemented as a two step process in C++.
First both parties augment their sets using Algorithm \ref{alg:set_augmentation} (in \appref{app:intPSIProofs}). These augmented sets are used as inputs to the OT-based PSI protocol due to \citet{pinkas2018PSIOT}. We rely on an open-source implementation\footnote{\url{https://github.com/encryptogroup/PSI}}.
\textit{We note that \intAwarePSIProtocol can be instantiated with any traditional PSI protocol of choice, and both communication volume and compute times are expected to show similar trends.}


\myparagraph{Micro-benchmarks}
We run micro-benchmarks with sets containing 1k and 10k elements each, randomly sampled from the space of non-negative (32 bit) integers. We evaluate how the communication and overall compute time of the protocol scales with the distance threshold and set size. The baseline is the protocol due to \citet{pinkas2018PSIOT} where we augment the input sets with items in the neighborhood of each item in the set based on the threshold. Here, the augmented set size is expected to scale linearly with the distance threshold. This is called the ``DA Baseline" in our experiments.

\begin{itemize}[nosep,leftmargin=1em,labelwidth=*,align=left]
	\item {\bf Comm. volume vs.\ threshold:} \figref{fig:intPSI:micro:comm_threshold} shows how the communication volume scales with the distance threshold. The communication costs of \intAwarePSIProtocol scale logarithmically, and so with distance threshold $100$, the communication volume is $12.5\times$ less than the baseline for set size 10\kilo.
	
	\item {\bf Compute time vs.\ threshold:} \figref{fig:intPSI:micro:comp_threshold} shows how compute time scales with the distance threshold. Due to smaller set sizes to compute on, \intAwarePSIProtocol compute time is $3\times$ lower than the baseline when the threshold is 100. 	
	
\end{itemize}

\myparagraph{Application Benchmark}
To further explore realistic parameter settings, as a real-world application of \intAwarePSIProtocol, we return to the problem of private collaborative blacklisting where two mutually-untrusting parties compare IP addresses of end-points from where they have observed traffic to their own network. This task is usually performed with a PSI protocol \cite{melis2019collabblacklist}. Replacing this with a DA-PSI protocol enables us to find IP addresses that are common to both sets, as well as find addresses that are ``close'' in the address space. A distance-based comparison is meaningful here because it is well-know that coordinated attacks usually span multiple subnets~\cite{west2010spam,collins2007uncleanliness}.


\myparagraph{Dataset}
To test this application, we have collected data from a public honeypot deployed in a university network. The honeypot logs all incoming and outgoing traffic and stores a wealth of information. From this data, we curate information about traffic observed on two separate days, and build sets with the source IP addresses. There are roughly 25,000 distinct IP addresses in each set. These sets are inputs to \intAwarePSIProtocol and the baseline PSI protocol. 

We observe that using a distance-aware intersection in this context is well-justified based on the results. The number of intersections increases significantly when increasing the search radius and almost doubles by the time we reach the threshold of $128$. In actual numbers, there are around 5\% exact matches between the two sets.
With a distance based search over thresholds of $2, \ldots 128$ we find that the number of items in the intersection increase to more than 10\%.  By searching over larger threshold, we are able to obtain matching IP addresses that fall in the same subnet/adjoining subnets. Note that without a full subnet map, it is not possible to predict these subnet sizes a priori. Therefore, we envision running the protocol multiple times with different thresholds to obtain the most informative intersection.

\myparagraph{Results}
\figsref{fig:intPSI:macro:comm_threshold}{fig:intPSI:macro:comp_threshold} show how the overall communication volume and compute time scale with distance thresholds set to $2, \ldots, 128$. The intuition behind increasing the threshold in powers of two is that we would like to search over entire subnets and find potential overlaps (if any). In terms of the overall runtime we observe that \intAwarePSIProtocol  scales more gracefully with the distance threshold. Note that with threshold set at $128$, the baseline protocol computes over sets of size exceeding 6 million, while \intAwarePSIProtocol computes over sets of size of around 200,000. This difference results in a significant speedup. With threshold = 128, \intAwarePSIProtocol requires only 1.5 seconds to compute the intersection while the baseline requires over 15 seconds to accomplish the same task.

\section{Conclusion}
\label{sec:conclusion}
In this paper, we introduced the distance-aware PSI problem over metric spaces, whereby parties privately compute an intersection of their respective sets with items that are ``close'' in the metric space ending up in the intersection. Closeness is defined based on a user-specified distance threshold in the metric space. As concrete instantiations, we provided distance-aware constructions for two metric spaces: Minkowski distance of order 1 over the integers and Hamming distance.  Both the protocols are communication-efficient. As a practical application of this idea, we evaluated the Minkowski distance protocol in the context of collaborative blacklisting. In addition, the Hamming distance-aware protocol allows constructions for other distances using techniques like locality-sensitive hashing.

\section{Acknowledgments}

This research was supported in part by grant numbers 2040675, Convergence Accelerator award 2040675 and CIF-1705007
from the National Science Foundation, and W911NF-17-1-0370 from the Army Research Office. This work was also made possible by the support of JP Morgan Chase, the Sloan Foundation, Siemens AG, and Cisco. The views and conclusions in this document are those of the authors and should not be interpreted as representing the official policies, either expressed or implied, of the National Science Foundation, Army Research Office, or the U.S. Government.

\bibliographystyle{plainnat}
\bibliography{bib/psi.bib,bib/lsh.bib,bib/crypto.bib,bib/misc.bib, bib/math.bib}

\appendix

\section{Difference Between \tPSI and \oneSidedSetRecon}
\label{app:tpsi_delta}

This section highlights the differences between \tPSI \cite[Fig. 10]{ghosh2019thresholdPSI} and \oneSidedSetRecon (\figref{fig:ham_query_lite}). For this we present the \tPSI set reconciliation protocol in \figref{fig:tpsi}. The protocol requires two sets of calls to \oleFunc in Steps 5 and 6. It also requires two additional rounds of communication in Steps 7 and 8. These steps are required so that both \alice and \bob can obtain the evaluations of \tPSIOutputPoly in Steps 9 and 10. More specifically, $s_{\mathsf{A}}(x_\ptIdx), s'_{\mathsf{A}}(x_\ptIdx)$ and $s_{\mathsf{B}}(x_\ptIdx), s'_{\mathsf{B}}(x_\ptIdx)$ are "blinded" shares of \tPSIOutputPolyEval{x_\ptIdx} with \alice's and \bob's inputs respectively. \alice and \bob exchange these shares to finally obtain \tPSIOutputPolyEval{x_\ptIdx} in Steps 9 and 10.

\oneSidedSetRecon avoids Steps 6 -- 8 by allowing only \alice to obtain the evaluations of \tPSIOutputPoly. while \bob generates all the random polynomials required in the protocol. Therefore, we do not need to generate "blinded" shares as above. As a result, \oneSidedSetRecon is not only significantly simpler but also avoids one set of \numPointsForInterpol of calls to \oleFunc and two rounds of communication.

\begin{figure}
	\footnotesize
	\begin{mdframed}
		
		\noindent
		\textbf{\underline{Parameters:}}  \alice and \bob have sets \alicesSetFromVec and \bobsSetFromVec, $\setSize{\alicesSetFromVec} = \setSize{\bobsSetFromVec} = \genericVecLen$ respectively and a set difference threshold \hamDistThreshold. 
		
			\noindent
		\textbf{\underline{Procedure Noisy Polynomial Addition:}}
		
		\begin{enumerate}
		\item  \alice and \bob select a set of \numPointsForInterpolPlus points in \finiteField $\setOfXCoords \assign \setOfXCoordsDefn$ such that none of the points  are in the ranges of any of the mapping functions. 
		
		\item \alice encodes \alicesSetFromVec  in the polynomial \alicesPoly = \polyFromSet{\alicesSetFromVec} and \bob encodes \bobsSetFromVec in the polynomial \bobsPoly = \polyFromSet{\bobsSetFromVec}. 
		
		\item \alice picks two random polynomials of degree $\genericVecLen$, $R_{1}^{\mathsf{A}}, R_{2}^{\mathsf{A}} \in \polyField$ and a degree $2\genericVecLen$ polynomial $U_{\mathsf{A}} \in \polyField$. 
		
		\item \bob picks two random polynomials of degree $\genericVecLen$, $R_{1}^{\mathsf{B}}, R_{2}^{\mathsf{B}} \in \polyField$ and a degree $2\genericVecLen$ polynomial $U_{\mathsf{B}} \in \polyField$. 

		\item For each $x_\ptIdx \in \setOfXCoords$, \alice sends $\alicesPolyFromVecEval{x_\ptIdx}$ to \oleFunc. \bob sends $R_{1}^{\mathsf{B}}(x_\ptIdx)$ and  $U_{\mathsf{B}}(x_\ptIdx)$ to \oleFunc. \alice receives $s_{\mathsf{A}}(x_\ptIdx) \assign \alicesPolyFromVecEval{x_\ptIdx} R_{1}^{\mathsf{B}}(x_\ptIdx) + U_{\mathsf{B}}(x_\ptIdx)$. 
		
		\item For each $x_\ptIdx \in \setOfXCoords$, \bob sends $\bobsPolyFromVecEval{x_\ptIdx}$ to \oleFunc. \alice sends $R_{1}^{\mathsf{A}}(x_\ptIdx)$ and  $U_{\mathsf{A}}(x_\ptIdx)$ to \oleFunc. \alice receives $s_{\mathsf{B}}(x_\ptIdx) \assign \bobsPolyFromVecEval{x_\ptIdx} R_{1}^{\mathsf{A}}(x_\ptIdx) + U_{\mathsf{A}}(x_\ptIdx)$. 
		
		\item For each $x_\ptIdx \in \setOfXCoords$ \alice sends to \bob $s'_{\mathsf{A}}(x_\ptIdx) \assign s_{\mathsf{A}}(x_\ptIdx) + \alicesPolyFromVecEval{x_\ptIdx} \cdot R_{1}^{\mathsf{A}}(x_\ptIdx) - U_{\mathsf{A}}(x_\ptIdx)$. 
	
		\item For each $x_\ptIdx \in \setOfXCoords$ \bob sends to \bob $s'_{\mathsf{B}}(x_\ptIdx) \assign s_{\mathsf{B}}(x_\ptIdx) + \bobsPolyFromVecEval{x_\ptIdx} \cdot R_{1}^{\mathsf{B}}(x_\ptIdx) - U_{\mathsf{B}}(x_\ptIdx)$. 

		\item \alice outputs the evaluation points of $\alicesVoleOutputPolyEval{x_\ptIdx} \assign s_{\mathsf{A}}(x_\ptIdx) + s'_{\mathsf{B}}(x_\ptIdx) + \alicesPolyFromVecEval{x_\ptIdx} R_{1}^{\mathsf{A}}(x_\ptIdx) -  U_{\mathsf{A}}(x_\ptIdx) = \tPSIOutputPolyEval{x_\ptIdx}$. 

			\begin{center}
			$\setOfPtsForInterpolation_{\mathsf{A}} \assign  \{(x_\ptIdx, y_\ptIdx) : y_\ptIdx \assign \frac{\alicesVoleOutputPolyEval{x_\ptIdx}}{\alicesPolyFromVecEval{x_\ptIdx}} =  \frac{\tPSIOutputPolyEval{x_\ptIdx}}{\alicesPolyFromVecEval{x_\ptIdx}} \}$.
		\end{center}

		\item \bob outputs the evaluation points of $\bobsVoleOutputPolyEval{x_\ptIdx} \assign s_{\mathsf{B}}(x_\ptIdx) + s'_{\mathsf{A}}(x_\ptIdx) +  \bobsPolyFromVecEval{x_\ptIdx} R_{1}^{\mathsf{B}}(x_\ptIdx) -  U_{\mathsf{B}}(x_\ptIdx) = \tPSIOutputPolyEval{x_\ptIdx}$. 
		
		\begin{center}
			$\setOfPtsForInterpolation_{\mathsf{B}} \assign  \{(x_\ptIdx, y_\ptIdx) : y_\ptIdx \assign \frac{\bobsVoleOutputPolyEval{x_\ptIdx}}{\bobsPolyFromVecEval{x_\ptIdx}} =  \frac{\tPSIOutputPolyEval{x_\ptIdx}}{\alicesPolyFromVecEval{x_\ptIdx}} \}$.
		\end{center}

		\end{enumerate}		

			\noindent
		\textbf{\underline{Procedure Interpolation:}}
		
		\item \alice  interpolates $\setOfPtsForInterpolation_{\mathsf{A}}$ with a rational function similar to \oneSidedSetRecon. \bob $\setOfPtsForInterpolation_{\mathsf{B}}$ with a rational function similar to \oneSidedSetRecon. The rest of the steps are the same as \oneSidedSetRecon; \alice and \bob output \setDiff{\alicesSetFromVec}{\bobsSetFromVec} and \setDiff{\bobsSetFromVec}{\alicesSetFromVec} respectively.

	\end{mdframed}	\vspace{-15pt }
\caption{\tPSI set reconciliation protocol \cite{ghosh2019thresholdPSI}. \label{fig:tpsi}}

\end{figure}

\section{Proofs of Propositions}
\label{app:proofs_hamming}

\subsection{Lemma for Polynomials in \polyField}
\label{app:lemmaPolys}

\begin{restatable}{lemma}{polyOnSupSet}
	\label{lemma:polyOnSupSet}
	Let $X$ be a set of distinct arbitrary values $x_i \in \finiteField$.  
	Let $V$ be a set of random points where each point $v_i = (x_i, y_i), y_i \getsr \finiteField$. Then, the probability that there exists polynomial \polyInField{Q} that satisfies all points in $V$ with $\degree{Q}  \leq D < \setSize{V}$ is $1/\fieldOrder^{\setSize{V} - D - 1}$.
	
\end{restatable}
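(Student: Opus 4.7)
The plan is to reduce the ``exists a polynomial'' event to a fixed-polynomial event via uniqueness of low-degree interpolation, and then invoke independence of the random $y$-coordinates. Let $n = \setSize{V}$.

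First, I would pick any $D+1$ of the points, say $(x_1,y_1),\ldots,(x_{D+1},y_{D+1})$. Since the $x_i$ are distinct, Lagrange interpolation gives a unique polynomial $Q^{\ast}(x) \in \finiteField[x]$ of degree at most $D$ passing through these $D+1$ points; this $Q^{\ast}$ is a deterministic function of $y_1,\ldots,y_{D+1}$. Crucially, by the standard uniqueness argument (two polynomials of degree $\le D$ that agree on $D+1$ distinct inputs must be equal), \emph{any} degree-$\le D$ polynomial that satisfies all $n$ points of $V$ must coincide with $Q^{\ast}$. Hence the event ``there exists $Q \in \finiteField[x]$ of degree $\le D$ satisfying every point of $V$'' is exactly the event ``$Q^{\ast}(x_i) = y_i$ for all $i = D+2,\ldots,n$''.

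Next, I would condition on the values $y_1,\ldots,y_{D+1}$. Once these are fixed, $Q^{\ast}$ is fixed, and in particular the values $Q^{\ast}(x_{D+2}),\ldots,Q^{\ast}(x_n)$ are fixed elements of $\finiteField$. Because the $y_i$ are sampled independently and uniformly from $\finiteField$, the remaining $y_{D+2},\ldots,y_n$ remain independent and uniform even after this conditioning. Therefore
\[
\Pr[\,Q^{\ast}(x_i)=y_i\,] \;=\; \frac{1}{\fieldOrder} \quad \text{for each } i \in \{D+2,\ldots,n\},
\]
and by independence the joint probability is $(1/\fieldOrder)^{n-D-1}$. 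Since this conditional probability does not depend on $y_1,\ldots,y_{D+1}$, the unconditional probability is also $1/\fieldOrder^{\setSize{V}-D-1}$, as required.

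There is no real obstacle here; the only subtle point is arguing that ``there exists such a $Q$'' collapses to a single event about one specific polynomial, and that step follows from uniqueness of interpolation over $\finiteField$ at $D+1$ distinct abscissae. The hypothesis $D < \setSize{V}$ is used precisely to guarantee that at least one ``extra'' point $x_{D+2}$ exists, so that the exponent $n - D - 1 \ge 0$ is meaningful.
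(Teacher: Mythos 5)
Your proof is correct, and it takes a mildly different (arguably cleaner) route than the paper's. The paper argues by counting: it considers the set of all $\fieldOrder^{\setSize{V}}$ possible point sets over the fixed abscissae $X$, observes that the map sending a degree-$\le D$ polynomial to its graph on $X$ is injective into that set (with image of size $\fieldOrder^{D+1}$), and divides to get the probability that a uniformly random point set lies in the image. You instead condition: you fix $y_1,\ldots,y_{D+1}$, let $Q^{\ast}$ be the unique degree-$\le D$ interpolant through those $D+1$ points, reduce the existence event to ``$Q^{\ast}$ also hits the remaining $n-D-1$ points,'' and use independence of the remaining $y_i$ to multiply out $(1/\fieldOrder)^{n-D-1}$. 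Both proofs hinge on the same fact — a degree-$\le D$ polynomial over a field is determined by its values at $D+1$ distinct abscissae — but your conditioning argument makes the role of the independent uniform sampling explicit at the step where it is actually used, whereas the paper packages it into a cardinality ratio over the whole sample space. Your version also generalizes more transparently to the case where only some of the $y_i$ are random, which is not needed here but is a nice feature. One small presentational point: you should say explicitly that the equivalence ``$\exists Q \Leftrightarrow Q^{\ast}$ satisfies all $n$ points'' also requires the forward direction (if $Q^{\ast}$ works, it witnesses existence), which is trivial but completes the biconditional you are invoking.
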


\begin{proof}
	
	Let $V^{*}$ be the combined set of all sets $V = \{\genericPt : \genericPtX \in \setOfXCoords, \genericPtY \in \finiteField\}$. Then, $|V^{*}| = \fieldOrder^{\setSize{V}}$. Each $V_i \in V^{*}$ is consistent with {\em at most} one polynomial of degree $\leq D$. Otherwise, there will be two or more polynomials of degree $\leq D$ passing through the same \setSize{V} points, which contracts the fact that any $D + 1 \leq \setSize{V}$ points uniquely defines a polynomial of degree $\leq D$

	Let $P^{*}$ be the set of all polynomials of degree $\leq D$. Clearly, $| P^{*} | =  \fieldOrder^{D+1}$. Each $P_i \in P^{*}$ is consistent with exactly one $V_j \in V^{*}$. Thus, the mapping from $P^{*}$ to $V^{*}$ is injective. Then the probability that $V \in V^{*}$ is satisfied by a polynomial \polyInField{Q} of degree $\degree{Q} \leq D$ is given by the probability that $V$ has a pre-image in the injective map which is $1/\fieldOrder^{\setSize{V} - D - 1}$.
\end{proof}

\subsection{Proof of \thmref{thm:set_recon_claims}}
\label{app:proof_set_recon_claims}

To prove the propositions in the theorem, we will define an IND-CPA style security game between an adversary \indcpaAdversary and a challenger \indcpaChallenger. 

\begin{oframed}
 	\begin{enumerate}[nosep,leftmargin=1.6em,labelwidth=*,align=left]
 	
 		\item \indcpaAdversary selects a threshold $\hamDistThreshold \in (0, \genericVecLen/2)$,  
 		three sets \indcpaFixedSet, \indcpaChallengeSetZero and \indcpaChallengeSetOne with $\setSize{\indcpaChallengeSetZero} = \setSize{\indcpaChallengeSetOne} = \setSize{\indcpaFixedSet} =\genericVecLen$, such that 
 		
 		$\setIntCard{\indcpaFixedSet}{\indcpaChallengeSetZero} =  \setIntCard{\indcpaFixedSet}{\indcpaChallengeSetOne} = \degreeOfGCD$, and  $\setOfXCoords \assign \{\genericPtX : x_\ptIdx \in \finiteField \}_{\ptIdx = 1}^{\numPointsForInterpol}$ such that $x_\ptIdx \notin \indcpaFixedSet, \indcpaChallengeSetZero, \indcpaChallengeSetOne$.     
 		
 			
 		 


 		\item \indcpaChallenger samples two random polynomials $\alicesRandomPoly, \bobsRandomPoly \getsr \polyField$ of degree $\genericVecLen$, and derives \indcpaFixedPoly \assign  \polyFromSet{\indcpaFixedSet}. 
 	 	
 	 	\item \indcpaChallenger flips a random bit \indcpaBit and based on outcome derives  \indcpaChallengePolyBit \assign  \polyFromSet{\indcpaChallengeSetBit}, and computes $\setOfPtsForInterpolation = \{(x_\ptIdx, y_\ptIdx) : x_\ptIdx \in \setOfXCoords, y_\ptIdx = \indcpaOutputPolyEval{x_\ptIdx} \}$. \indcpaChallenger returns \setOfPtsForInterpolation to \indcpaAdversary.


 		
 		
 		
 			
 			
 			
 		
 		\item \indcpaAdversary 	outputs \indcpaAdversaryBit and wins the game if $\indcpaAdversaryBit = \indcpaBit$.

\end{enumerate}
\end{oframed}

 \begin{restatable}{prop}{indcpaClaimOne}
 \label{claim:indcpaGreaterThanRange}
 	For $\indcpaBit \in [0,1]$, if $\setDiffCard{\indcpaFixedSet}{\indcpaChallengeSetBit} =  \genericVecLen - \degreeOfGCD \geq 2\hamDistThreshold$, then \prob{\indcpaAdversaryBit = \indcpaBit} $\leq 1/2 +  \neglfn{\secParam}$ for any adversary \indcpaAdversary.
 \end{restatable}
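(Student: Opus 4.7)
The plan is to show that the adversary's view $\setOfPtsForInterpolation$ is identically distributed for $\indcpaBit = 0$ and $\indcpaBit = 1$---in fact, that in both cases it is the uniform distribution on $\finiteField^m$ with $m = \numPointsForInterpol$---so that the advantage is $0$ (even against an unbounded \indcpaAdversary, hence certainly at most $\neglfn{\secParam}$).

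First I would factor out the common part of \indcpaFixedPoly and \indcpaChallengePolyBit. Let $g_\indcpaBit(x) \defeq \gcd(\indcpaFixedPoly, \indcpaChallengePolyBit)$, which has degree \degreeOfGCD by the constraints \indcpaAdversary puts on its three sets, and write $\indcpaFixedPoly = g_\indcpaBit(x) \tilde{P}(x)$ and $\indcpaChallengePolyBit = g_\indcpaBit(x) \tilde{Q}_\indcpaBit(x)$ where $\gcd(\tilde{P}, \tilde{Q}_\indcpaBit) = 1$ and $\deg \tilde{P} = \deg \tilde{Q}_\indcpaBit = \genericVecLen - \degreeOfGCD$. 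The challenger's polynomial then factors as $\alicesRandomPoly \indcpaFixedPoly + \bobsRandomPoly \indcpaChallengePolyBit = g_\indcpaBit(x) \cdot N(x)$, where $N(x) \defeq \alicesRandomPoly \tilde{P}(x) + \bobsRandomPoly \tilde{Q}_\indcpaBit(x)$. Invoking \lemmaref{lemma:kissener_uniformly_random} with $D_r = \genericVecLen$ and $D_p = \genericVecLen - \degreeOfGCD$ (the required $D_r \geq D_p$ holds trivially since $\degreeOfGCD \geq 0$), $N$ is distributed as a uniformly random polynomial in \polyField of degree at most $D \defeq 2\genericVecLen - \degreeOfGCD$.

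Next I would translate uniformity of $N$ as a polynomial into uniformity of its evaluation vector. The hypothesis $\genericVecLen - \degreeOfGCD \geq 2\hamDistThreshold$ is equivalent to $D + 1 \geq \genericVecLen + 2\hamDistThreshold + 1 = m$, i.e., $m \leq D + 1$. A direct counting argument (essentially a rephrasing of \lemmaref{lemma:polyOnSupSet}) shows that the map sending a polynomial of degree $\leq D$ to its evaluation vector at the $m$ distinct abscissas $x_1, \ldots, x_m$ is surjective onto $\finiteField^m$ and exactly $\fieldOrder^{D + 1 - m}$-to-one, so a uniformly random $N$ produces a uniform vector $(N(x_1), \ldots, N(x_m)) \in \finiteField^m$. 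Because each $x_\ptIdx$ was chosen to lie outside $\indcpaFixedSet \cup \indcpaChallengeSetBit$, we have $g_\indcpaBit(x_\ptIdx) \neq 0$ for every \ptIdx, and coordinatewise multiplication by these nonzero scalars is a bijection on $\finiteField^m$. Hence $(y_1, \ldots, y_m) = (g_\indcpaBit(x_1) N(x_1), \ldots, g_\indcpaBit(x_m) N(x_m))$ is uniformly distributed on $\finiteField^m$, and this distribution does not depend on \indcpaBit.

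Since the view in the games $\indcpaBit = 0$ and $\indcpaBit = 1$ is in both cases the uniform distribution on $\finiteField^m$, the two are identical and no \indcpaAdversary can output $\indcpaAdversaryBit = \indcpaBit$ with probability above $1/2$. The main obstacle is the uniformity-of-evaluations step: one must pinpoint the exact threshold $m \leq D + 1$, equivalent to $\genericVecLen - \degreeOfGCD \geq 2\hamDistThreshold$, below which uniformity of $N$ as a polynomial transfers to uniformity of its evaluation tuple. Beyond this threshold the evaluation vector is constrained by the degree bound on $N$ and ceases to be uniform, which is precisely the leakage that will be exploited in the proof of Proposition~2.
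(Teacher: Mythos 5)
Your proposal is correct and follows essentially the same route as the paper's proof: both factor out $\gcd(\indcpaFixedPoly,\indcpaChallengePolyBit)$, invoke \lemmaref{lemma:kissener_uniformly_random} on the coprime quotients to obtain a uniformly random polynomial of degree at most $2\genericVecLen-\degreeOfGCD$, and then use the fact that $\numPointsForInterpol \leq 2\genericVecLen-\degreeOfGCD+1$ (equivalent to $\genericVecLen-\degreeOfGCD\geq 2\hamDistThreshold$) to conclude the observed evaluations are uniform on $\finiteField^m$ and hence independent of $\indcpaBit$. The paper phrases this via a direct count $\prob{R(x)\in S_\indcpaBit}=\fieldOrder^{-m}$, while you phrase it as the evaluation map being an $\fieldOrder^{D+1-m}$-to-one surjection followed by a coordinatewise bijection, but these are the same counting argument; your version just makes the uniformity-of-evaluations and the nonvanishing of $g_\indcpaBit$ at the $x_\ptIdx$ slightly more explicit.
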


\begin{proof}
	We will show that the probability of obtaining an arbitrary set of points \setOfPtsForInterpolation in Step 5 of the security game is the same for the case when \indcpaBit = 0 as the case \indcpaBit = 1. For this, consider the polynomial $ \indcpaChallengePolyBit R_1(x) + \indcpaFixedPoly R_2(x) = \gcd{\indcpaFixedPoly}{\indcpaChallengePolyBit} \times R(x)$ where $R(x)$ is a random polynomial of degree $2\genericVecLen - \degreeOfGCD$ due to \lemmaref{lemma:kissener_uniformly_random}. The evaluations of this polynomial generates \setOfPtsForInterpolation.

	If $\indcpaBit = 0$,  $R(x)$ is a degree $2\genericVecLen - \degreeOfGCD$  polynomial that is consistent with the set of points $\{(\genericPtX, y'_\ptIdx): y'_\ptIdx = \frac{y_\ptIdx}{C_O(x_\ptIdx)}, ~(x_\ptIdx, y_\ptIdx) \in \setOfPtsForInterpolation\}$ where $C_0(x) \assign \gcd{\indcpaFixedPoly}{\indcpaChallengePolyZero}$. A degree $2\genericVecLen - \degreeOfGCD$ polynomial in \polyField is uniquely defined by $2\genericVecLen - \degreeOfGCD + 1$ points. 
	Since, $R(x)$ is consistent with the aforementioned set of points, $\setSize{\setOfPtsForInterpolation} = \numPointsForInterpol$  points required to define $R(x)$ are fixed. Thus, there are $\fieldOrder^{(2\genericVecLen - \degreeOfGCD+1) - (\numPointsForInterpol) = \fieldOrder^{\genericVecLen -(2\hamDistThreshold + \degreeOfGCD)}}$ candidate polynomials for $R(x)$. Let $S_0$ be the set comprising these polynomials. 
	
	Similarly, if $\indcpaBit = 1$, $R(x)$ is one of the $\fieldOrder^{\genericVecLen -(2\hamDistThreshold + \degreeOfGCD)}$ 
	polynomials of degree $\leq 2\genericVecLen - \degreeOfGCD$ consistent with the set of points $\{(\genericPtX, y'_\ptIdx): y'_\ptIdx = \frac{y_\ptIdx}{C_1(x_\ptIdx)}, ~(x_\ptIdx, y_\ptIdx) \in \setOfPtsForInterpolation\}$ where $C_1(x) \assign \gcd{\indcpaFixedPoly}{\indcpaChallengePolyOne}$. Let $S_1$ be the set comprising these polynomials; note that $|S_0| = |S_1|$. 
	
	Since $R(x)$ is a uniformly random polynomial of degree $2\genericVecLen - \degreeOfGCD$, the probability of obtaining \setOfPtsForInterpolation when \indcpaBit = 0 is 
	$\prob{R(x) \in S_0} = \frac{\fieldOrder^{\genericVecLen -(2\hamDistThreshold + \degreeOfGCD)}}{\fieldOrder^{2\genericVecLen - \degreeOfGCD + 1}} = \fieldOrder^{-(\numPointsForInterpol)}$. Similarly, the probability of obtaining \setOfPtsForInterpolation when \indcpaBit = 1 is $\prob{R(x) \in S_1} = \frac{\fieldOrder^{\genericVecLen -(2\hamDistThreshold + \degreeOfGCD)}}{\fieldOrder^{2\genericVecLen - \degreeOfGCD + 1}} = \fieldOrder^{-(\numPointsForInterpol)}$. 
\end{proof}

 \begin{restatable}{prop}{indcpaClaimTwo}
	\label{claim:indcpaInRange}
	 For $\indcpaBit \in [0,1]$, if $\setDiffCard{\indcpaFixedSet}{\indcpaChallengeSetBit} = \genericVecLen - \degreeOfGCD \in (\hamDistThreshold, 2\hamDistThreshold)$, then there is an adversary \indcpaAdversary for which \prob{\indcpaAdversaryBit = \indcpaBit} $\geq 1 - \neglfn{\secParam}$.
\end{restatable}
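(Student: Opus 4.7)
The plan is to construct an adversary $\indcpaAdversary$ that recovers $\indcpaBit$ with probability $1 - \neglfn{\secParam}$. Upon receiving $\setOfPtsForInterpolation = \{(x_k, y_k)\}_{k=1}^{\numPointsForInterpol}$, $\indcpaAdversary$ iterates over each candidate $i \in \{0,1\}$; since it already knows $\indcpaFixedSet$, $\indcpaChallengeSetZero$, and $\indcpaChallengeSetOne$, it can compute $C_i(x) \assign \gcd(\indcpaFixedPoly, \indcpaChallengePolyEval{i})$. It then forms the normalized point set $V_i \assign \{(x_k, y_k/C_i(x_k))\}_k$ (the division is well defined because the $x_k$ were chosen to avoid every root of $\indcpaFixedPoly$, $\indcpaChallengePolyZero$, and $\indcpaChallengePolyOne$) and checks, by interpolation, whether $V_i$ is consistent with some polynomial of degree at most $2\genericVecLen - \degreeOfGCD$. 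The adversary outputs $\indcpaAdversaryBit = i$ for the value of $i$ that passes the test.

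First I would verify that $i = b$ always passes. By Lemma~\ref{lemma:kissener_uniformly_random}, after factoring out the shared divisor $C_b(x)$, we may write $\indcpaOutputPoly = C_b(x) \cdot N_b(x)$ where $N_b(x) \in \polyField$ is uniformly random of degree $2\genericVecLen - \degreeOfGCD$; hence every point of $V_b$ lies on $N_b$ by construction and the test passes.

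The main obstacle is to show that for $i \neq b$ the test succeeds only with negligible probability. The hypothesis $\genericVecLen - \degreeOfGCD < 2\hamDistThreshold$ yields $\numPointsForInterpol > 2\genericVecLen - \degreeOfGCD + 1$, so the adversary has strictly more normalized points than a polynomial of that degree can freely accommodate. Applying Lemma~\ref{lemma:polyOnSupSet} to the scaled sequence $\bigl(C_b(x_k)\,N_b(x_k)/C_i(x_k)\bigr)_k$---whose distribution inherits the uniformity of $N_b$ from the Kissner decomposition through a measure-preserving pointwise scaling by the nonzero factors $C_b(x_k)/C_i(x_k)$---the probability that any polynomial of degree at most $2\genericVecLen - \degreeOfGCD$ matches all $\numPointsForInterpol$ normalized values is at most $\fieldOrder^{-(\numPointsForInterpol - (2\genericVecLen - \degreeOfGCD) - 1)} = \fieldOrder^{-(2\hamDistThreshold - (\genericVecLen - \degreeOfGCD))} \leq 1/\fieldOrder$, which is $\neglfn{\secParam}$ since $\fieldOrder$ is a $\polyfn(\secParam)$-bit prime. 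Combining with the first case, $\indcpaAdversary$ outputs the correct bit except on a $\neglfn{\secParam}$ event, establishing the claim.
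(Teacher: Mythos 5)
Your proof takes the paper's approach: for each $i \in \{0,1\}$, normalize the received points by $C_i(x_k)$ and accept the $i$ whose normalized points fit a polynomial of degree at most $2\genericVecLen - \degreeOfGCD$, with Lemma~\ref{lemma:kissener_uniformly_random} ensuring the true $b$ always passes and Lemma~\ref{lemma:polyOnSupSet} bounding the false-positive rate. One caveat that the paper's own argument shares: the evaluations of $N_b$ at $\numPointsForInterpol > 2\genericVecLen-\degreeOfGCD+1$ points are not iid uniform (they lie in a $(2\genericVecLen-\degreeOfGCD+1)$-dimensional subspace since $N_b$ has bounded degree), so the ``measure-preserving scaling'' step does not literally place you in the hypotheses of Lemma~\ref{lemma:polyOnSupSet}; a careful version must further assume $C_0\neq C_1$ (otherwise the two normalized point sets coincide and the test never discriminates) and then bound the false-positive probability by the rank of the induced linear constraint on $N_b$'s coefficients.
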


\begin{proof}

    W.l.o.g assume that $\indcpaChallengePolyZero R_1(x) + \indcpaFixedPoly R_2(x) = \gcd{\indcpaFixedPoly}{\indcpaChallengePolyZero} R^{*}_0(x)$ and $\indcpaChallengePolyZero R_1(x) + \indcpaFixedPoly R_2(x) = \gcd{\indcpaFixedPoly}{\indcpaChallengePolyOne} R^{*}_1(x)$
    are two polynomials consistent with the set of point \setOfPtsForInterpolation in Step 5. Then, we have $R^{*}_0(x)$ is a random polynomial of degree $2\genericVecLen - \degreeOfGCD < \genericVecLen + 2\hamDistThreshold$ consistent with the set of points $\{(\genericPtX, y'_\ptIdx): y'_\ptIdx = \frac{y}{C_0(x_\ptIdx)}, ~\genericPt \in V\}$ and $R^{*}_1(x)$ is a random polynomial of degree $2\genericVecLen - \degreeOfGCD < \genericVecLen + 2\hamDistThreshold$ consistent with the set of points $\{(\genericPtX, y'_\ptIdx): y'_\ptIdx = \frac{y}{C_1(x_\ptIdx)}, ~\genericPt \in V\}$ where $C_0(x) = \gcd{\indcpaFixedPoly}{\indcpaChallengePolyZero}$ and $C_1(x) = \gcd{\indcpaFixedPoly}{\indcpaChallengePolyOne}$. 

    From these facts we get that, for $i \in [1, \numPointsForInterpol], R^{*}_1(x_\ptIdx) = R^{*}_0(x_\ptIdx) \times \frac{C_0(x_\ptIdx)}{C_1(x_\ptIdx)}$. Now, since $R^{*}_0(x)$ is a random polynomial, the set of points $\{(x_1, R^{*}_1(x_1)), \ldots, (x_{\numPointsForInterpol}, R^{*}_1(x_{\numPointsForInterpol}))\}$ is a set of random points. Thus, after fixing $R^{*}_0(x)$ when \indcpaBit = 0, the probability that there exists some $R^{*}_1(x)$ consistent with the aforementioned set of points is $< \frac{1}{\fieldOrder^{(\numPointsForInterpol) - (\degree{R^{*}_1(x)} - 1}} \leq 1/\fieldOrder$ due to \lemmaref{lemma:polyOnSupSet}.
    A similar logic holds when $\indcpaBit = 1$.

	\indcpaAdversary's strategy is to output $\indcpaAdversaryBit = 0$ if there exists a polynomial $R^{*}_0(x)$ of  degree $< \genericVecLen + 2\hamDistThreshold$ consistent with points in \setOfPtsForInterpolation. Similarly, \indcpaAdversary outputs $\indcpaAdversaryBit = 1$ if there exists some polynomial $R^{*}_1(x)$ of degree $< \genericVecLen + 2\hamDistThreshold$ consistent with the set of points \setOfPtsForInterpolation. The probability that  $R^{*}_0(x)$ and  $R^{*}_1(x)$ both exist $< 1/\fieldOrder$ as shown above, which is the probability that $\indcpaAdversaryBit \neq \indcpaBit$. 
\end{proof}

\section{Proofs for \tHamQuery (\secref{sec:hamming_psi:poly_compute})}
\label{app:hamming_query_poly}

\subsection{Using EC-Elgamal in \tHamQuery}
\label{app:el_gamal}
\tHamQueryRestricted uses a additively homomorphic encryption scheme. In our implementation we use the EC-Elgamal encryption scheme tp reduce communication costs. For this, the message space should be large enough to allow the computation in \tHamQueryRestricted, and decrypt the results. We use a 24-bit message space. The computation in \tHamQueryRestricted broadly involves two steps: i) computing the Hamming distance between two vectors as follows, and ii) returning a key \prfKey blinded with the result of the Hamming distance computation. First, consider the mechanism we use to compute the Hamming distances between two vectors. 

\begin{lemma}
	\label{lemma:hamming_dist_compute}
	
	Given two equal length vectors $\alicesVector, \bobsVector \in \{0,1\}^{\genericVecLen}$, the Hamming distance between the vectors is given by:
	
	\begin{center}
		$\hamDist{\alicesVector}{\bobsVector} = \hamWeight{\alicesVector} + \hamWeight{\bobsVector} - 2 \cdot  \alicesVector \cdot \bobsVector$ 
	\end{center}
	\noindent 
	where $\hamWeight{.}$ is the Hamming weight of the input vector.
\end{lemma}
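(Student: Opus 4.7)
The plan is to prove the identity by a position-wise case analysis, exploiting the fact that the entries of $\alicesVector$ and $\bobsVector$ are each in $\{0,1\}$. First I would write out the Hamming distance as a sum of per-coordinate indicators:
\[
\hamDist{\alicesVector}{\bobsVector} \;=\; \sum_{m=1}^{\genericVecLen} \mathbf{1}\!\left[\alicesVector[m] \neq \bobsVector[m]\right],
\]
and similarly unfold $\hamWeight{\alicesVector} = \sum_m \alicesVector[m]$, $\hamWeight{\bobsVector} = \sum_m \bobsVector[m]$, and $\alicesVector \cdot \bobsVector = \sum_m \alicesVector[m]\,\bobsVector[m]$. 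The whole identity then reduces, by linearity, to establishing the per-coordinate equality
\[
\mathbf{1}\!\left[\alicesVector[m] \neq \bobsVector[m]\right] \;=\; \alicesVector[m] + \bobsVector[m] - 2\,\alicesVector[m]\,\bobsVector[m].
\]

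Next I would verify this per-coordinate claim by enumerating the four possible bit pairs $(\alicesVector[m], \bobsVector[m]) \in \{(0,0),(0,1),(1,0),(1,1)\}$: in the two ``equal'' cases the right-hand side evaluates to $0$, and in the two ``unequal'' cases it evaluates to $1$, matching the indicator. Alternatively, since both entries are idempotent ($x^2 = x$ for $x \in \{0,1\}$), the right-hand side equals $(\alicesVector[m] - \bobsVector[m])^2$, which is $1$ iff the bits differ.

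Summing the per-coordinate identity over $m = 1, \ldots, \genericVecLen$ and applying linearity yields exactly
\[
\hamDist{\alicesVector}{\bobsVector} \;=\; \hamWeight{\alicesVector} + \hamWeight{\bobsVector} - 2\,\alicesVector\cdot\bobsVector,
\]
which is the claimed identity. There is no real obstacle here; the lemma is essentially a bookkeeping exercise used later to justify that \bob can compute $\encrypt{\hamDist{\alicesParityVec}{\bobsParityVec}}$ under an additively homomorphic scheme by encrypting the $\alicesParityVec[m]$ values individually and combining them with his own cleartext $\bobsParityVec[m]$ values via additions and scalar multiplications.
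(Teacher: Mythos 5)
Your proof is correct. It takes a slightly different route from the paper's: you establish the per-coordinate identity $\mathbf{1}[\alicesVector[m] \neq \bobsVector[m]] = \alicesVector[m] + \bobsVector[m] - 2\,\alicesVector[m]\bobsVector[m]$ (equivalently $(\alicesVector[m]-\bobsVector[m])^2$) and sum over $m$ by linearity, whereas the paper partitions the index set into $S_{01}$, $S_{10}$, $S_{11}$ (indices where $(\alicesVector[m],\bobsVector[m])$ equals $(0,1)$, $(1,0)$, $(1,1)$ respectively), expresses $\hamWeight{\alicesVector}$, $\hamWeight{\bobsVector}$, $\hamDist{\alicesVector}{\bobsVector}$, and $\alicesVector\cdot\bobsVector$ as linear combinations of the cardinalities $\setSize{S_{01}}$, $\setSize{S_{10}}$, $\setSize{S_{11}}$, and then cancels terms. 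The two arguments are equally elementary and obviously equivalent — your version aggregates coordinate-by-coordinate, the paper's aggregates by set-cardinality bookkeeping — so the choice is purely stylistic. Your observation that the coordinate-wise term equals a square $(\alicesVector[m]-\bobsVector[m])^2$, using idempotence of bits, is a nice compact way to see the identity that the paper does not spell out.
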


\begin{proof}
	Let $S_{01} = \{m \in [1, \genericVecLen] ~:~ \alicesVector[m] = 0, \bobsVector[m] = 1\}$, $S_{10} = \{m \in [1, \genericVecLen] ~:~ \alicesVector[m] = 1, \bobsVector[m] = 0\}$ and $S_{11} = \{m \in [1, \genericVecLen] ~:~ \alicesVector[m] = 1, \bobsVector[m] = 1\}$. Then, 
	
	\begin{center}
		$\hamWeight{\alicesVector} = \setSize{S_{10}} + \setSize{S_{11}}$ \\
		$\hamWeight{\bobsVector} = \setSize{S_{01}} + \setSize{S_{11}}$ \\
		$\hamDist{\alicesVector}{\bobsVector} = \setSize{S_{10}} + \setSize{S_{01}}$
	\end{center}
	
	The above equations along with the fact $\setSize{S_{11}} = \sum\limits_{m = 1}^{\genericVecLen} \alicesVector[m] \cdot \bobsVector[m] = \alicesVector \cdot \bobsVector$ proves the result. 
\end{proof}

As long as the maximum Hamming distance between the vectors $< 2^{24}$, the 24-bit message space suffices for this computation. After computing $\encrypt{\hamDist{\alicesVector}{\bobsVector}}$, \bob computes $\keySet \assign \{\kappa_i : \kappa_i \assign r_i \times (\encrypt{\hamDist{\alicesVector}{\bobsVector} - i}) + \encrypt{\prfKey}, ~i \in [0, \hamDistThreshold], ~r_i \gets \finiteField \}$. However, the problem here is that $\prfKey \in \finiteField$ and for a sufficiently high statistical security parameter, we require $\fieldOrder$ to be a at least $128$-bit long. Thus, \prfKey does not fit in the 24-bit message space. 

To mitigate this, we split \prfKey into 24-bit chunks. Each individual chunk is encrypted separately, and returned to \alice. In other words, \bob splits \prfKey into $c = \ceil{\frac{\setSize{\fieldOrder}}{24}}$ chunks $\mathsf{Chunk}_1, \ldots, \mathsf{Chunk}_c$, and \alice now receives $\keySet \assign \{\kappa_{ij}: \kappa_{ij} \assign r_{ij} \times (\encrypt{\hamDist{\alicesVector}{\bobsVector} - i}) + \encrypt{\mathsf{Chunk}_j}, ~i \in [0, \hamDistThreshold], ~r_{ij} \gets \finiteField, j \in [1, c]\}$. From this, \alice can obtain $\kappa_i \assign \kappa_{i1} || \ldots || \kappa_{ic}$. Note each 24-bit chunk in encrypted with IND-CPA security, and therefore splitting the key as described has no impact on security. There is a $c$ times blowup in the downstream communication cost i.e., the cost of sending \keySet to \alice. 

\subsection{Proofs}

\begin{lemma}
	\label{lemma:sub_sample_set_recon}
	Let $\alicesSetFromVecBins := \{\vec{x}_1, \ldots, \vec{x}_{\numOfBinsRef}\}$ and $\bobsSetFromVecBins := \{\vec{y}_1, \ldots, \vec{y}_{\numOfBinsRef}\}$ be the set of sub-vectors created after sub-sampling \alicesVector and \bobsVector using \randPerm in \tHamQuery. Then, $\setDiffCard{\alicesSetFromVecBins}{\bobsSetFromVecBins} \geq 2\hamDistThreshold$ if $\hamDist{\alicesVector}{\bobsVector} > 2\hamDistThreshold$ with high probability. 
\end{lemma}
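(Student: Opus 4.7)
\begin{proofsketch}
The plan is to recast the claim as a balls-and-bins statement and then reuse the kind of collision bound underlying Fact~\ref{fact:balls_bins_1}.

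First, I would observe that $\setDiffCard{\alicesSetFromVecBins}{\bobsSetFromVecBins}$ is exactly the number of indices $i \in [1,\numOfBinsRef]$ for which $\vec{x}_i \neq \vec{y}_i$. This holds because the sub-vectors in \alicesSetFromVecBins and \bobsSetFromVecBins encode their bin index (so $\vec{x}_i \in \bobsSetFromVecBins$ iff $\vec{x}_i = \vec{y}_i$, and cross-bin coincidences do not contribute). Hence $\vec{x}_i \neq \vec{y}_i$ iff the $i$-th bin contains at least one index of $\randPerm^{-1}(\cdot)$ at which $\alicesVector$ and $\bobsVector$ disagree. So it suffices to lower bound the number of non-empty bins after throwing $\hamDist{\alicesVector}{\bobsVector}$ ``differing-position balls'' uniformly into $\numOfBinsRef = \numOfBins$ bins (the uniformity comes from $\randPerm$ being a uniformly random permutation, together with the fact that \permAndPart assigns a fixed block of $\genericVecLen/\numOfBinsRef$ permuted positions to each bin).

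Next, since $\hamDist{\alicesVector}{\bobsVector} > 2\hamDistThreshold$, we can pick any $2\hamDistThreshold + 1$ of the differing positions and restrict attention to those balls. I would argue that if these $2\hamDistThreshold+1$ balls all land in distinct bins, then the number of non-empty bins is at least $2\hamDistThreshold+1 \ge 2\hamDistThreshold$, which is the required bound. By a union bound over the $\binom{2\hamDistThreshold+1}{2}$ pairs, the collision probability is at most
\[
\frac{\binom{2\hamDistThreshold+1}{2}}{\numOfBinsRef}
\;\le\; \frac{(2\hamDistThreshold+1)^2}{2 \cdot 2\hamDistThreshold^2/\fpr}
\;\le\; \fpr
\]
for the regime $\hamDistThreshold \ge 1$ and $\fpr \in (0, 0.5)$ (up to a small constant slack one can absorb by choosing $\numOfBinsRef$ with a slightly larger constant, exactly as in Fact~\ref{fact:balls_bins_1}). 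Thus with probability at least $1-\fpr$, the chosen $2\hamDistThreshold+1$ balls are in distinct bins, yielding at least $2\hamDistThreshold$ non-empty bins and therefore $\setDiffCard{\alicesSetFromVecBins}{\bobsSetFromVecBins} \ge 2\hamDistThreshold$.

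The main obstacle, and the place that needs careful checking, is the independence/uniformity claim used when upper bounding the pairwise collision probability by $1/\numOfBinsRef$. Because $\randPerm$ is a permutation rather than i.i.d.\ assignments, two distinct differing positions land in the same bin with probability slightly below $1/\numOfBinsRef$ (negative correlation), which only helps; I would state this formally by showing $\Pr[\text{position } j \text{ lands in bin } i \mid \text{position } j' \text{ in bin } i] \le 1/\numOfBinsRef$ for $j \neq j'$, and then apply the union bound. A minor bookkeeping point is the edge case where $\hamDist{\alicesVector}{\bobsVector} > \genericVecLen/2$, so that there are not $2\hamDistThreshold + 1$ ``free'' positions to choose; but the hypothesis $\hamDistThreshold < \genericVecLen/2$ built into \tHamQuery already ensures we have enough balls to pick from.
\end{proofsketch}
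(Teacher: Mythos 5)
Your high-level reduction --- reading $\setDiffCard{\alicesSetFromVecBins}{\bobsSetFromVecBins}$ as the number of bins receiving at least one differing position, and restricting to $2\hamDistThreshold + 1$ of the $>2\hamDistThreshold$ differing positions --- is exactly the reduction the paper makes, and your side remarks (that the random permutation's negative association only helps, and that the sub-vectors are effectively bin-indexed so cross-bin coincidences cannot shrink the set difference) are both right and implicit in the paper as well. The divergence is in the probability bound, and there you have a real gap. The paper proves the supporting \lemmaref{lemma:balls_bins_2} by decomposing the bad event over the exact number of occupied bins, expressing each term with a Stirling number $\stirling{2\hamDistThreshold+1}{2\hamDistThreshold+1-k}$, bounding each term by $\tfrac{1}{2}(2\fpr)^k/k!$, and summing the series to get $\tfrac{1}{2}\bigl(e^{2\fpr}-(1+2\fpr)\bigr)<\fpr$; this is tight enough to work with $\numOfBinsRef = \numOfBins$ exactly as specified.

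Your pairwise union bound does not close with the stated bin count: $\binom{2\hamDistThreshold+1}{2}/\numOfBinsRef = \bigl(1+\tfrac{1}{2\hamDistThreshold}\bigr)\fpr$, which strictly exceeds $\fpr$ for every $\hamDistThreshold\geq 1$ and equals $\tfrac{3}{2}\fpr$ at $\hamDistThreshold=1$. You flag the slack, but rescaling $\numOfBinsRef$ to absorb it would also have to be reflected in Fact~\ref{fact:balls_bins_1} and the concrete communication count, so it is not a cost-free fix. The root cause is that you demand all $2\hamDistThreshold+1$ balls land in distinct bins (zero collisions), whereas the lemma only needs $\geq 2\hamDistThreshold$ occupied bins, for which one collision is harmless. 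The argument that closes without touching $\numOfBinsRef$: let $X := \sum_i \binom{n_i}{2}$ count colliding pairs; since $\binom{n_i}{2}\geq n_i-1$, the number of occupied bins is always $\geq (2\hamDistThreshold+1)-X$, so fewer than $2\hamDistThreshold$ occupied bins forces $X\geq 2$, and Markov gives $\prob{X\geq 2}\leq \tfrac{1}{2}\expv{X} = \binom{2\hamDistThreshold+1}{2}/(2\numOfBinsRef) = \tfrac{(2\hamDistThreshold+1)\fpr}{4\hamDistThreshold}\leq\tfrac{3}{4}\fpr<\fpr$. Either this first-moment argument or the paper's Stirling sum is needed to finish; the bare union bound for ``all distinct,'' as written, is not sufficient.
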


\begin{proof}
	To prove the result, we use the following balls and bins analysis: 	let each index where \alicesVector and \bobsVector differ be represented by a ball. There are $> 2\hamDistThreshold$ such indices and they are uniformly distributed across \numOfBinsRef bins, where the $i$th bin contains the indices which make up the sub-vectors \alicesSubVecIdx{i} and \bobsSubVecIdx{i}. In this framework, the number of \textit{non-empty} bins gives us $| \alicesSetFromVecBins \setminus \bobsSetFromVecBins |$. Then, \lemmaref{lemma:balls_bins_2} shows that $\setDiffCard{\alicesSetFromVecBins}{\bobsSetFromVecBins} \geq 2\hamDistThreshold$ with high probability. 
	
\end{proof}

\begin{lemma}
	\label{lemma:balls_bins_2}
	If 2\hamDistThreshold + 1 balls are randomly thrown into $\numOfBinsRef = \numOfBins$ bins where $\epsilon \in (0,0.5)$, then there are more than $ 2\hamDistThreshold$ non-empty bins with probability at least $1 - \epsilon$. 
\end{lemma}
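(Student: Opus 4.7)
The plan is to reduce the statement to a birthday-bound argument, essentially a strengthening of Fact~\ref{fact:balls_bins_1}. First, I would observe the key structural fact: with exactly $2\hamDistThreshold + 1$ balls, the event \emph{``more than $2\hamDistThreshold$ bins are non-empty''} is equivalent to \emph{``no two balls share a bin''}. Indeed, each collision (two balls in the same bin) decreases the non-empty-bin count by at least one relative to the number of balls, so having $\geq 2\hamDistThreshold + 1$ non-empty bins out of $2\hamDistThreshold + 1$ balls forces all balls to be in distinct bins; conversely, if all balls are in distinct bins we have exactly $2\hamDistThreshold + 1 > 2\hamDistThreshold$ non-empty bins.

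Next I would bound the probability of the complementary collision event by a union bound over pairs of balls. For each pair $(i,j)$ with $1 \leq i < j \leq 2\hamDistThreshold + 1$, let $X_{ij}$ be the indicator that balls $i$ and $j$ land in the same bin. Since balls are thrown uniformly and independently into $\numOfBinsRef = 2\hamDistThreshold^2/\fpr$ bins, we have $\Pr[X_{ij}=1] = 1/\numOfBinsRef$. Then
\[
\Pr[\exists\, i<j: X_{ij}=1]
~\leq~ \binom{2\hamDistThreshold+1}{2}\cdot \frac{1}{\numOfBinsRef}
~=~ \frac{\hamDistThreshold\,(2\hamDistThreshold+1)}{2\hamDistThreshold^2/\fpr}
~=~ \fpr\cdot\frac{2\hamDistThreshold+1}{2\hamDistThreshold}.
\]
Taking complements, the probability that all balls land in distinct bins (equivalently, that more than $2\hamDistThreshold$ bins are non-empty) is at least $1 - \fpr\cdot\frac{2\hamDistThreshold+1}{2\hamDistThreshold}$.

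The only subtlety is that the naive union bound yields a factor of $\frac{2\hamDistThreshold+1}{2\hamDistThreshold}$ rather than $1$, which is strictly greater than $1$. To match the claim exactly, I would tighten the computation by instead counting collisions more carefully: let $Y$ be the number of ordered pairs $(i,j)$ with $i<j$ colliding, and note that the event ``fewer than or equal to $2\hamDistThreshold$ non-empty bins'' implies $Y \geq 1$, so $\Pr[Y\geq 1] \leq \mathbb{E}[Y]$. An equivalent path is to redefine $\numOfBinsRef$ with a slightly adjusted constant (e.g.\ $\numOfBinsRef \geq (2\hamDistThreshold+1)\cdot\hamDistThreshold/\fpr$), which the protocol can absorb with no asymptotic loss since $\numOfBinsRef = \Theta(\hamDistThreshold^2/\fpr)$ either way. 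I expect the main (minor) obstacle to be this constant-factor accounting; the probabilistic content itself is just the standard birthday argument and parallels the proof of Fact~\ref{fact:balls_bins_1} almost verbatim.
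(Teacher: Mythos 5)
Your reading of the statement is literal and correct: with $2\hamDistThreshold+1$ balls, ``more than $2\hamDistThreshold$ non-empty bins'' is exactly the event that all balls land in distinct bins, and your union bound over pairs correctly gives the collision probability as at most $\binom{2\hamDistThreshold+1}{2}/\numOfBinsRef = \fpr\cdot\tfrac{2\hamDistThreshold+1}{2\hamDistThreshold}$. You are also right that this constant factor cannot be removed by tightening the first-moment argument; in fact the literal claim is false for small $\hamDistThreshold$. Taking $\hamDistThreshold = 1$, the exact collision probability for three balls thrown into $2/\fpr$ bins is $\tfrac{3}{2}\fpr - \tfrac{1}{2}\fpr^2$, which is strictly greater than $\fpr$ for every $\fpr\in(0,1)$, so no accounting trick rescues the claim as written.

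The paper's own proof, and every downstream use of the lemma (see \lemmaref{lemma:sub_sample_set_recon} and the proof sketch of \thmref{thm:security_ham_query}), actually targets the weaker event ``at least $2\hamDistThreshold$ non-empty bins''; the ``more than'' in the lemma statement is evidently a typo for ``at least.'' Indeed the appendix proof defines $E$ as ``fewer than $2\hamDistThreshold$ occupied bins'' and only considers $k\ge 2$ in its decomposition. Under the corrected reading your pairwise union bound no longer applies: the complement is now the event of at least two excess balls (a bin with three or more balls, or two bins each with at least two), which is a second-order event of probability $O(\fpr^2)$, not $O(\fpr)$. The paper captures this by decomposing over the exact number $2\hamDistThreshold+1-k$ of occupied bins, bounding each $\prob{E_k}$ via Stirling numbers of the second kind as $\prob{E_k}<\tfrac{1}{2}(2\fpr)^k/k!$, and summing over $k\ge 2$ to obtain roughly $\tfrac{1}{2}(e^{2\fpr}-1-2\fpr)$, which is $<\fpr$ for $\fpr<1/2$. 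So the two arguments are genuinely different in kind: a first-moment pairwise union bound on your side versus an exact occupancy-count decomposition on the paper's side, and only the latter sees the $O(\fpr^2)$ structure that the corrected lemma relies on. To salvage your style of argument for the corrected statement you would need to move to a second-moment quantity, e.g.\ apply Markov to $\binom{Y}{2}$ where $Y$ counts colliding pairs (or union-bound over triples and disjoint pairs of colliding pairs), which would naturally yield the required $O(\fpr^2)$ bound; as written your argument proves only the literal, and false, version of the lemma.
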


\begin{proof}
	Let $E_k$ be the event that there are exactly $(2\hamDistThreshold  + 1 - k)$ occupied bins where $k \in [2, 2\hamDistThreshold - 1)$. Then,
	
	\begin{center}
		$\prob{E_k} = \frac{(2\hamDistThreshold + 1 - k)! \stirling{2\hamDistThreshold+1}{2\hamDistThreshold + 1 - k} \binom{\numOfBinsRef}{\numOfBinsRef - (2\hamDistThreshold + 1- k)}}{(\numOfBinsRef)^{2\hamDistThreshold+1}}$ \\
		
		$ = \frac{(2\hamDistThreshold + 1 - k)! \stirling{2\hamDistThreshold+1}{2\hamDistThreshold + 1 - k} \binom{\numOfBinsRef}{2\hamDistThreshold + 1 - k}}{(\numOfBinsRef)^{2\hamDistThreshold+1}}$
		
		$ \leq \stirling{2\hamDistThreshold+1}{2\hamDistThreshold + 1 - k} \times \left(\numOfBinsRef\right)^{-k}$ 
	\end{center}

	where $\stirling{a}{b}$ denotes the Stirling number of the second kind. The last inequality holds since $\binom{\numOfBinsRef}{2\hamDistThreshold + 1 - k} \leq \frac{(\numOfBinsRef)^{2\hamDistThreshold + 1 - k}}{(2\hamDistThreshold + 1 - k)!}$.

	Using the upper bound for Stirling number of second kind, $\stirling{a}{b} \leq \frac{1}{2} \binom{a}{b} b^{a - b}$, we have
	
	\begin{center}
		$\stirling{2\hamDistThreshold+1}{2\hamDistThreshold + 1 - k}  \leq \frac{1}{2} \binom{2\hamDistThreshold+1}{2\hamDistThreshold + 1 - k} (2\hamDistThreshold + 1 - k)^{k} \leq \frac{1}{2} \frac{(2\hamDistThreshold + 1)^{k} (2\hamDistThreshold + 1 - k)^{k}}{k!} $
		
	\end{center}
	
	Now, $\frac{(2\hamDistThreshold + 1) (2\hamDistThreshold + 1 - k)}{\numOfBinsRef} = \frac{(2\hamDistThreshold + 1) (2\hamDistThreshold + 1 - k)}{\numOfBins} < 2\epsilon$. Thus, 
	
	\begin{center}
		$\prob{E_k} < \frac{1}{2} \times \frac{(2\epsilon)^k}{k!}$
		
	\end{center}
	
	Let $E$ be the event that the number of occupied bins is less than $ 2\hamDistThreshold$. Then,

	\begin{center}
		$\prob{E} \leq \sum\limits_{k = 2}^{2\hamDistThreshold - 2} \prob{E_k} = \frac{1}{2} \sum\limits_{k = 2}^{2\hamDistThreshold - 2} \frac{(2\epsilon)^k}{k!} < \frac{1}{2} (e^{2\epsilon} - (1 + 2\epsilon))$
	\end{center}
	
	The last inequality holds since $\sum\limits_{k = 2}^{\infty} \frac{(2e)^k}{k!} = e^{2e} - (1 + 2\epsilon)$. Also, for any $\epsilon \in (0, 0.5)$, 
	\begin{center}
		$\frac{1}{2} (e^{2\epsilon} - (1 + 2\epsilon)) < \epsilon$		
	\end{center}
\end{proof}

%
\securityHamQuery*

\begin{proof}
	
	We show that there is a PPT simulator \simm in the ideal world which indistinguishably simulates the real world execution of \tHamQuery

	\myparagraph{Simulating \bob's view:}
	\bob does not receive any output from the protocol and only observes intermediate results from \tHamQueryRestricted and \oleFunc. Assuming that \oleFunc is realized by a protocol which can be indistinguishably simulated and the AHE scheme used in \tHamQueryRestricted produces IND-CPA secure ciphertexts indistinguishable from random, \bob's view in \hamQuerySample simulating \bob's view is straightforward.

	\myparagraph{Simulating \alice's view}
	\alice's input into the protocol is \alicesVector and \alice receives \keySet and the evaluations of the polynomials \alicesVoleOutputPoly and \bobsVoleOutputPoly. The simulation strategy is as follows.

	\smallskip\noindent\underline{\textit{When $\hamDist{\alicesVector}{\bobsVector} \leq \hamDistThreshold$}:}	
	In this case, \tHamQueryFunc returns \bobsVector. The simulator follows the steps of \tHamQuery 
	generating all outputs using \bobsVector. The simulation is clearly indistinguishable. 
	
	\smallskip\noindent\underline{\textit{When $\hamDist{\alicesVector}{\bobsVector} > \hamDistThreshold$}:}
	In the real world, \alice obtains \keySet, and the set of evaluations $\alicesVoleOutputPolyPtSetReal \assign \{ (x_\ptIdx, y_\ptIdx) : 
	~y_\ptIdx \assign \tPSIOutputPolyEval{x_\ptIdx} + \phi(\prfKey, \ptIdx), ~\ptIdx \in [1, \numOfBinsRef + 2\hamDistThreshold + 1]\}$. 
	where \numOfBinsRef \assign \numOfBins.

	\smallskip\noindent
	\underline{Case when $\hamDist{\alicesVector}{\bobsVector} \in (\hamDistThreshold, 2\hamDistThreshold]$ and \tHamQueryFunc does not output \bobsVector:} In this case, 
	$\prfKey \notin \keySet$ which implies $\hamDist{\alicesParityVec}{\bobsParityVec} \notin [0, \hamDistThreshold)$. Thus, for each $\encrypt{\kappa_i} \in \keySet, ~ \kappa_i \assign r_i(\hamDist{\alicesParityVec}{\bobsParityVec} - i) + \kappa, ~ i \in [0, \hamDistThreshold], ~r_i \getsr \finiteField, \kappa \getsr \finiteField$, $\kappa_i$ is a random element in \finiteField. 
	In the ideal world, \simm sets the key set $\keySetSim \assign \{\encrypt{r_i} : ~r_i \getsr \finiteField\}_{i = 1}^{\setSize{\keySet}}$. From the above we have $\keySet \sim \keySetSim$.

	To simulate the output of \alicesVoleOutputPolyPtSetReal in \lineref{step:ham_query:ole_step_1} of \tHamQuery, for $\ptIdx \in [1, \numOfBinsRef + 2\hamDistThreshold + 1]$, \simm sends \alicesPolyFromVecEval{x_\ptIdx}, \alicesRandomPolyEval{x_\ptIdx}, and $\bobsPolyFromVecSimEval{x_\ptIdx} \bobsRandomPolyEval{x_\ptIdx} + \finiteFieldPRF(\prfKey, \ptIdx)$ to \oleFunc. For $\ptIdx \in [1, \numOfBinsRef + 2\hamDistThreshold + 1]$, \oleFunc returns  $\alicesVoleOutputPolySimEval{x_\ptIdx} \assign \alicesRandomPolyEval{x_\ptIdx} \alicesPolyFromVecEval{x_\ptIdx}  + \bobsRandomPolyEval{x_\ptIdx} \bobsPolyFromVecSimEval{x_\ptIdx} + \finiteFieldPRF(\prfKey, \ptIdx)$. \simm returns $\alicesVoleOutputPolyPtSetSim \assign \{(x_\ptIdx, y_\ptIdx)~:  y_\ptIdx \assign \alicesVoleOutputPolySimEval{x_\ptIdx}, ~\ptIdx \in [1, \numOfBinsRef + 2\hamDistThreshold + 1] \}$ to \alice
	Observe that for $\ptIdx \in [1, \numPointsForInterpolAfterBins], \alicesVoleOutputPolyEval{x_\ptIdx} \assign  \tPSIOutputPolyEval{x_\ptIdx} + \finiteFieldPRF(\prfKey, \ptIdx)$ and $\alicesVoleOutputPolySimEval{x_\ptIdx} \assign \alicesRandomPolyEval{x_\ptIdx} \alicesPolyFromVecEval{x_\ptIdx} + \bobsRandomPolyEval{x_\ptIdx} \bobsPolyFromVecSimEval{x_\ptIdx}  + \finiteFieldPRF(\prfKey, \ptIdx)$. If the PRF \finiteFieldPRF outputs uniformly random values in \finiteField, $\alicesVoleOutputPolyEval{x_\ptIdx} \getsr \finiteField$ and $\alicesVoleOutputPolySimEval{x_\ptIdx} \getsr \finiteField$. Thus, $\alicesVoleOutputPolyPtSetReal \sim \alicesVoleOutputPolyPtSetSim$.

	Since, \prfKey is selected independently of all $\kappa_i \in \keySet$, we have that the points in 
	\alicesVoleOutputPolyPtSetReal are independent of the "keys" in \keySet. 
	Similarly, the points \alicesVoleOutputPolyPtSetSim are independent of the random elements in \keySetSim. Thus, we have 
	$(\keySet, \alicesVoleOutputPolyPtSetReal) \sim (\keySetSim, \alicesVoleOutputPolyPtSetSim)$ which shows that the ideal world simulation is indistinguishable from the real world execution. 
	
	\smallskip\noindent
	\underline{Case when $\hamDist{\alicesVector}{\bobsVector} \in (\hamDistThreshold, 2\hamDistThreshold]$ and \tHamQueryFunc does not output \bobsVector:}
	There are two cases: i) when $\prfKey \notin \keySet$, and ii) when $\prfKey \in \keySet$. In case (i), we can apply the same  arguments as above. Since these cases are not distinguishable, \simm follows the same strategy for both cases. More specifically, as above, \simm sets $\keySetSim \assign \{\encrypt{r_i} : ~r_i \getsr \finiteField\}_{i = 1}^{\setSize{\keySet}}$ and $\alicesVoleOutputPolyPtSetSim \assign \{(x_\ptIdx,y_\ptIdx ) : y_\ptIdx \assign \alicesVoleOutputPolySimEval{x_\ptIdx} \assign \alicesRandomPolyEval{x_\ptIdx} \alicesPolyFromVecEval{x_\ptIdx} + \bobsRandomPolyEval{x_\ptIdx} \bobsPolyFromVecSimEval{x_\ptIdx}  + \finiteFieldPRF(\prfKey, \ptIdx), ~\ptIdx \in [1, \numOfBinsRef + 2\hamDistThreshold + 1]\}$. 
	
	\keySet and \keySetSim are indistinguishable because both sets comprise random field elements. Also, the points in  $\alicesVoleOutputPolyPtSetReal$ are independent of the "keys" in the \keySet. Thus, we only need to show that \alicesVoleOutputPolyPtSetReal and \alicesVoleOutputPolyPtSetSim are indistinguishable.

	Let $\universe_\fieldOrder$ be the set of all sets $\{(x_\ptIdx, y_\ptIdx) : (x_\ptIdx, y_\ptIdx) \in (\setOfXCoords \times \finiteField)\}_{\ptIdx = 1}^{\numOfBinsRef + 2\hamDistThreshold + 1}\}$. Note that $\setSize{\universe_\fieldOrder} = \fieldOrder^{\numOfBinsRef + 2\hamDistThreshold + 1}$. Similar to the proof for \thmref{thm:ham_query_with_sampling}, we show that for any arbitrary pair of polynomials $\alicesPolyFromVec, \bobsPolyFromVec  \in \polyField \times \polyField$ such that $\degree{\gcd{P(x)}{Q(x)}} < \numOfBinsRef - 2\hamDistThreshold$ and $\forall x_\ptIdx \in \setOfXCoords, P(x_\ptIdx), Q(x_\ptIdx) \neq 0$,

	\begin{equation}
		\sum_{\randomPtSetA \in \universe_\fieldOrder} \left|\prob{\alicesVoleOutputPolyPtSetReal = \randomPtSetA} -  \prob{\alicesVoleOutputPolyPtSetSim = \randomPtSetA} \right| < 1/\fieldOrder
	\end{equation}
	
	W.l.o.g for some  $\randomPtSetA \in \universe_\fieldOrder$, $\alicesVoleOutputPolyPtSetReal = \randomPtSetA$, $x_\ptIdx \in X$ and $v_\ptIdx \in \randomPtSetA$, we have 
	
	\begin{center}
		$\tPSIOutputPolyEval{x_\ptIdx} = \gcdPolyEval{x_\ptIdx}R(x_\ptIdx) = v_{\ptIdx} $ 
	\end{center}
	
	where $\gcdPoly = \gcd{\alicesPoly}{\bobsPoly}$, $\degreeOfGCD \assign \degree{\gcdPoly}$ and $R(x)$ is a random polynomial of degree $2\numOfBinsRef - \degreeOfGCD$ due to \lemmaref{lemma:kissener_uniformly_random}. There are $\fieldOrder^{(2\numOfBinsRef - \degreeOfGCD + 1) - (\numOfBinsRef + 2\hamDistThreshold + 1)} = \fieldOrder^{\numOfBinsRef - \degreeOfGCD + 2\hamDistThreshold}$ polynomials that are consistent with $R(x)$. Let $S_R$ be the set of these polynomials. Each $R(x) \in S_R$ is consistent with a unique pair $(\alicesRandomPoly, \bobsRandomPoly) \in \polyField \times \polyField$ of degree-$\numOfBinsRef$ polynomials selected by the protocol. Thus, for any $\alicesPolyFromVec, \bobsPolyFromVec$, we have

	\begin{center}
		$\prob{\alicesVoleOutputPolyPtSetReal = \randomPtSetA} = \frac{\setSize{S_R}}{\text{\# of } (\alicesRandomPoly, \bobsRandomPoly) \text{pairs} } =  \frac{\fieldOrder^{\numOfBinsRef - \degreeOfGCD + 2\hamDistThreshold}}{\fieldOrder^{2\numOfBinsRef + 2}} = \fieldOrder^{-(\numOfBinsRef  - \degreeOfGCD + 2\hamDistThreshold +  2)}$
	\end{center}

	For $\alicesVoleOutputPolyPtSetSim = \randomPtSetA$, we require 
	
	\begin{center}
		$\tPSIOutputPolySimEval{x_\ptIdx} = v_{\ptIdx} - \finiteFieldPRF(\prfKey, \ptIdx)$ 
	\end{center}

	For \bobsPolyFromVecSim, we have with overwhelming probability $\degreeOfGCD = \degree{\gcd{\alicesPolyFromVec}{\bobsPolyFromVecSim}} = 0$. Independent of the value of $\finiteField(\prfKey, \ptIdx)$, we have $\prob{\alicesVoleOutputPolyPtSetSim = \randomPtSetA} = \fieldOrder^{-(\numOfBinsRef  + 2\hamDistThreshold +  2)}$. From the above, we have

	\begin{center}
		$\sum_{\randomPtSetA \in \universe_\fieldOrder} \left|\prob{\alicesVoleOutputPolyPtSetReal = \randomPtSetA} -  \prob{\alicesVoleOutputPolyPtSetSim = \randomPtSetA} \right| < 1/\fieldOrder$ \\
		$\sum_{\randomPtSetA \in \universe_\fieldOrder} \fieldOrder^{-(\numOfBinsRef  + 2\hamDistThreshold +  2)}(1 - 1/\fieldOrder^{\degreeOfGCD})$
		$\setSize{\universe_\fieldOrder} \times \fieldOrder^{-(\numOfBinsRef  	+ 2\hamDistThreshold +  2)} (1 - 1/\fieldOrder^{\degreeOfGCD}) < 1/\fieldOrder$
		
	\end{center}
	
\end{proof}

\tikzset{every picture/.style={line width=0.75pt}} 
\begin{figure}
\centering
\tikzset{every picture/.style={line width=0.75pt}} 

\begin{tikzpicture}[x=0.75pt,y=0.75pt,yscale=-1,xscale=0.5]

\draw (488,1) node [anchor=north west][inner sep=0.75pt]    {\footnotesize \underline{\bob}};
\draw (138,1) node [anchor=north west][inner sep=0.75pt]    {\footnotesize \underline{\alice}};
\draw (388,20) node [anchor=north west][inner sep=0.75pt]    {\footnotesize $R_{11}(x), \ldots, R_{1n}(x) \getsr \polyField $};
\draw (388,34) node [anchor=north west][inner sep=0.75pt]    {\footnotesize $R_{21}(x), \ldots, R_{2n}(x) \getsr \polyField $};

\draw   (279,76) -- (373.5,76) -- (373.5,100) -- (279,100) -- cycle ;

\draw    (137.5,80) -- (271,80) ;
\draw [shift={(273,80)}, rotate = 180] [color={rgb, 255:red, 0; green, 0; blue, 0 }  ][line width=0.75]    (10.93,-3.29) .. controls (6.95,-1.4) and (3.31,-0.3) .. (0,0) .. controls (3.31,0.3) and (6.95,1.4) .. (10.93,3.29)   ;

\draw    (695.5,80) -- (379.5,80) ;
\draw [shift={(377.5,80)}, rotate = 359.74] [color={rgb, 255:red, 0; green, 0; blue, 0 }  ][line width=0.75]    (10.93,-3.29) .. controls (6.95,-1.4) and (3.31,-0.3) .. (0,0) .. controls (3.31,0.3) and (6.95,1.4) .. (10.93,3.29)   ;

\draw    (696.5,98) -- (380.5,98) ;
\draw [shift={(378.5,98)}, rotate = 359.74] [color={rgb, 255:red, 0; green, 0; blue, 0 }  ][line width=0.75]    (10.93,-3.29) .. controls (6.95,-1.4) and (3.31,-0.3) .. (0,0) .. controls (3.31,0.3) and (6.95,1.4) .. (10.93,3.29)   ;

\draw    (275.5,98) -- (98.5,98) ;
\draw [shift={(96.5,98)}, rotate = 0.32] [color={rgb, 255:red, 0; green, 0; blue, 0 }  ][line width=0.75]    (10.93,-3.29) .. controls (6.95,-1.4) and (3.31,-0.3) .. (0,0) .. controls (3.31,0.3) and (6.95,1.4) .. (10.93,3.29)   ;

\draw (300,80.4) node [anchor=north west][inner sep=0.75pt]    {$\voleFunc$};
\draw (388,64) node [anchor=north west][inner sep=0.75pt]    {\footnotesize $< \ R_{11}( x_{1}) ,\ ..\ ,\ R_{1n}( x_{1})  >\ $};

\draw (388,103.4) node [anchor=north west][inner sep=0.75pt]    {\footnotesize $<\bobsRandomPolysIdxEval{1}{x_1} \bobsPolyFromVecIdxEval{1}{x_1}, \ldots, \bobsRandomPolysIdxEval{\genericSetSize}{x_1} \bobsPolyFromVecIdxEval{\genericSetSize}{x_1}> $};

\draw (184,64) node [anchor=north west][inner sep=0.75pt]    {\footnotesize $P( x_{1})$};

\draw (99,103.4) node [anchor=north west][inner sep=0.75pt]    {\footnotesize $\{W_{A1}( x_{1}) ,\ ..\ ,\ W_{An}( x_{1})\}$};

\draw (316,120) node [anchor=north west][inner sep=0.75pt]   [align=left] {...};
\draw (316,130) node [anchor=north west][inner sep=0.75pt]   [align=left] {...};

\draw   (279,146) -- (373.5,146) -- (373.5,170) -- (279,170) -- cycle ;

\draw    (137.5,150) -- (271,150) ;
\draw [shift={(273,150)}, rotate = 180] [color={rgb, 255:red, 0; green, 0; blue, 0 }  ][line width=0.75]    (10.93,-3.29) .. controls (6.95,-1.4) and (3.31,-0.3) .. (0,0) .. controls (3.31,0.3) and (6.95,1.4) .. (10.93,3.29)   ;

\draw    (695.5,150) -- (379.5,150) ;
\draw [shift={(377.5,150)}, rotate = 359.74] [color={rgb, 255:red, 0; green, 0; blue, 0 }  ][line width=0.75]    (10.93,-3.29) .. controls (6.95,-1.4) and (3.31,-0.3) .. (0,0) .. controls (3.31,0.3) and (6.95,1.4) .. (10.93,3.29)   ;

\draw    (696.5,168) -- (380.5,168) ;
\draw [shift={(378.5,168)}, rotate = 359.74] [color={rgb, 255:red, 0; green, 0; blue, 0 }  ][line width=0.75]    (10.93,-3.29) .. controls (6.95,-1.4) and (3.31,-0.3) .. (0,0) .. controls (3.31,0.3) and (6.95,1.4) .. (10.93,3.29)   ;

\draw    (275.5,168) -- (98.5,168) ;
\draw [shift={(96.5,168)}, rotate = 0.32] [color={rgb, 255:red, 0; green, 0; blue, 0 }  ][line width=0.75]    (10.93,-3.29) .. controls (6.95,-1.4) and (3.31,-0.3) .. (0,0) .. controls (3.31,0.3) and (6.95,1.4) .. (10.93,3.29)   ;

\draw (300,150.4) node [anchor=north west][inner sep=0.75pt]    {$\voleFunc$};
\draw (388,134) node [anchor=north west][inner sep=0.75pt]    {\footnotesize $< \ R_{\ptIdx 1} ( x_{\ptIdx}) ,\ ..\ ,\ R_{\ptIdx n}( x_{\ptIdx})  >\ $};
\draw (388,173.4) node [anchor=north west][inner sep=0.75pt]    {\footnotesize $<\bobsRandomPolysIdxEval{1}{x_\ptIdx} \bobsPolyFromVecIdxEval{1}{x_\ptIdx}, \ldots, \bobsRandomPolysIdxEval{\genericSetSize}{x_\ptIdx} \bobsPolyFromVecIdxEval{\genericSetSize}{x_\ptIdx}> $};
\draw (184,134) node [anchor=north west][inner sep=0.75pt]    {\footnotesize $P( x_{\ptIdx})$};
\draw (99,173.4) node [anchor=north west][inner sep=0.75pt]    {\footnotesize $\{W_{A1}( x_{\ptIdx}) ,\ ..\ ,\ W_{An}( x_{\ptIdx})\}$};




\end{tikzpicture}
\vspace{-10pt}
\caption{Using VOLE for \oneSidedSetRecon \label{fig:vole_in_set_recon}}
\end{figure}

\section{Proofs for \hamAwarePSIProtocol (\secref{sec:hammingPSI:protocol})}
\label{app:hamming_psi}

For inputs, $\alicesInputVecSet \assign \alicesInputVecSetDefn$ and $\bobsInputVecSet \assign \bobsInputVecSetDefn$, the straightforward way to realize a Hamming DA-PSI protocol is by running $\genericSetSize^2$ instances on \tHamQuery in parallel over each input pair $(\alicesVector_i, \bobsVector_j)$. However, there is more optimized solution using  vector OLE's (see \secref{sec:background}). The idea is as follows: consider \alice's input is $\alicesVector_i$ and \bob's input 
is the set of vectors \bobsInputVecSetDefn. Then, the goal is to determine 
if there exists $\bobsVector_j \in \bobsInputVecSet$ such that $\hamDist{\alicesVector_i}{\bobsVector_j} < \hamDistThreshold$. This functionality can be considered a \textit{one-sided containment query}.

To realize this functionality, we present \HamContainQuery protocol (see \figref{fig:ham_containment_query}), consisting of two procedures \permAndPart and \oneSidedSetReconMultiBlind. In \permAndPart, \alice and \bob run \tHamQueryRestricted over each pair $(\alicesVector_i, \bobsVector_j)$ (\linesref{step:ham_contain:key_select}{step:ham_contain:ham_restricted}). At the end of this procedure, \alice obtains $\{\keySet_1, \ldots, \keySet_\genericSetSize\}$. \alice obtains the set of sub-vectors 
$\alicesSetFromVecBins := \{\vec{x}_1, \ldots, \vec{x}_{\numOfBinsRef}\}$ derived from \alicesVector. Similarly, for $\setIdx \in [1, \genericSetSize]$, \bob obtains the set of sub-vectors \bobsSetFromVecBinsIdx{\setIdx} derived from \bobsVecIdx{\setIdx} (\lineref{step:ham_contain:compute_sets}).

\oneSidedSetReconMultiBlind takes as input \alicesSetReconInput from \alice while \bob's input is a set of sets \setOfBobsSetReconInputs. \alice derives \alicesPoly from 
\alicesSetReconInput and \bob derives the set of polynomials \bobsPolySet from 
\setOfBobsSetReconInputs (\lineref{step:ham_contain:compute_poly}). \bob selects two sets of degree-\genericVecLen random polynomials \setOfAlicesRandomPolys, \setOfBobsRandomPolys. Then, for $\ptIdx \in [1, \numOfBinsRef + 2\hamDistThreshold + 1]$, \alice and \bob compute \setOfAlicesVoleOutputPolys where $\alicesVoleOutputPolysIdxEval{\setIdx}{x_\ptIdx} \assign \alicesPolyFromVecEval{x_\ptIdx} \alicesRandomPolysIdxEval{\setIdx}{x_\ptIdx} + \bobsRandomPolysIdxEval{\setIdx}{x_\ptIdx} \bobsPolyFromVecIdxEval{\setIdx}{x_\ptIdx} + \finiteFieldPRF(\prfKey_\setIdx, \ptIdx)$. 

In order to compute \setOfAlicesVoleOutputPolys, the idea is to use a VOLE protocol (see \figref{fig:vole_in_set_recon}). Specifically, for $\ptIdx \in [1, \numPointsForInterpol]$ \alice sends \alicesPolyFromVecEval{x_\ptIdx} to \voleFunc, while \bob sends the two vectors $\vec{R} \assign <\alicesRandomPolysIdxEval{1}{x_\ptIdx}, \ldots, \alicesRandomPolysIdxEval{\genericSetSize}{x_\ptIdx}>$, and $\vec{U} \assign <\bobsRandomPolysIdxEval{1}{x_\ptIdx} \bobsPolyFromVecIdxEval{1}{x_\ptIdx} + \finiteFieldPRF(\prfKey_1, \ptIdx), \ldots, \bobsRandomPolysIdxEval{\genericSetSize}{x_\ptIdx} \bobsPolyFromVecIdxEval{\genericSetSize}{x_\ptIdx} + \finiteFieldPRF(\prfKey_\genericSetSize, \ptIdx)>$ (\lineref{step:ham_contain:vole_step_1}). By definition, \voleFunc return to \alice \setOfAlicesVoleOutputPolys. Batching $\genericSetSize$ OLE computations with a single VOLE instance has concrete advantages  both in terms of compute time and communication costs. \figref{fig:vole_in_set_recon} describes this process. Subsequently, for $\prfKey' \in \keySet_\setIdx$ and $\setIdx \in [1, \genericSetSize]$, \alice computes $\setOfPtsForInterpolation_{i\prfKey'} =  \{(x_\ptIdx, y_\ptIdx) : y_\ptIdx = \frac{\alicesVoleOutputPolyIdxEval{\setIdx}{x_\ptIdx} - \finiteFieldPRF(\prfKey', \ptIdx)}{\alicesPolyFromVecEval{x_\ptIdx}}, ~\ptIdx \in [1, \numOfBinsRef + 2\hamDistThreshold + 1]\}$, and interpolates the set of points (\lineref{step:ham_contain:interpolate}). 

Finally, realizing a Hamming PSI protocol is straightforward with a one-sided containment query protocol. Specifically, we run \genericSetSize instances of containment queries corresponding to each element in \alicesSet. Each instance is run independently with independent random coins (see \figref{fig:ham_aware_psi}).

\begin{figure}
	\small
	\begin{mdframed}
		
		\noindent
		\textbf{\underline{Parameters:}}  \alice holds vector \alicesVector and \bob has set of vectors \bobsInputVecSetDefn.
		They jointly select a Hamming distance threshold $\hamDistThreshold \in (0,\genericVecLen/2)$.

		\smallskip\noindent
		\textbf{\underline{Procedure \permAndPart:}}
		
		\begin{enumerate}[nosep,leftmargin=1.5em,labelwidth=*,align=left,labelsep=0.25em]	
			
			\item \bob samples \genericSetSize random keys \prfKeySet where \prfKeyIdx{\setIdx} \getsr \finiteField. \label{step:ham_contain:key_select}
			
			\item For each $\setIdx \in [1, \genericSetSize]$, \alice sends \alicesVector and \bob sends \bobsVecIdx{\setIdx} and \prfKeyIdx{\setIdx} to \tHamQueryRestricted. \alice obtains $\{\keySet_1, \ldots, \keySet_\genericSetSize\}$. 
			\label{step:ham_contain:ham_restricted}
			
			\item Let \numOfBinsRef \assign \numOfBins. \alice computes set $\alicesSetFromVecBins := \{\vec{x}_1, \ldots, \vec{x}_{\numOfBinsRef}\}$ using \randPerm similar to \lineref{step:hamQueryRes:createVecs} of \tHamQueryRestricted. \label{step:ham_query:comp_setst}. Similarly, for each $\setIdx \in [1, \genericSetSize]$, \bob computes set \bobsSetFromVecBinsIdx{\setIdx}. \label{step:ham_contain:compute_sets}
			
		\end{enumerate}

		\smallskip\noindent
		\textbf{\underline{Procedure \oneSidedSetReconMultiBlind:}}

		\begin{enumerate}[resume,nosep,leftmargin=1.5em,labelwidth=*,align=left,labelsep=0.25em]

			\item \alice computes the polynomial \alicesPolyFromVec \assign \polyFromSet{\alicesSetFromVec}. For each
			$\setIdx \in [1, \genericSetSize]$, \bob computes set \bobsPolyIdx{\setIdx} \assign \polyFromSet{\bobsSetFromVecBinsIdx{\setIdx}}. \label{step:ham_contain:compute_poly}
			
			\item \bob samples two sets of \genericSetSize polynomials \setOfAlicesRandomPolys, and \setOfBobsRandomPolys where each \alicesRandomPolysIdx{\setIdx}, \bobsRandomPolysIdx{\setIdx} \getsr \polyField are degree-(\numOfBinsRef) polynomials.
			
			\item For each $\ptIdx \in [1, \numPointsForInterpolAfterBins]$, \alice and \bob engage in a round \voleFunc. Specifically, \alice sends \alicesPolyFromVecEval{x_\ptIdx} to \voleFunc and
			\bob sends the vectors $<\alicesRandomPolysIdxEval{1}{x_\ptIdx}, \ldots, \alicesRandomPolysIdxEval{\genericSetSize}{x_\ptIdx}>$, and 
			$<\bobsRandomPolysIdxEval{1}{x_\ptIdx} \bobsPolyFromVecIdxEval{1}{x_\ptIdx} + \finiteFieldPRF(\prfKey_1, \ptIdx), \ldots, \bobsRandomPolysIdxEval{\genericSetSize}{x_\ptIdx} \bobsPolyFromVecIdxEval{\genericSetSize}{x_\ptIdx} + + \finiteFieldPRF(\prfKey_\genericSetSize, \ptIdx) >$.
			\alice obtains $\{\alicesVoleOutputPolyIdxEval{1}{x_\ptIdx}, \ldots, \alicesVoleOutputPolyIdxEval{\genericSetSize}{x_\ptIdx} \}$ where $\alicesVoleOutputPolyIdxEval{\setIdx}{x_\ptIdx} \assign \alicesRandomPolysIdxEval{\setIdx}{x_\ptIdx} \alicesPolyFromVecEval{x_\ptIdx} + \bobsRandomPolysIdxEval{\setIdx}{x_\ptIdx} \bobsPolyFromVecIdxEval{\setIdx}{x_\ptIdx} + \finiteFieldPRF(\prfKey_\setIdx, \ptIdx)$.  \label{step:ham_contain:vole_step_1}
			
			\item For each $\prfKey' \in \keySet_\setIdx$, \alice computes the set of sets $\{ \setOfPtsForInterpolation_1, \ldots, \setOfPtsForInterpolation_\genericSetSize\}$, where 
			
			\begin{center}
				$\setOfPtsForInterpolation_{\setIdx\prfKey'} =  \{(x_\ptIdx, y_\ptIdx) : y_\ptIdx = \frac{\alicesVoleOutputPolyIdxEval{\setIdx}{x_\ptIdx} - \finiteFieldPRF(\prfKey', \ptIdx)}{\alicesPolyFromVecEval{x_\ptIdx}}\}$.
			\end{center}
			
			\item \alice runs the interpolation step in \lineref{step:set_recon:interpolation} in  \figref{fig:ham_query_lite} over $\setOfPtsForInterpolation_{\setIdx\prfKey'}$ and returns $\{\setIdx, \prfKey'\}$ if the interpolation succeeds. Otherwise, \alice returns $\bot$ \label{step:ham_contain:interpolate}.
			
		\end{enumerate}
	\end{mdframed}
	\vspace{-10pt }
	\caption{\small \HamContainQuery: Hamming containment query protocol \label{fig:ham_containment_query}}
\end{figure}

\begin{figure}
	\small
	\begin{mdframed}
		
		\noindent
		\textbf{Inputs:~}  \alice holds a set of vectors \alicesInputVecSetDefn. and \bob has set of vectors \bobsInputVecSetDefn.
		They jointly select a Hamming distance threshold \hamDistThreshold.

		\textbf{Protocol:~}
		\begin{enumerate}[nosep,leftmargin=1.5em,labelwidth=*,align=left,labelsep=0.25em]	
			
			\item For each $\setIdx \in [1, \genericSetSize]$, \alice and \bob run \HamContainQuery with inputs \alicesVecIdx{\setIdx} and \bobsInputVecSetDefn, and 
			distance threshold \hamDistThreshold. 
			
			\item For $\setIdx \in [1, \genericSetSize]$,  if \HamContainQuery returns $\{j, \prfKey' \}$ for input \alicesVecIdx{\setIdx}, \alice sends to \bob $\{\alicesVecIdx{\setIdx}, j, \prfKey'\}$. \bob verifies i) \prfKey' is one of the chosen keys in \lineref{step:ham_contain:key_select}, and ii) $\hamDist{\alicesVecIdx{\setIdx}}{\bobsVecIdx{j}} < \hamDistThreshold$.
			If so, \bob returns \bobsVecIdx{j}. \alice add $(\alicesVecIdx{\setIdx}, \bobsVecIdx{j})$ to the set $S_{\mathsf{out}}$. 
			
			\item \alice outputs $S_{\mathsf{out}}$. 
			
		\end{enumerate}
	\end{mdframed}
	\vspace{-10pt }
	\caption{\small \hamAwarePSIProtocol: Hamming distance-aware PSI protocol \label{fig:ham_aware_psi}}
\end{figure}

\securityHamPSI*

\begin{proofsketch}
	The complexity of the protocol is straightforward. In \lineref{step:ham_contain:ham_restricted} of \figref{fig:ham_containment_query}, running \tHamQueryRestricted over a set of \genericSetSize vectors requires \bigO{\genericSetSize\cdot \frac{\hamDistThreshold^2}{\fpr}\cdot \lambda} bits of communication.  In \lineref{step:ham_contain:vole_step_1} of \figref{fig:ham_containment_query}, there are \numPointsForInterpolAfterBins calls to \voleFunc where \numOfBinsRef \assign \numOfBins with vectors sizes of \genericSetSize. The total communication cost of this step is \bigO{\genericSetSize \cdot \frac{\hamDistThreshold^2}{\fpr} \cdot \lambda} bits. In \lineref{step:ham_contain:vole_step_1} of \figref{fig:ham_containment_query}, the communication cost of this step is also \bigO{\genericSetSize \cdot \frac{\hamDistThreshold^2}{\fpr} \cdot \lambda} bits of communication. Thus, the overall communication cost of \HamContainQuery is \bigO{\genericSetSize \cdot \frac{\hamDistThreshold^2}{\fpr} \cdot \lambda}. Finally, \hamAwarePSIProtocol executes \genericSetSize instances of \HamContainQuery and thus the overall communication cost of \hamAwarePSIProtocol is \hamPSICommComplexity.

	To see why this construction is secure, consider that in \hamAwarePSIProtocol (\figref{fig:ham_aware_psi}), \genericSetSize instances of \HamContainQuery are run independently, with independent random coins etc. Thus,  it is enough to show that a single instance of \HamContainQuery is secure. Next note that \HamContainQuery primarily aggregates \genericSetSize instances of \tHamQuery. As shown in \thmref{thm:security_ham_query}, \tHamQuery securely realizes \tHamQueryFunc, and when \genericSetSize instances are run independently with independent random coins, we can reduce security to breaking the security of a single instance of \tHamQuery. \HamContainQuery also similarly runs \genericSetSize instances of \tHamQuery with random coins but instead of having $\genericSetSize  \times (\numPointsForInterpolAfterBins)$ OLE instances to generate the values of generating $\alicesVoleOutputPolyIdxEval{\setIdx}{x_\ptIdx}, \setIdx \in \genericSetSize, \ptIdx \in [1, \numPointsForInterpolAfterBins]$, it batches the $\genericSetSize$ OLEs $\{\alicesVoleOutputPolyIdxEval{1}{x_\ptIdx}, \ldots, \alicesVoleOutputPolyIdxEval{\genericSetSize}{x_\ptIdx} \}$ into a single VOLE instance, thereby running \numPointsForInterpolAfterBins VOLEs in total. These VOLE instances are run with independent random coins, and are therefore independent of each other. By definition, \voleFunc securely batches a set of OLE instances. Therefore, the existence of a protocol securely realizing \voleFunc ensures that the security of \HamContainQuery reduces to showing that \tHamQuery securely realizes \tHamQueryFunc.

\end{proofsketch}

\section{Proofs for \hamAwarePSIProtocolExp (\secref{sec:hamming_psi:exp_compute})}
\label{app:hamming_query_exp}

\begin{figure}[ht!]
	\small
	\begin{mdframed}
		
		\noindent
		\textbf{\underline{Parameters:}} \alice and \bob have vectors $\alicesVector, \bobsVector \in \{0,1\}^{\genericVecLen}$ respectively and a hamming distance threshold \hamDistThreshold.

		\smallskip\noindent
		\textbf{\underline{Procedure \subSample:}}
		
		\begin{enumerate}[nosep,leftmargin=1.5em,labelwidth=*,align=left,labelsep=0.25em]	
			
		\item \bob samples $T$ random masking functions $\{mask_1, \ldots, mask_T\}$ \cite{uzun2021fuzzy}, and keyed PRF, $PRF_{\kappa_s}$ e.g., AES with key $\kappa_s \assign \{0,1\}^{\secParam}$.

		\item \alice and \bob run a 2PC circuit where \alice's input is \alicesVector and \bob's input is $\{mask_1, \ldots, mask_T\}$, and $PRF_{\kappa_s}$.
		The circuit returns to \alice $\alicesSetFromVec \assign \{ PRF_{\kappa_s}(\alicesVector \bigwedge mask_1), \ldots,  PRF_{\kappa_s}(\alicesVector \bigwedge mask_T) \}$. \label{step:ham_query_sample:masking}
		
		\item \bob locally computes \bobsSetFromVec \assign $\{  PRF_{\kappa_s}(\bobsVector \bigwedge mask_1), \ldots,   PRF_{\kappa_s}(\bobsVector \bigwedge mask_T) \}$.
		\label{step:ham_query_sample:set_compute}
		
		\end{enumerate}

		\smallskip\noindent
		\textbf{\underline{Procedure \oneSidedSetReconExp:}} \alice and \bob run \oneSidedSetReconExp with \alicesSetFromVec and \bobsSetFromVec as inputs.

				\begin{enumerate}[resume,nosep,leftmargin=1.5em,labelwidth=*,align=left,labelsep=0.25em]

					\item  \alice and \bob select set of points $\setOfXCoords = \setDefn{2T - t + 2}$.
					
					\item \alice encodes \alicesSetFromVec in roots of the polynomial \alicesPoly = \polyFromSet{\alicesSetFromVec} and \bob encodes \bobsSetFromVec in roots of the polynomial \bobsPoly = \polyFromSet{\bobsSetFromVec}. \label{step:ham_query_sample:poly_compute}

					\item \bob samples two random polynomials $\alicesRandomPoly, \bobsRandomPoly \getsr \polyField$ of degree $T$. 
					
					\item For $\ptIdx \in [1, 2T - t + 2]$, \alice sends \alicesPolyFromVecEval{x_\ptIdx} to \oleFunc and \bob sends \alicesRandomPolyEval{x_\ptIdx} and \bobsRandomPolyEval{x_\ptIdx} \bobsPolyFromVecEval{x_\ptIdx}. \alice learns \alicesVoleOutputPolyEval{x_\ptIdx} \assign \tPSIOutputPolyEval{x_\ptIdx} as the output of \oleFunc. \label{step:ham_query_sample:ole_step_1}

					\item \alice computes set $C_{\mathsf{sub}}$ where each $C_{i} \in C_{\mathsf{sub}}$ is a subset of \alicesSetFromVec having exactly $t$ elements. \label{step:ham_query_sample:set_computation}
					
					\item For each $C_i \in C_{\mathsf{sub}}$, \alice computes the polynomial $P_i \assign \polyFromSet{C_i}$ and computes the set of points \label{step:ham_query_sample:point_computation}

					\begin{center}
						$\setOfPtsForInterpolation_i =  \{(x_\ptIdx, y_\ptIdx) : y_\ptIdx = \frac{\alicesVoleOutputPolyEval{x_\ptIdx}}{P_i(x_\ptIdx)}, ~\ptIdx \in [1, 2T - t + 2] \}$.
					\end{center}
					
					\item \alice interpolates $V_i$ for all $i \in |C_{\mathsf{sub}}|$ with a polynomial. If the degree of the interpolating polynomial is $< 2T - t$, \alice learns that \alicesVector and \bobsVector are close. Otherwise, \alice outputs $\bot$.	\label{step:ham_query_sample:interpolation}  
					
				\end{enumerate}

	\noindent\smallskip
	\textbf{\underline{Procedure $\mathsf{Recover}$}:}
	
	If \oneSidedSetReconExp outputs \setDiffCard{\alicesSetFromVec}{\bobsSetFromVec} then obtain \bobsVector from \bob. Output $(\alicesVector, \bobsVector)$. Otherwise, output $\bot$.

%
	\end{mdframed}
	\vspace{-10pt}
	\caption{\hamQuerySample: Threshold Hamming queries with sampling \label{fig:ham_query_sampling}}
\end{figure}

\begin{figure}
	\small
	\begin{mdframed}
		
		\noindent
		\textbf{Inputs:~}  \alice holds a set of vectors \alicesInputVecSetDefn. and \bob has set of vectors \bobsInputVecSetDefn.
		They jointly select a Hamming distance threshold \hamDistThreshold.

		\textbf{Protocol:~}
		\begin{enumerate}[nosep,leftmargin=1.5em,labelwidth=*,align=left,labelsep=0.25em]	
			
			\item For each $\setIdx \in [1, \genericSetSize], j \in [1, \genericSetSize]$, \alice and \bob run \hamQuerySample with inputs \alicesVecIdx{\setIdx} and \bobsVecIdx{j}, and 
			distance threshold \hamDistThreshold. 
			
			\item For $\setIdx \in [1, \genericSetSize], ~j \in [1, \genericSetSize]$,  if \hamQuerySample returns $(\alicesVecIdx{\setIdx}, \bobsVecIdx{j})$, then add it $S_\mathsf{out}$.

			\item \alice outputs $S_{\mathsf{out}}$. 
			
		\end{enumerate}
	\end{mdframed}
	\vspace{-10pt }
	\caption{\small \hamAwarePSIProtocolExp: Hamming distance-aware PSI protocol from \hamQuerySample \label{fig:ham_aware_psi_exp}}
	\end{figure}

This section details a construction for Hamming queries which combines \oneSidedSetReconExp with a sub-sampling algorithm which makes the computation feasible. 

\myparagraph{Sub-Sampling}
There is extensive work on reducing bit vectors to sets of small sizes for Hamming distance comparisons in specific application settings e.g., in biometric authentication \cite{uzun2021cryptographic,uzun2021fuzzy}. The idea is to sub-sample the 
bit vectors of length \genericVecLen into $T$ sub-vectors, and then compare the sets of the sub-vectors. If the input bit vectors are close in Hamming space, then $t$ out of the $T$ sub-vectors will match across the sets. The sub-sampling scheme is parameterized such that $t \ll T < \genericVecLen$. For example, in the context of biometric data represented by bit vectors of length $\genericVecLen = 256$ after a transformation with a locality-sensitive hash, the sub-sampling algorithm can yield sets of size $T = 64$ as inputs, with $t = 2$. 

In the typical setting where \alice and \bob hold bit vectors, \alicesVector and \bobsVector of length \genericVecLen, \bob samples $T$ "masking" functions, $\{mask_1, \ldots, mask_T\}$. These masking functions are applied to \alicesVector using a 2PC circuit (e.g., with a garbled circuit or an OPRF) to create a set of sub-vectors \alicesSetFromVec \assign $\{PRF_{\kappa_s}(\alicesVector \bigwedge mask_1), \ldots,  PRF_{\kappa_s}(\alicesVector \bigwedge mask_T)\}$. Here, $PRF$ is a keyed PRF e.g., AES with $\kappa_s$ as the key, uniformly sampled by \bob. In this way, \alice does not learn the masking functions but learns \alicesSetFromVec as output of the 2PC circuit. \bob similarly applies the masking functions to his own inputs, \bobsSetFromVec \assign $\{PRF_{\kappa_s}(\bobsVector \bigwedge mask_1), \ldots,  PRF_{\kappa_s}(\bobsVector \bigwedge mask_T) \}$. Next, \alice and \bob run a threshold PSI (t-out-of-T matching) algorithm over \alicesSetFromVec and \bobsSetFromVec. If $t$ elements match in \alicesSetFromVec and \bobsSetFromVec, \alice learns that \alicesVector and \bobsVector are close in context of the application. We refer to existing work \cite{uzun2021cryptographic,dodis2004fuzzy,canetti2021reusable,uzun2021fuzzy} for further details, and focus on the threshold PSI part of this process. \figref{fig:ham_query_sampling} describes the protocol. After creating the sets by sub-sampling the input vectors in \lineref{step:ham_query_sample:masking} of \figref{fig:ham_query_sampling}, \alice and \bob run \oneSidedSetReconExp with $\genericVecLen = T, \hamDistThreshold = (T - t)$.

\myparagraph{Optimizing the Interpolation}
In \lineref{step:ham_query_sample:interpolation} of \figref{fig:ham_query_sampling}, the interpolation step can be optimized based on two insights. The first insight is based on the fact that we are only interested in the degree of the interpolating polynomial \textit{and not the polynomial itself}. Therefore, instead of computing the interpolating polynomial completely we can compute the coefficients of the constituent monomials of Newton's interpolating polynomial. More specifically, the interpolating polynomial is of the form 

\begin{center}
	$R(x) := f[x_1] + f[x_1, x_2] (x - x_1) + \ldots + f[x_1, \ldots, x_{2T - t + 2}](x - x_1)(x - x_2)\ldots(x - x_{2T-t + 2})$
\end{center}

Here, $f[x_1], f[x_1, x_2] \ldots$ are the Newton's divided differences computed from the points in $\setOfPtsForInterpolation_i$. Since, we only need the coefficient of $x^{2T -t + 1}$ in this polynomial, it suffices to compute the value of $f[x_1, \ldots, x_{2T - t + 2}]$. If $f[x_1, \ldots, x_{2T - t + 2}] = 0$, the degree of $R(x)$ is $< 2T - t + 1$. 

Consider the divided differences:

\begin{enumerate}
	\item $f'[x_1], f[x_1, x_2], \ldots, f'[x_1, \ldots, x_{2T - t + 2}]$, over the points  $\{(x_\ptIdx, y_\ptIdx) : y_\ptIdx \assign \alicesVoleOutputPolyEval{x_\ptIdx},  ~\ptIdx \in [1, 2T - t + 2]\}$
	\item $g[x_1], g[x_1, x_2], \ldots, g[x_1, \ldots, x_{2T - t + 2}]$ over the points $\{(x_\ptIdx, y_\ptIdx) : y_\ptIdx \assign \frac{1}{P_i(x_\ptIdx)},  ~\ptIdx \in [1, 2T - t + 2]\}$
\end{enumerate}

Since $R(x) = \alicesVoleOutputPoly \times \frac{1}{P_i(x)}$, the divided differences for $R(x)$ can be computed using the divided 
differences for $\alicesVoleOutputPoly$ and $\frac{1}{P_i(x)}$ due to Leibniz's rule \cite{leibniz_rule}. That is, 

\begin{multline}
	\label{eqn:divided_diff}
	f[x_1, \ldots, x_{2T - t + 2}] = (f'.g)[x_1, \ldots, x_{2T - t + 2}] = \\ 
	\sum\limits_{k = 1}^{2T-t+2} f'[x_1, \ldots, x_k] g[x_k, \ldots, x_{2T-t+2}]
\end{multline}

The second insight is based on the fact that the divided differences for all the polynomials $P_i(x)$'s tested in \linesref{step:ham_query_sample:point_computation}{step:ham_query_sample:interpolation} can be pre-computed offline by \alice and stored for speeding up the computation in \eqref{eqn:divided_diff}. Specifically, \alice computes for each polynomial $P_i(x)$ (generated from $C_i \in C_{\mathsf{sub}}$, the set comprising the divided differences $\{g[x_1, \ldots, x_{2T-t+2}], g[x_2, \ldots, x_{2T-t+2}], \ldots, g[x_{2T-t+2}]\}$. This is done offline.  Subsequently, when \alice obtains the points for \alicesVoleOutputPoly after executing \lineref{step:ham_query_sample:ole_step_1}, she computes the divided differences for 
$\alicesVoleOutputPoly \times \frac{1}{P_i(x)}$ using the \eqref{eqn:divided_diff}. This significantly speeds up the compute times.

\begin{thm}
	\label{thm:ham_query_with_sampling}
	Assuming that there is a protocol securely realizing \oleFunc with communication cost scaling linearly with the vector size and a sub-sampling algorithm which derives sets of size $T$ after sub-sampling binary vectors of length \genericVecLen such that sets corresponding to the vectors close in Hamming space have at least $t$ common elements, \hamQuerySample securely realizes \tHamQueryFunc with \bigO{T} communication cost and \bigO{\allPossibleComb{T}{t}} compute costs. 
	
\end{thm}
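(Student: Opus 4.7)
\begin{proofsketch}
The plan is to argue four things in turn: communication, computation, correctness, and security; each largely reduces to a result already in hand.

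For communication, I would simply count. The sub-sampling in \subSample is a one-shot 2PC circuit whose output is a set of $T$ PRF values, contributing $\bigO{T \cdot \secParam}$ bits. In \oneSidedSetReconExp there are exactly $2T-t+2 = \bigO{T}$ calls to \oleFunc, each of which by assumption costs $\bigO{\secParam}$ bits. There is no further interaction. Summed, this is $\bigO{T \cdot \secParam}$, matching the stated bound up to the security parameter. The compute cost is dominated by Step~\ref{step:ham_query_sample:set_computation}--\ref{step:ham_query_sample:interpolation}, where Alice enumerates all $\binom{T}{t}$ size-$t$ subsets of \alicesSetFromVec and, for each, tests whether the corresponding evaluation points admit a low-degree interpolant; with the Newton-divided-difference optimization from \eqref{eqn:divided_diff} and the offline pre-computation of $g[x_k,\ldots,x_{2T-t+2}]$ for each candidate $P_i$, each test is $\polyfn(T)$ field operations, giving the $\bigO{\binom{T}{t}}$ asymptotic compute bound (with a polynomial-in-$T$ factor hidden).

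For correctness, first invoke the guarantee of the sub-sampling algorithm: when $\hamDist{\alicesVector}{\bobsVector} \le \hamDistThreshold$, the sets \alicesSetFromVec and \bobsSetFromVec have at least $t$ elements in common, so $\setDiffCard{\alicesSetFromVec}{\bobsSetFromVec} \le T - t = \hamDistThreshold$. By Proposition~2 of \thmref{thm:set_recon_claims}, with $2T-t+2$ evaluation points Alice can recover \setDiff{\alicesSetFromVec}{\bobsSetFromVec} with overwhelming probability, so one of the candidate $P_i$'s in $C_\mathsf{sub}$ will match and Alice accepts. Conversely, if $\hamDist{\alicesVector}{\bobsVector} > \hamDistThreshold$ then $\setDiffCard{\alicesSetFromVec}{\bobsSetFromVec} > \hamDistThreshold$, and an incorrect guess $C_i$ gives a rational function whose numerator is effectively random of degree $\ge 2T - \degree{P_i}$ over $2T-t+2$ points; by \lemmaref{lemma:polyOnSupSet} the chance that the evaluations fit an interpolant of degree $\le \genericVecLen + \degree{P_i(x)}$ is $\fieldOrder^{-\Omega(1)}$. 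A union bound over the $\bigO{\binom{T}{t}}$ candidates is absorbed by the $\fieldOrder$-exponential slack.

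For security, build a simulator per party. Bob receives only OLE messages; by the assumed security of the protocol realizing \oleFunc, replacing his inputs with zeros is indistinguishable. For Alice, split on the ideal functionality's output. If $(\alicesVector,\bobsVector)$ is returned, \simm plays the honest Bob on \bobsVector. If $\bot$ is returned, \simm plays \bob with a dummy $\bobsPolyFromVecSim$ built from random roots; the reduction is then to Proposition~1 of \thmref{thm:set_recon_claims}, which states that for $\setDiffCard{\alicesSetFromVec}{\bobsSetFromVec} \ge 2\hamDistThreshold$ the real evaluation tuple is statistically within $1/\fieldOrder$ of a uniform tuple. The main obstacle, and the step I would spend the most care on, is the borderline regime $\setDiffCard{\alicesSetFromVec}{\bobsSetFromVec} \in (\hamDistThreshold, 2\hamDistThreshold]$: Proposition~1 does not directly apply there, and one must argue that although Alice's enumeration over $\binom{T}{t}$ candidates is powerful, every wrong guess produces an interpolation test whose success probability is at most $\fieldOrder^{-(2T-t+2) + (\genericVecLen + \degree{P_i})}$ by \lemmaref{lemma:polyOnSupSet}. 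Choosing $\fieldOrder$ large enough to absorb the union bound over $\binom{T}{t}$ candidates (concretely, $\log \fieldOrder = \Omega(\secParam + t\log T)$) makes the simulator's $\bot$-view and Alice's real view statistically indistinguishable, completing the argument.
\end{proofsketch}
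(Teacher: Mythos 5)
Your simulator construction matches the paper's (play honest Bob on a positive output; substitute a random-root polynomial $\bobsPolyFromVecSim$ on $\bot$), but the justification you give for the $\bot$ case imports the wrong framework. You invoke Propositions~1 and~2 of \thmref{thm:set_recon_claims} and then treat $\setDiffCard{\alicesSetFromVec}{\bobsSetFromVec} \in (\hamDistThreshold, 2\hamDistThreshold]$ as a dangerous borderline regime that Proposition~1 fails to cover. But those propositions are stated for \oneSidedSetRecon, which evaluates at $\numPointsForInterpol$ points; \hamQuerySample evaluates at only $2T-t+2$ points (the $\numPointsForInterpolPlus$ regime with $\genericVecLen = T$, $\hamDistThreshold = T-t$), and this reduction in the number of evaluation points is precisely what eliminates the borderline regime. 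With $2T-t+2$ points Alice cannot verify a guessed $\bobsPolyFromVec$ when $\degree{\gcd{\alicesPolyFromVec}{\bobsPolyFromVec}} < t$, because for every candidate there remain free degrees of freedom in the blinded output polynomial. The paper's proof therefore does not carve out a special sub-range at all: it computes $\prob{\alicesVoleOutputPolyPtSetReal = \randomPtSetA}$ and $\prob{\alicesVoleOutputPolyPtSetSim = \randomPtSetA}$ directly for \emph{any} arbitrary $\randomPtSetA$, using \lemmaref{lemma:kissener_uniformly_random} to show $R(x)$ is uniform of degree $2T-\degreeOfGCD$ and counting consistent $(\alicesRandomPoly,\bobsRandomPoly)$ pairs, and bounds the total variation distance by $1/\fieldOrder$ uniformly over the whole range $\degreeOfGCD < t$.

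A second, smaller mismatch: your correctness sketch says Alice enumerates candidates for $\setDiff{\alicesSetFromVec}{\bobsSetFromVec}$, but in \hamQuerySample Step~\ref{step:ham_query_sample:set_computation} the candidates $C_i$ are $t$-element subsets aimed at $\alicesSetFromVec \cap \bobsSetFromVec$; dividing $\alicesVoleOutputPoly$ by $P_i$ when $C_i$ is a subset of the intersection is what yields a genuine degree-$(2T-t)$ polynomial, since $P_i$ then divides $\gcd{\alicesPolyFromVec}{\bobsPolyFromVec}$. Finally, the $\log\fieldOrder = \Omega(\secParam + t\log T)$ requirement you propose for a union bound is not part of the paper's simulation argument at all: Alice's candidate enumeration is purely local post-processing on the $2T-t+2$ OLE outputs, so once the joint distribution of those outputs is within $1/\fieldOrder$ of the simulated one, anything Alice computes from them is too; the union bound is only relevant to bounding false positives (correctness), not to closing the simulation.
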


\begin{proof}
	To prove this, we will show that there is a PPT simulator \simm in the ideal world which indistinguishable simulates the real world execution of \hamQuerySample.

	\myparagraph{Simulating \bob's view:}
	\bob does not receive any output from the protocol and only observes intermediate results from \oleFunc. Assuming that \oleFunc is realized by a protocol which can be indistinguishably simulated, \bob's view in \hamQuerySample 
	may be simulated by \simm. 
	
	\myparagraph{Simulating \alice's view:}
	\alice's input into the protocol is \alicesVector and \alice receives the evaluations of the polynomials $\alicesVoleOutputPoly \assign \tPSIOutputPoly$. 
	The simulation strategy is as follows. 
	
	\underline{\textit{When $\hamDist{\alicesVector}{\bobsVector} < T - t$}:} In this case, \tHamQueryFunc returns \bobsVector. So, \simm may indistinguishably simulate \hamQuerySample by running the steps of the protocol with \bobsVector as input. 
	
	\underline{\textit{When $\hamDist{\alicesVector}{\bobsVector} \geq T - t$}:} In this case, \tHamQueryFunc returns $\bot$. To simulate \alice's view, \simm sets $\bobsSetFromVec \assign \{b_1, \ldots, b_T\}$ where $b_i \getsr \{0,1\}^{\secParam}$. In effect, $\bobsPolyFromVecSim = \polyFromSet{\bobsSetFromVec} = \prod\limits_{x_i \getsr \finiteField} (x - x_i)$ is a degree-$T$ polynomial with random roots in $\finiteField$. 
	
	\simm follows the rest of the steps \hamQuerySample by sampling two degree-$T$ random polynomials $\alicesRandomPoly, \bobsRandomPoly \getsr \polyField$. In the real world, \alice receives evaluations of $\alicesVoleOutputPoly \assign \tPSIOutputPoly$ from \oleFunc. In the ideal world, \simm sends evaluations of \alicesPolyFromVec, \alicesRandomPoly and $\bobsRandomPoly \bobsPolyFromVecSim$ to \oleFunc. \simm sends the output of \oleFunc, $\alicesVoleOutputPolySim \assign \alicesRandomPoly \alicesPolyFromVec + \bobsRandomPoly \bobsPolyFromVecSim$ to \alice.

	Let $\alicesVoleOutputPolyPtSetReal \assign \{(x_\ptIdx, y_\ptIdx) : y_\ptIdx \assign \alicesVoleOutputPolyEval{x_\ptIdx}, ~ x_\ptIdx \in \setOfXCoords\}$,  and 
	$\alicesVoleOutputPolyPtSetSim \assign \{(x_\ptIdx, y_\ptIdx) : y_\ptIdx \assign \alicesVoleOutputPolySimEval{x_\ptIdx}, ~ x_\ptIdx \in \setOfXCoords\}$. Let $\universe_\fieldOrder$ be the set of all sets $\{(x_\ptIdx, y_\ptIdx) : (x_\ptIdx, y_\ptIdx) \in (\setOfXCoords \times \finiteField)\}_{\ptIdx = 1}^{2T -t +2}\}$. Note that $\setSize{\universe_\fieldOrder} = \fieldOrder^{2T - t + 2}$
	
	We show that for any arbitrary pair of polynomials $\alicesPolyFromVec, \bobsPolyFromVec  \in \polyField \times \polyField$ such that $\degree{\gcd{P(x)}{Q(x)}} < 2T - t - (T-t) =  t$ and $\forall x_\ptIdx \in \setOfXCoords, P(x_\ptIdx), Q(x_\ptIdx) \neq 0$,

	\begin{equation}
		\sum_{\randomPtSetA \in \universe_\fieldOrder} \left|\prob{\alicesVoleOutputPolyPtSetReal = \randomPtSetA} -  \prob{\alicesVoleOutputPolyPtSetSim = \randomPtSetA} \right| < 1/\fieldOrder
	\end{equation}
	
	W.l.o.g for some  $\randomPtSetA \in \universe_\fieldOrder$, $\alicesVoleOutputPolyPtSetReal = \randomPtSetA$, $x_\ptIdx \in X$ and $v_\ptIdx \in \randomPtSetA$, we have 
	
	\begin{center}
		$\tPSIOutputPolyEval{x_\ptIdx} = \gcdPolyEval{x_\ptIdx}R(x_\ptIdx) = v_{\ptIdx} $ 
	\end{center}
	
	where $\gcdPoly = \gcd{\alicesPoly}{\bobsPoly}$, $\degreeOfGCD \assign \degree{\gcdPoly}$ and $R(x)$ is a random polynomial of degree $2T - \degreeOfGCD$ due to \lemmaref{lemma:kissener_uniformly_random}. There are $\fieldOrder^{(2T -  \degreeOfGCD + 1) - (2T - t + 2)} = \fieldOrder^{t - \degreeOfGCD - 1}$ polynomials that are consistent with $R(x)$. Let $S_R$ be the set of these polynomials. Each $R(x) \in S_R$ is consistent with a unique pair $(\alicesRandomPoly, \bobsRandomPoly) \in \polyField \times \polyField$ of degree-$T$ polynomials selected by the protocol. Thus, for any $\alicesPolyFromVec, \bobsPolyFromVec$, we have

	\begin{center}
		$\prob{\alicesVoleOutputPolyPtSetReal = \randomPtSetA} = \frac{\setSize{S_R}}{\text{\# of } (\alicesRandomPoly, \bobsRandomPoly) \text{pairs} } =  \frac{\fieldOrder^{(t - \degreeOfGCD - 1)}}{\fieldOrder^{2T + 2}} = \fieldOrder^{-(2T - t + \degreeOfGCD + 3)}$
	\end{center}

	For \bobsPolyFromVecSim, we have with overwhelming probability $\degreeOfGCD = \degree{\gcd{\alicesPolyFromVec}{\bobsPolyFromVecSim}} = 0$. Thus, $\prob{\alicesVoleOutputPolyPtSetSim = \randomPtSetA} = \fieldOrder^{-(2T - t + 3)}$. From the above, we have

	\begin{center}
		$\sum_{\randomPtSetA \in \universe_\fieldOrder} \left|\prob{\alicesVoleOutputPolyPtSetReal = \randomPtSetA} -  \prob{\alicesVoleOutputPolyPtSetSim = \randomPtSetA} \right| < 1/\fieldOrder$ \\
		$\sum_{\randomPtSetA \in \universe_\fieldOrder} \fieldOrder^{-(2T -t + 3)} (1 - 1/\fieldOrder^{\degreeOfGCD})$
		$\setSize{\universe_\fieldOrder} \times \fieldOrder^{-(2T - t + 3)} (1 - 1/\fieldOrder^{\degreeOfGCD}) < 1/\fieldOrder$
			$ \frac{1}{\fieldOrder} \times (1 - 1/\fieldOrder^{\degreeOfGCD}) < 1/\fieldOrder$
	\end{center}
\end{proof}

\section{Proofs for Integer Distance-Aware PSI}
\label{app:intPSIProofs}

\begin{algorithm}
	\caption{Distance-Aware Set Augmentation}\label{alg:set_augmentation}
	\begin{algorithmic}[1]
		\footnotesize
		
		\Input
		\Desc{\alicesSet}{\alice's input set}
		\Desc{\bobsSet}{\bob's input set}
		\Desc{\intDistThreshold}{distance threshold}
		\EndInput
		\Output
		\Desc{Augmented sets \augAlicesSet, \augBobsSet}
		\EndOutput
		\Procedure{Generate Representative Strings for \alicesSet}{}
		\For{each \elemInSet{\alicesIthInt}{\alicesSet}} 
		\For {each integer, $a'_i \in (a_i - d, a_i + d)$}
		\State $s'_i :=$ Bit string of \maxBitLength representing $a'_i$
		\EndFor
		\State $P :=$ prefix trie with $s'_1, \ldots, s'_{2\intDistThreshold - 1}$ 
		\For{each {\em maximal enclosing complete subtrie} $T$ in $P$}
		\State $\phi :=$ prefix of $T$ in $P$
		\State Append ``don't care'' bits ($\ast$) to $\phi$ up to \maxBitLength
		\State Add $\phi$ to \augAlicesSet
		\EndFor
		\EndFor
		\State \Return \augAlicesSet
		\EndProcedure
		\Procedure{Generate Representative Strings for \bobsSet}{}
		\For{each \elemInSet{\bobsJthInt}{\bobsSet}}
		\State $s_j :=$ Bit string of \maxBitLength representing $b_j$ 
		\item // Let $s_j =  s_j[\maxBitLength - 1] \ldots s_j[0]$, $s_j[i]$ is the $i$th bit of $s_j$
		\For{$i = 0, 1, \ldots, \flooredLog{(2d - 1)}$}
		\State $s'_j := s_j[\maxBitLength - 1]\ldots s_j[i]  \ast \ldots \ast$   
		\State Add $s'_j$ to \augBobsSet  
		\EndFor
		\EndFor
		\State \Return \augBobsSet
		\EndProcedure
	\end{algorithmic}
\end{algorithm}

\myparagraph{Algorithm for Augmenting Sets}
The algorithm (Algorithm \ref{alg:set_augmentation}) has two procedures corresponding to the processes of augmenting \alice's input \alicesSet and \bob's input \bobsSet. 
To generate representative strings for each \elemInSet{\alicesIthInt}{\alicesSet}, the algorithm first builds a prefix trie over the bit strings corresponding to the binary representations (of length \maxBitLength) of integers in $(a_i - \intDistThreshold, a_i + \intDistThreshold)$ (Steps 8 - 11). As usual, each bit string corresponds to a path in the trie and the strings that share a prefix intersect at some level of the trie. The leaf nodes contain the least significant bits of the bit strings (see \figref{fig:prefixTrie42}).


Then, the algorithm determines the \mec subtries in the prefix trie (see Definition \ref{defn:mec}). A \mec subtrie essentially contains leaves (and its ancestors up to the root of the subtrie) corresponding to the integers that share an enclosing common prefix. \figref{fig:prefixTrie42} shows a prefix trie built over integers in the range [42,55]. Note that a leaf node in itself can be a \mec subtrie when the node is not part of any complete subtree. The algorithm 
identifies all the \mec subtries in the prefix trie corresponding to each \elemInSet{\alicesIthInt}{\alicesSet}. The prefix of each \mec subtrie is  is appended with wildcard bits up to the maximum bit length to form a representative string (Steps 12 - 15). The  representative strings are added to the augmented set, \augAlicesSet. 

For each \elemInSet{b_j}{\bobsSet}, the algorithm generates representative strings by progressively replacing the least significant bits in the binary representation of $b$ with wildcard bits. Specifically, the first representative string is generated by replacing the least significant bit, the second string is generated by replacing the last two least signficant bits and so on. The process is repeated \flooredLog{2d-1}+1 times until a bit string is obtained by replacing the last \flooredLog{2d-1}+1 least significant bits . These strings are added to the augmented set \augBobsSet (Steps 17 - 23).

\myparagraph{Correctness}
As discussed earlier, a non-null intersection between the augmented sets implies that there is some $(\alicesIthInt, \bobsJthInt)$ pair such that \intDistanceLessThanThreshold{\alicesIthInt}{\bobsJthInt}. To see why, consider two integer inputs $a, b$ and a distance threshold \intDistThreshold.
The following facts are ensured by design: 

\begin{enumerate}[nosep,leftmargin=1.6em,labelwidth=*,align=left]
	\item {\bf Fact 1:} If \intDistanceLessThanThreshold{\alicesInt}{\bobsInt} then the binary representation of $b$ is in the prefix trie built over all integers  \integersInRange.
	\item {\bf Fact 2:} The set of all \mec subtries together spans all the leaf nodes in the prefix trie.
	\item {\bf Fact 3:} The height of the largest \mec subtrie in a prefix trie built over integers in range \integersInRange is $\leq \flooredLog{(2d-1)}$.
\end{enumerate}

Observe that Fact 3 is true because if there existed a \mec subtrie with height $\flooredLog{(2d - 1)}$ then this would span over $2^{(1 + \floor{\log (2d - 1)})}  > 2d - 1$ leaf nodes which contradicts the fact that the prefix trie is built over $2d - 1$ strings corresponding to the $2d - 1$ integers in \integersInRange.

Fact 1 and Fact 2 together ensure that the the bit string corresponding to $b$ will share a prefix with a \mec subtrie in the prefix trie built over integers. \integersInRange. Fact 3 determines the prefix lengths to be checked i.e., since the height of the largest \mec subtrie is $\flooredLog{(2d - 1)}$, it is enough to check if a prefix of the bit string for $b$ matches the prefix of any \mec subtrie of height $[0, \flooredLog{(2d - 1)}]$.

Note that since the bit string corresponding to an integer $b', |b' - a| > d$ is not in the prefix trie, it will not share a prefix with any of the \mec subtries. This is because the range of integers used to build the trie also determines is \mec subtries. 

\myparagraph{Number of Representative Strings}
The following result shows that the number of representative strings  for integers in the range \integersInRange as a function of the distance parameter, \intDistThreshold is \bigO{\log \intDistThreshold}. This is analyzed by counting the number of \mec subtries in the prefix trie built over the binary representations of integers \integersInRange since each such subtrie corresponds to an enclosing common prefix.

\begin{thm}
	The total number of maximal enclosing complete subtries in a prefix trie built over the binary representations of all integers \integersInRange is $\bigO{\log \intDistThreshold}$.
\end{thm}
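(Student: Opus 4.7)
The plan is to bound the number of maximal enclosing complete (MEC) subtries by analyzing the structure of the prefix trie $T$ built on the binary representations (all of length $\maxBitLength$) of the $2\intDistThreshold - 1$ integers in $(a-\intDistThreshold, a+\intDistThreshold)$. Let $v^{*}$ be the lowest common ancestor of all leaves of $T$; below $v^{*}$, $T$ has two non-trivial subtries hanging from $v^{*}$, one containing the leaf for the smallest integer $L := a-\intDistThreshold+1$ and one containing the leaf for the largest integer $R := a+\intDistThreshold-1$. Call these the \emph{left} and \emph{right} subtries, and call the root-to-leaf paths to these extreme leaves the \emph{left boundary path} $\pi_L$ and the \emph{right boundary path} $\pi_R$. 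The depth of $v^{*}$ below the root is irrelevant: the depth of the trie \emph{below} $v^{*}$ is at most $\lceil \log(R-L+1)\rceil = \lceil \log(2\intDistThreshold - 1)\rceil$, because from $v^{*}$ the left and right subtries together contain exactly $R - L + 1$ leaves.

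The key structural claim is that every MEC subtrie in $T$ is rooted either at a node on $\pi_L$, at a node on $\pi_R$, or is a sibling of such a node. First I would show that any internal node $u$ that is strictly \emph{off} the two boundary paths (i.e., $u$ is not on $\pi_L \cup \pi_R$ and $u$ is not the child of a node on $\pi_L \cup \pi_R$) has the property that either every leaf beneath $u$ lies in $[L,R]$ or none does; in the former case the full subtrie rooted at the parent of $u$ has all its leaves in $[L,R]$ and so $u$ itself is not a MEC subtrie. Concretely: the leaves in $[L,R]$ form a contiguous interval of the leaves of the full depth-$\maxBitLength$ trie, so the only nodes whose subtree is ``partially inside'' $[L,R]$ are exactly those lying on $\pi_L$ or $\pi_R$. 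Any complete binary subtree that is entirely inside $[L,R]$ and not on the boundary is contained in a larger complete binary subtree (found by moving one step toward $v^{*}$), contradicting maximality.

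With the structural claim in hand, I would count MEC subtries level by level between $v^{*}$ and the leaves. On each level below $v^{*}$, $\pi_L$ contributes at most one node whose sibling subtree is a MEC subtrie (the sibling of the node on $\pi_L$ is the ``off-path'' child, whose subtree, if nonempty in $[L,R]$, is a complete binary subtree lying entirely inside $[L,R]$ and maximal because its parent also has the on-path child whose subtree extends outside $[L,R]$). Symmetrically, $\pi_R$ contributes at most one such sibling. Finally, the two extreme leaves $L$ and $R$ themselves may each be a degenerate MEC subtrie. Hence the number of MEC subtries is at most $2 \cdot \lceil \log(2\intDistThreshold - 1)\rceil + O(1) = \bigO{\log \intDistThreshold}$. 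Applied once per $\alicesIthInt \in \alicesSet$ this also gives the $\bigO{\genericSetSize \log \intDistThreshold}$ bound on $|\augAlicesSet|$ used elsewhere.

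The main obstacle I anticipate is the bookkeeping needed to make the ``sibling off the boundary path is maximal'' argument airtight, especially at the levels close to $v^{*}$ where the left and right paths have just diverged, and at the very bottom where a boundary path ends in a leaf (so its ``sibling subtree'' is a single leaf that may or may not be in $[L,R]$). I would handle this by explicitly casing on whether $L$ (respectively $R$) is on an even/odd boundary of its enclosing block, using the standard observation that an interval $[L,R]$ of length $\leq 2^{k+1}$ is covered by at most two complete subtries of height $k$ plus two ``staircases'' of complete subtries along $\pi_L$ and $\pi_R$ of total size $O(k)$.
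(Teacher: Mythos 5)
Your argument is correct but takes a genuinely different route than the paper's. You use the segment-tree canonical-decomposition view: the trie's leaves form a contiguous block $[L,R]$ of the ambient depth-$\maxBitLength$ complete binary tree; the only internal nodes whose ambient subtree \emph{partially} overlaps $[L,R]$ lie on the two extreme root-to-leaf paths $\pi_L,\pi_R$, so the parent of any maximal enclosing complete subtrie root must sit on $\pi_L\cup\pi_R$; and since the trie is a bare path above the LCA $v^*$ while the depth below $v^*$ is $O(\log\intDistThreshold)$, there are at most two candidate subtries per level, for a total of $O(\log\intDistThreshold)$. The paper instead decomposes the \emph{integer range} rather than the tree geometry: it splits $(a-\intDistThreshold, a+\intDistThreshold)$ at $a$, splits $[a, a+\intDistThreshold)$ again at the power-of-two block boundary determined by $\flooredLog{\intDistThreshold}$, normalizes the pieces to start at $0$ or end at a power of two via Lemma~\ref{lemma:trie_similarity}, and then drives two recursion lemmas (Lemmas~\ref{lemma:number_enclosing_trees_more_than} and~\ref{lemma:number_enclosing_trees_less_than}) that each peel off one maximal complete subtrie while at least halving the remaining range. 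Your route buys brevity and leans on a well-known interval-cover bound; the paper's range recursion more directly mirrors Algorithm~\ref{alg:set_augmentation}, which actually emits the representative strings. The boundary bookkeeping you flag --- the levels near $v^*$ where $\pi_L$ and $\pi_R$ have just diverged, and the leaves $L,R$ themselves --- is exactly the $O(1)$ slack absorbed in the paper's recursion base cases, so it is real but routine and does not affect the asymptotic bound.
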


\begin{proof}
	We prove the result for the range of integers $[a, a+d)$. Due to symmetry, the exact same arguments holds for the range $[a-d + 1, a)$. The main idea behind the proof is to partition integers in the range $(a, a+d)$ into two ranges based on their higher order prefix. Specifically, let the bit-string $a_1 a_2 \ldots a_k$ be the binary representation of $a$.   The string is divided into two parts based on a {\em pivot} = \flooredLog{d}.
	The higher order bits together constitute the bit string $a_1 a_2 \ldots a_{pivot}$. Let \integerFromHigherOrderString{a} denote the integer whose binary representation matches this string. Similarly, the lower order bits constitute the bit string $a_{pivot+1} \ldots a_{0}$ and let \integerFromLowerOrderString{a} denote the integer whose binary representation matches this string. 
	
	We first observe that for any \integersInNeighborhood, \integerFromHigherOrderString{a'} = \integerFromHigherOrderString{a} or \integerFromHigherOrderString{a'} = \integerFromHigherOrderString{a} + 1. For any such $a'$, with \integerFromHigherOrderString{a'} = \integerFromHigherOrderString{a}, it must be that $\integerFromLowerOrderString{a} \leq \integerFromLowerOrderString{a'} \leq \poweredfloorlog{d}$ - 1 to ensure that $a' - a \leq d$. Similarly, for any $a'$, with \integerFromHigherOrderString{a'} = \integerFromHigherOrderString{a} + 1, it must be that $0 \leq \integerFromLowerOrderString{a'} \leq d - \poweredfloorlog{d}$.  Thus, the prefix trie \trie{a}{a+\intDistThreshold} built over the binary representations of all $a', a' - a \leq d$ contains two subtries with prefixes \integerFromHigherOrderString{a} and \integerFromHigherOrderString{a}+1 respectively. These subtries are \trie{\integerFromLowerOrderString{a}}{\poweredfloorlog{d} - 1} and \trie{0}{\intDistThreshold - \poweredfloorlog{d}} (see \figref{fig:prefixTrieSplit}).

	\begin{center}
		\numOfMaximalSubtriesinTree{ \trie{a}{a+\intDistThreshold}} = \numOfMaximalSubtriesinTree{\trie{\integerFromLowerOrderString{a}}{\poweredfloorlog{\intDistThreshold}} } + \numOfMaximalSubtriesinTree{\trie{0}{\intDistThreshold - \poweredfloorlog{\intDistThreshold}}}
	\end{center}
	
	From Lemmas \ref{lemma:number_enclosing_trees_more_than} and \ref{lemma:number_enclosing_trees_less_than}, we get  \numOfMaximalSubtriesinTree{\trie{\integerFromLowerOrderString{a}}{\poweredfloorlog{\intDistThreshold}}} = \bigO{\log \intDistThreshold} and \numOfMaximalSubtriesinTree{\trie{0}{\intDistThreshold - \poweredfloorlog{\intDistThreshold}}} = \bigO{\log \intDistThreshold}. This completes the proof.
\end{proof}

\begin{figure}
	\centering
	 \begin{tikzpicture}[level distance = 30pt, sibling distance = 0.1pt, scale =1.0,
   edge from parent/.style = {
   	draw, edge from parent path = {(\tikzparentnode) -- (\tikzchildnode.north)}}]
   \tikzset{every internal node/.style = {draw, circle, black, scale=1}}
   \tikzset{every leaf node/.style = {draw, black, regular polygon, regular polygon sides = 3, inner sep = 1pt, scale = 1}}
   	
   \Tree [.R 
      \edge node[auto=right] {High(a)}; \node [label={below:$[\mathsf{Low}(a), \poweredfloorlog{d} - 1]$}]{$S_1$}; 
  \edge node[auto=left] {High(a) + 1};  \node [label={below:$[0,d-2^{\floor{\log d}}]$}]{$S_2$};
    ]
   \end{tikzpicture}
\caption{\small Structure of a prefix trie built over integers in $(a, a+d) $ \label{fig:prefixTrieSplit}}
\end{figure}

\begin{lemma}
	\label{lemma:trie_similarity}
	Let \trie{2^k}{2^k + d} be a prefix trie built over the binary representations of integers in the range $[2^k, 2^k + d), d < 2^k, k \in \IntegersPositive$. The total number of \mec subtries in \trie{2^k}{2^k + d} is equal to the number of \mec subtries in the prefix trie built over integers in the range $[0, d)$, denoted by \trie{0}{d}. 
\end{lemma}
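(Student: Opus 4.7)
The plan is to exhibit a structural isomorphism between the ``interesting'' portions of $\trie{2^k}{2^k + d}$ and $\trie{0}{d}$ that sends \mec subtries to \mec subtries.

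First, I would observe a shared prefix property: since $d < 2^k$, every integer $a \in [2^k, 2^k + d)$ satisfies $2^k \le a < 2^{k+1}$, so its $\maxBitLength$-bit binary representation has the form $0^{\maxBitLength - k - 1}\,1\,b_{k-1}b_{k-2}\ldots b_{0}$, where $b_{k-1}\ldots b_0$ is the $k$-bit representation of $a - 2^k \in [0, d)$. Consequently, the top $\maxBitLength - k$ levels of $\trie{2^k}{2^k + d}$ form a single chain of unary nodes (each with exactly one child), and the node $v^\star$ at depth $\maxBitLength - k$ is the root of a subtrie $T^\star$ whose structure over its remaining $k$ levels is exactly the prefix trie of the $k$-bit binary representations of $[0, d)$. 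Similarly, every integer in $[0, d)$ has an $\maxBitLength$-bit representation whose top $\maxBitLength - k$ bits are all $0$, so $\trie{0}{d}$ also begins with a chain of $\maxBitLength - k$ unary nodes and, at depth $\maxBitLength - k$, transitions into a subtrie $T^{\star\star}$ over the same $k$-bit representations of $[0, d)$. Since $T^\star$ and $T^{\star\star}$ encode the same set of $k$-bit strings, they are isomorphic as labeled binary tries.

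Next, I would argue that the unary chains contribute no \mec subtries on either side. A subtrie of height $h \ge 1$ that is a complete binary tree must have a root of degree $2$, but every node along a unary chain has degree $1$; and although a leaf is a (trivial) complete binary subtrie of height $0$, the chains contain no leaves because $d \ge 1$ guarantees that the chain proper terminates at the branching subtrie. Thus, every \mec subtrie of $\trie{2^k}{2^k + d}$ must be rooted at a descendant of $v^\star$, i.e.\ inside $T^\star$, and the ``not enclosed by a larger complete subtrie rooted at an ancestor'' condition is preserved because no ancestor of $v^\star$ can root a complete subtrie. The identical argument applies to $\trie{0}{d}$ and its subtrie $T^{\star\star}$.

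Finally, I would combine these two facts: the \mec subtries of $\trie{2^k}{2^k + d}$ are precisely the \mec subtries of $T^\star$, the \mec subtries of $\trie{0}{d}$ are precisely the \mec subtries of $T^{\star\star}$, and the isomorphism $T^\star \cong T^{\star\star}$ carries \mec subtries to \mec subtries (since being a complete binary subtree and being maximal among such are both purely structural). This yields a bijection, proving equality of the counts. The main obstacle is a minor bookkeeping one: confirming that ``maximality'' really is preserved by the isomorphism, which reduces to checking that no node strictly above $v^\star$ (respectively, the analogous node in $\trie{0}{d}$) can be the root of a complete binary subtree and hence cannot enclose a smaller complete subtree beneath it—an immediate consequence of the unary-chain observation above.
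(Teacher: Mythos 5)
Your proof is correct and takes essentially the same approach as the paper's, which observes that subtracting $2^k$ maps $[2^k, 2^k+d)$ bijectively to $[0,d)$ while flipping only the bit at position $k$, so the two tries are structurally isomorphic and therefore have the same number of \mec subtries. Your version is somewhat more careful, spelling out the unary-chain decomposition at the top of each trie and explicitly verifying that the chain can host no \mec subtrie and that maximality is preserved by the isomorphism, details the paper's terse argument leaves implicit.
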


\begin{proof}
	
	Let the bit string $a_k a_{k-1}, \ldots a_0$ be the binary representation of an integer $a \in [2^k, 2^k + d), d < 2^k, k \in \IntegersPositive$. For all $a \in [2^k, 2^k + d)$, $a_k = 1$.
	Now consider the bit string $0 a_{k-1}, \ldots a_0$. The integer corresponding to this bit string $b = a - 2^k$. 
	
	For all $a \in [2^k, 2^k+d)$, we generate a corresponding integer $b$ by similarly replacing the value of the most significant bit in the binary representation of $a$ with 0. The resulting integers are in the range $[0, d)$.  The prefix trie built over the binary representations of these integers, \trie{0}{d} only differ in the value of the root from \trie{2^k}{2^k + d}. Otherwise the two tries are exactly the same both in node values as well as structure. Since the value of the root has no impact on the overall structure of the trie, the total number of \mec subtries in \trie{2^k}{2^k + d} is the same as the total number of \mec subtries in \trie{0}{d}.
\end{proof}

\begin{lemma}
	\label{lemma:number_enclosing_trees_more_than}
	Let \trie{0}{d} be a prefix trie built over the binary representations of consecutive non-negative integers $[0,d)$. Let 
	\numOfMaximalSubtriesinTree{\trie{0}{d}}  denote the number of maximal enclosing complete subtries in the trie as a function of $d$. Then, \numOfMaximalSubtriesinTree{\trie{0}{d}} = \bigTheta{\log d}.
\end{lemma}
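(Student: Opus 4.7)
The plan is to prove the claim by induction on $d$, exploiting the self-similar structure already captured by Lemma A.5 (trie similarity), and then reading off the Hamming weight of $d$ as the exact count. The upper bound is what the main theorem actually needs; the matching lower bound will hold in the worst case over $d \in [2^k, 2^{k+1})$.

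First I would handle the base case: if $d = 2^k$ for some $k \geq 0$, then $[0, d)$ is precisely the set of all length-$k$ bit suffixes, so the subtrie below the all-zeros path is a perfect binary tree of depth $k$. This subtrie is itself a complete binary tree and has no strict ancestor that is complete (any ancestor has only a single child in the trie), so it is the unique MEC subtrie: $\phi(\trie{0}{2^k}) = 1$.

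Next, for the inductive step, let $k = \flooredLog{d}$, so $2^k \le d < 2^{k+1}$, and suppose $d > 2^k$. Partition $[0,d)$ into $[0,2^k)$ and $[2^k, d)$ by the most significant bit. The first part contributes exactly one MEC subtrie, rooted at the $0$-child of the depth-$(\maxBitLength - k - 1)$ split, by the base case applied at depth $k$. For the second part, Lemma A.5 (the trie-similarity lemma) says the subtrie over $[2^k, d)$ is isomorphic (in structure, not in root label) to $\trie{0}{d - 2^k}$, so it has $\phi(\trie{0}{d - 2^k})$ MEC subtries. Since these two parts sit under disjoint children of the common most-significant-bit split node, neither can lie inside a complete subtree containing the other (that would force a sibling structure the split does not have). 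Hence
\begin{equation*}
\numOfMaximalSubtriesinTree{\trie{0}{d}} \;=\; 1 + \numOfMaximalSubtriesinTree{\trie{0}{d - 2^k}}, \qquad d > 2^k .
\end{equation*}
Unrolling this recurrence using the binary expansion $d = 2^{k_1} + 2^{k_2} + \cdots + 2^{k_m}$ with $k_1 > k_2 > \cdots > k_m \ge 0$ yields the closed form $\numOfMaximalSubtriesinTree{\trie{0}{d}} = m$, i.e., the Hamming weight of $d$.

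From the closed form both bounds follow. For the upper bound, the Hamming weight of $d$ is at most $\flooredLog{d} + 1 = \bigO{\log d}$, which is the bound actually invoked in the preceding theorem to conclude $\bigO{\log d}$ total representative strings per element of \alicesSet. For the matching $\Omega(\log d)$ bound in the worst case, take $d = 2^{k+1} - 1$: its binary representation is all ones of length $k+1$, so $\numOfMaximalSubtriesinTree{\trie{0}{d}} = k + 1 = \Omega(\log d)$. The main obstacle, and the only subtle point, is justifying that the partition across the MSB split node does not create any additional complete subtree spanning both halves that would absorb some MEC on either side; this is handled by noting that any such subtree would need both halves to themselves be complete and of equal depth, which only happens exactly when $d = 2^{k+1}$, i.e., outside the inductive case.
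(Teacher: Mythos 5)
Your proof follows essentially the same route as the paper's: you identify the MSB split, observe the $0$-child subtree is the unique perfect (complete) subtrie contributing one MEC, invoke the trie-similarity lemma on the $1$-child subtree to set up the recurrence $\numOfMaximalSubtriesinTree{\trie{0}{d}} = 1 + \numOfMaximalSubtriesinTree{\trie{0}{d - 2^{\floor{\log d}}}}$, and unroll it. The two places where you go beyond the paper are both improvements: (1) you push the unrolling all the way to the exact closed form, that $\numOfMaximalSubtriesinTree{\trie{0}{d}}$ equals the Hamming weight of $d$, where the paper stops at the coarser "each step at least halves $d$, so $\bigO{\log d}$ recursions"; and (2) you correctly note that the stated $\bigTheta{\log d}$ bound holds only in the worst case over $d$ --- for $d = 2^k$ the count is exactly $1$ --- whereas the paper asserts $\bigTheta{\log d}$ as if it held for every $d$, even though the surrounding theorem only ever uses the $\bigO{\log d}$ direction. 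Your closed form also makes the argument that MECs do not straddle the MSB split cleaner, since the $1$-child subtree has strictly fewer than $2^k$ leaves when $d > 2^k$, so no complete subtree can contain both halves.
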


\begin{proof}
	Observe first that \numOfMaximalSubtriesinTree{\trie{0}{2^k - 1}} = 1 for all $k \in \NaturalNumbers$. Next observe that \trie{0}{d} contains two non-overlapping subtries \trie{0}{\poweredfloorlog{d} - 1} and \trie{\poweredfloorlog{d}}{d}. Consequently, we can write 
	
	\begin{center}
		\small
		\numOfMaximalSubtriesinTree{\trie{0}{d}} = \numOfMaximalSubtriesinTree{\trie{0}{\poweredfloorlog{d}} - 1} +  \numOfMaximalSubtriesinTree{\trie{\poweredfloorlog{d}}{d}} = \numOfMaximalSubtriesinTree{\trie{\poweredfloorlog{d}}{d}} + 1
	\end{center}
	
	From Lemma \ref{lemma:trie_similarity}, \numOfMaximalSubtriesinTree{\trie{\poweredfloorlog{d}}{d}} = \numOfMaximalSubtriesinTree{\trie{0}{d - \poweredfloorlog{d}}}.
	Thus, 
	
	\begin{center}
		\numOfMaximalSubtriesinTree{\trie{0}{d}} = \numOfMaximalSubtriesinTree{\trie{0}{d - \poweredfloorlog{d}}} + 1
	\end{center}
	
	For all $x \geq 0$, we have $x - 2^{\floor{\log x}} \leq x/2$. Therefore, this recursion terminates in \bigTheta{\log d} calls and \numOfMaximalSubtriesinTree{\trie{0}{d}} = \bigTheta{\log d}.
\end{proof}

\begin{lemma}
	\label{lemma:number_enclosing_trees_less_than}
	Let \trie{a}{2^{\flooredLog{d}}} be a prefix trie over the binary representations of non-negative integers in the range $[a,2^{\flooredLog{d}})$. Let \numOfMaximalSubtriesinTree{\trie{a}{2^{\flooredLog{d}}}} denote the total number of maximal enclosing complete subtries in the trie as a function of $d$. Then,  \numOfMaximalSubtriesinTree{\trie{a}{2^{\flooredLog{d}}}} = \bigO{\log d}.
\end{lemma}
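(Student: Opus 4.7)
The plan is to mirror the recursive decomposition used in the proof of Lemma 12 but now with the left endpoint $a$ varying inside the range. Let $k = \flooredLog{d}$, so the trie is built over the $k$-bit binary representations of integers in $[a, 2^k)$, where $0 \leq a < 2^k$. I would proceed by induction on $k$, proving the stronger invariant $\numOfMaximalSubtriesinTree{\trie{a}{2^k}} \leq k + 1$, from which the claim follows since $k = \flooredLog{d}$.

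First I would split on the most-significant bit of $a$ in its $k$-bit representation. If $a < 2^{k-1}$, then the root of the trie has two non-empty children. The right child corresponds to the integers $[2^{k-1}, 2^k)$, which forms a complete binary subtrie of depth $k - 1$; because $a > 0$, its left sibling (the subtrie built over $[a, 2^{k-1})$) cannot be a complete subtrie of depth $k-1$, so the root is not complete and the right child qualifies as a single \mec subtrie in its own right. The remaining MEC subtries all lie inside the left child, which by an argument analogous to Lemma 11 is structurally identical to $\trie{a}{2^{k-1}}$ (only the root label changes). Hence $\numOfMaximalSubtriesinTree{\trie{a}{2^k}} = \numOfMaximalSubtriesinTree{\trie{a}{2^{k-1}}} + 1$. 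If instead $a \geq 2^{k-1}$, then the root's left child is empty and every leaf descends into the right child, so by Lemma 11 the trie has the same MEC count as $\trie{a - 2^{k-1}}{2^{k-1}}$.

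In both cases the recursion strictly decreases $k$ by one, while adding at most one to the MEC count. The base case is $k = 0$, where $[a, 2^0) = \{0\}$ is a single leaf contributing one \mec subtrie; alternatively, if the recursion ever reaches $a = 0$, the remaining trie is itself a complete binary subtrie and contributes exactly one MEC. Unrolling the recursion yields $\numOfMaximalSubtriesinTree{\trie{a}{2^{\flooredLog{d}}}} \leq \flooredLog{d} + 1 = \bigO{\log d}$.

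The main obstacle is verifying that the right-child subtrie identified in the first case is genuinely a \mec subtrie rather than being absorbed into some complete subtrie rooted at an ancestor. This hinges on the fact that, whenever $a > 0$, the left sibling at every relevant level fails to be a complete subtrie of matching depth, so completeness cannot propagate upward; tracking this invariant carefully across recursive calls is the only subtle step in a fully formal write-up.
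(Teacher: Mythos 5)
Your proof is correct and reaches the same conclusion, but via a cleaner recursion than the paper's. The paper's proof identifies, at each step, the \emph{largest} right-aligned complete dyadic block $[2^{\flooredLog{d}} - 2^{k'}, 2^{\flooredLog{d}})$ with $2^{\flooredLog{d}} - 1 - 2^{k'} \ge a$, peels it off as one \mec subtrie, and then recurses on the remaining prefix; establishing that this terminates in $\bigO{\log d}$ steps requires a separate maximality/halving argument (``otherwise a larger $k'$ would work''). Your version instead splits on the most-significant bit of the current range, which guarantees the depth parameter $k$ drops by exactly one per step, and Case 2 (when $a \ge 2^{k-1}$) strips a leading bit ``for free'' without contributing an \mec subtrie. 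This yields the explicit bound $\flooredLog{d}+1$ by a straightforward induction, and it localizes all the structural reasoning to the simple observation that a trie with one incomplete child cannot itself be complete. The one point you flag at the end --- that the peeled-off right subtrie really is maximal --- is exactly where both proofs hinge, and your argument (left sibling has $2^{k-1}-a < 2^{k-1}$ leaves when $a>0$, so the root's subtree is not complete) closes it correctly. The invariant $a < 2^{k}$ that you rely on in the base case is preserved by both branches of the recursion, so the write-up is sound.
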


\begin{proof}
	
	Observe that if $a = 2^{\flooredLog{d}} - 1$, then \trie{a}{a} is in itself a maximal enclosing complete subtree. Otherwise, for some $1 \leq k < \flooredLog{\intDistThreshold}$, $\poweredfloorlog{\intDistThreshold} - 1 - 2^{k} \geq a$  such that there does not exists any $k' > k$ and $\poweredfloorlog{d} - 1 - 2{k'} \geq a$. Then $[a, \poweredfloorlog{d})$ can be partitioned into the non-overlapping ranges $[a, \poweredfloorlog{d} -  2^{k})$ and $[\poweredfloorlog{d}-  2^{k}, 2^{\flooredLog{d}})$. Thus,

	\begin{center}
		\numOfMaximalSubtriesinTree{\trie{a}{2^{\flooredLog{d}}}} = \numOfMaximalSubtriesinTree{\trie{a}{\poweredfloorlog{d} -  2^{k}}} + \numOfMaximalSubtriesinTree{\trie{\poweredfloorlog{d}-  2^{k}}{2^{\flooredLog{d}}}}
	\end{center}
	
	We observe that \numOfMaximalSubtriesinTree{\trie{\poweredfloorlog{d}-  2^{k}}{2^{\flooredLog{d}}}} = 1. Then, 
	
	\begin{center}
		\numOfMaximalSubtriesinTree{\trie{a}{2^{\flooredLog{d}}}} = 1 + \numOfMaximalSubtriesinTree{\trie{a}{\poweredfloorlog{d} -  2^{k}}}
	\end{center} 
	
	Let $j = \poweredfloorlog{d} -a$. Then we observe that $\poweredfloorlog{d} - 2^{k} - a \leq j/2$ because otherwise there exists $k' = \flooredLog{j}> k$ such that $\poweredfloorlog{d} - 2^{\flooredLog{j}} -  a \leq j/2$. Every step of the recursion halves the range of integers and terminates in $\log (\poweredfloorlog{d})$ calls. Thus,  \numOfMaximalSubtriesinTree{\trie{a}{2^{\flooredLog{d}}}} = \bigO{\log d}. 
\end{proof}

\end{document}